\documentclass[12pt,reqno]{amsart}
\usepackage[utf8]{inputenc}
\usepackage{amsmath,amssymb,amsthm,hyperref,array,xcolor,mathtools,multicol,subcaption,verbatim,microtype}
\hypersetup{
pdftitle={Mechanism Design by a Politician}, 
pdfauthor={Giovanni Valvassori Bolgè}, 
pdfkeywords={Legislative Coalitions, Public Goods, Mechanism Design, Reservation Utility}, 
pdfcreator={LaTeX with hyperref package}, 
pdfproducer={LaTeX},
colorlinks=true,
linkcolor=[rgb]{0,0,0.6}, 
citecolor=[rgb]{0,0,0.6},  
urlcolor=[rgb]{0,0,0.6},   
}

\usepackage[normalem]{ulem}
\usepackage[pdftex]{graphicx}
\usepackage{fullpage}
\usepackage{cleveref}

\usepackage{tikz}
\usepackage{tikz-3dplot}
\usetikzlibrary{patterns,patterns.meta,arrows.meta, calc}

\usepackage{enumitem}
\setlist[itemize]{nosep}
\setlist[enumerate]{nosep}

\usepackage{natbib}

\newtheorem*{theorem*}{Theorem}
\newtheorem{theorem}{Theorem}
\newtheorem{lemma}{Lemma} 
 
\newtheorem{definition}{Definition} 
\newtheorem{corollary}{Corollary} 
\newtheorem{assumption}{Assumption} 
\newtheorem{proposition}{Proposition} 
 
\newtheorem{claim}{Claim}

\theoremstyle{remark}

\DeclareMathOperator*{\argmax}{arg\,max}
\DeclareMathOperator*{\argmin}{arg\,min}

\usepackage{accents}

\crefname{equation}{}{}

\begin{document}
	
\title{Mechanism Design by a Politician*}
\thanks{*I thank Berno B\"{u}chel, Laura Doval, Martin Hagen, Andrew Mackenzie, Roger B.  Myerson, Fikri Pitsuwan, Marek Pycia, Florian Scheuer, Yves Sprumont, Julian Teichgr\"{a}ber, as well participants at the Coalition Theory Network Workshop, SGVS Conference, SIOE Conference and the Conference on Mechanism and Institution Design.
}
	 
\author{Giovanni Valvassori Bolgè\textsuperscript{\textdagger}}\thanks{\textsuperscript{\textdagger}University of Fribourg, Department of Economics. \textit{Email: \href{mailto:giovanni.valvassoribolge@unifr.ch}{giovanni.valvassoribolge@unifr.ch}}}
	
\date{\today}

\begin{abstract} 

	A set of agents has to make a decision about the provision of a public good and its financing. Agents have heterogeneous values for the public good and each agent's value is private information. An agenda-setter has the right to make a proposal about a public-good level and a vector of contributions. For the proposal to be approved, only the favourable votes of a subset of agents are needed. If the proposal is not approved, a type-dependent outside option is implemented. I characterize the optimal public-good provision and the coalition-formation for any outside option in dominant strategies. Optimal public-good provision might be a non-monotonic function of the outside option public-good level. Moreover, the optimal coalition might be a non-convex set of types.
	
	\bigskip
	\noindent
	\textbf{JEL Codes:} D02, D82, H41
	
	\noindent
	\textbf{Keywords:} Legislative Coalitions, Public Goods, Mechanism Design, Reservation Utility
\end{abstract}

\maketitle

\newpage

\section{Introduction}

This paper revisits a classic mechanism design problem, the provision of a public good, by considering a political economy environment. On the one hand, I aim at contributing to the theory of mechanism design by providing a novel political economy perspective. On the other hand, I apply insights from the theory of screening to study foundational topics in political economy, namely public-good provision and coalition formation.

Political economy developed at the intersection of Economics and Political Science as an attempt to incorporate political dynamics into the analysis of economic policy. In the words of \cite{alesina94}:
\begin{quote}
    ``Political-economy models begin with the assertion that economic policy choices are not made by social planners, who live only in academic papers. Rather, economic policy is the result of political struggle within an institutional structure.''
\end{quote}
Furthermore, drawing on the developments in the theory of screening\footnote{For a recent and thorough overview, see \citet{börgers2015}.}, an alternative view of political economy has been put forth by the work of Jean-Jacques Laffont and coauthors.\footnote{For an extensive treatment, see \cite{laffont2001}.} The Constitution-politician agency problem is reminiscent of the principal-agent problem in the theory of contracts. As it was made clear in \cite{laffont2000b},
\begin{quote}
    ``[p]olitical economy is the recognition of the incompleteness of the Constitution [...]. Because the Constitution is an incomplete contract, politicians have a lot of discretion, that they use to further interest groups."
\end{quote}
In particular, the incompleteness of the Constitution entails the absence of adequate institutional tools to cope with asymmetric information. In fact, according to \cite{laffont2001}:
\begin{quote}
    ``... a major inefficiency of political conflicts follows from the inefficiency of redistributive instruments due to asymmetric information. [...] Politics is then a game of redistribution of information rents within the realm of discretion left by the Constitution."
\end{quote}
In this paper, I consider a public-good provision model with privately-informed, heterogeneous agents where the mechanism designer is one of the agents. In particular, this agent, whom I refer to as the `politician', is the agenda-setter in a collective choice problem with three main features.

First, the collective choice entails the provision of a nonrival, nonexcludable public good and its financing. Second,
the contributions borne by different agents are allowed to be different. Third, in order for the proposal to be implemented, the agenda-setter only needs the support of a subset of the agents (e.g., simple or supermajority).

If the proposal is not approved, a type-dependent outside option is implemented.
Drawing from insights in the theory of screening, I characterize optimal public-good provision and coalition formation for any outside option.
I focus on dominant-strategy implementation.

The main idea is a fundamental connection between monopolistic screening and agenda-setting. When there is a single agent, the two problems are identical. Like a monopolist, the agenda-setter has to elicit from the agent his/her value for the public-good and offer him/her a contract involving a quantity (in this case, a public-good level) in exchange for a transfer. When there are multiple agents, the two problems are still mathematically equivalent \textit{ex ante}. However, the \textit{ex post} constraints are fundamentally different.

In fact, whereas in monopolistic screening the quantity of the good traded is consumed privately by each agent, the public good is nonrival and nonexcludable; hence, once a public-good level has been chosen, all agents will consume it, regardless of the individual contributions to its financing.

Building on key contributions in monopolistic screening with type-dependent outside options (\cite{lewis1989}, \cite{maggi1995}, \cite{jullien2000}), I show that the shape of the reservation utility profile, namely the utility profile yielded by the outside option, crucially determines the optimal contract, and hence public-good provision and coalition formation.

Intuitively, for a given outside option, once a public-good level has been provided, some agents will have the \textit{ex post} incentive to understate their preference for the public good, whereas others will have the \textit{ex post} incentive to overstate it. If the agenda-setter can provide a public-good level which is consistent with all these constraints, then the public-good provision will differ from the outside option. Otherwise, the outside option will be implemented.

When the agenda-setter only needs a subset of favourable votes for the proposal to be implemented, some agents can be excluded from the winning coalition and therefore their participation constraints can be violated. If the incentive compatibility constraints are required to hold only for the members of the coalition, then it is optimal to randomly choose the agents and propose the mechanism to $q-1$ of them, while taxing the others to some upper bound $\Bar{\tau}$.

If the incentive compatibility constraints have to hold for all agents, including those left out of the winning coalition, then also this set of excluded agents will affect the optimal public-good provision, since those agents still have to be given the incentive to truthfully reveal their value for the public good.

Interestingly, the winning coalition varies significantly with the outside option and the shape of the reservation utility profile. When the outside option public-good level is sufficiently low (high), it is optimal to choose the agents with the `highest' (`lowest') value for the public good, \textit{regardless} of the shape of the reservation utility profile.

On the other hand, for intermediate values of the outside option public-good level, the winning coalition crucially depends on the concavity or convexity of the reservation utility profile. When it is concave, the winning coalition can be a \textit{non-convex} set of agents, comprising agents with high and low values for the public good, while excluding other agents with intermediate values. When the reservation utility profile is convex, the winning coalition is a convex set of agents, excluding either the agents with the highest, or lowest, value for the public good.

Optimal public-good provision is indeed the result of a `game of redistribution of information rents', as pointed out by \cite{laffont2001}.

The rest of the paper is structured as follows. \Cref{sec:lite} reviews the literature. \Cref{sec:model} introduces the model.  \Cref{sec:sb} lays out the analysis. \Cref{sec:linear} foreshadows the main results, which are then presented in \Cref{sec:main}. \Cref{sec:implications} characterizes the implications of the main results. \Cref{sec:relaxss} discusses the relaxations of the assumptions and \Cref{sec:conclusions} concludes. All proofs are in \Cref{sec:appendix}.

\section{Related Literature} \label{sec:lite}
This paper contributes to the literature on public-good provision in political economy and mechanism design.
In political economy, an established literature develops a positive theory of public-good provision in a legislative bargaining framework, as in \cite{baron1989}, \cite{battaglini2007}, \cite{battaglini2008} and \cite{bowen2014}. These contributions study the bargaining between heterogeneous players in a complete-information setting. In contrast to them, I consider a setting in which agents are privately informed about their valuations for the public good.

Within mechanism design, the contribution of this paper is twofold.  First, following the pioneering contribution of \cite{green1977}, a branch of the literature has studied the design of optimal mechanisms for the provision of a public good in dominant strategies.\footnote{For a comprehensive treatment, see \cite{green1979a}. For recent contributions, see \cite{kuzmics2017}, and the references therein.} These papers show that dominant-strategy mechanisms, combined with axioms such as anonimity (or symmetry), can only be posted-price mechanisms.\footnote{\cite{hagerty_robust_1987} consider a bliateral trade model with a single individible good; \cite{hagen_multidimensional_2021} extends their results to multiple goods and multiple units.} 

In this paper, the mechanism designer is one of the agents; hence, s/he is privately informed and seeks to maximize his/her utility, subject to the institutional constraints. In addition to this, the size of the project (the public-good level) is endogenous to the coalition-formation problem, and not fixed \textit{a priori}.

There is also a relatively small literature on monopolistic screening with agents having type-dependent outside options (\cite{lewis1989}, \cite{baron1982a}, \cite{maggi1995}, \cite{jullien2000}, \cite{figueroa_role_2009}). These models consider the optimal contract of a monopolist to a privately-informed agent who has an outside option, which also depends on his/her type.

As an interesting application of the same framework, \cite{teichgraber2022} consider the optimal design of short-time work schemes when firms have private information about their productivity and type-dependent outside options.

The present paper adopts a similar viewpoint, whereby the agenda-setter needs to propose a mechanism to the agents in order to maximize his/her utility. However, the mechanism designer is providing the agents with a nonrival, non-excludable public good, and not with a private good. Therefore, the \textit{ex post} constraints differ from the ones in those models.

At the intersection of political economy and mechanism design,
\cite{bierbrauer2016} consider coalition-incentive compatible mechanisms and show that they are voting mechanisms.
\cite{scheuer} study the optimal dynamic capital tax policy for a government with limited commitment power due to political coalition formation. In order for a policy to be credible, the government needs the support of a large enough share of the electorate, hence leading to an endogenous coalition formation. To the best of my knowledge, this is the first paper studying a coalition-formation problem in a mechanism design setting.

The closest contributions to the present paper are \cite{dragu2017} and \cite{pitsuwan2023}.

\cite{dragu2017} adopts a mechanism design approach to study which legislative majority coalitions will be formed in an environment where politicians are privately informed about their policy preferences. It is shown that the equilibrium policy coalitions will be either center left or center right, depending on whether the median party has a policy position to the left or right of the outside option.

The present paper departs from this analysis in two fundamental regards. First, the agenda-setter is one of the agents and s/he is able to control the resource constraint; i.e., in addition to a public-good level (or a specific policy), s/he can propose a vector of net contributions which can differ across agents, reflecting the information rents that have to be paid for them to report their type truthfully.

Second, whereas \cite{dragu2017} normalize the outside option to zero, I characterize the solution for any outside option. I show that, depending on the shape of the reservation utility profile and the outside option public-good level, very different coalitions can arise in equilibrium.

\subsection{Comparison with \cite{pitsuwan2023}} \label{sec:comparison}
\cite{pitsuwan2023} consider a very similar model in a complete information setting. The presence of private information implies several significantly different results.

First, the optimal choice of the winning coalition depends on the shape of the reservation utility profile and the outside option public-good level. In \cite{pitsuwan2023}, it is optimal to choose the highest or lowest coalition, depending on whether the outside option public-good level is lower or higher than a cutoff value. 

As shown in \Cref{sec:implications}, extreme coalitions are optimal when the outside option public-good level is sufficiently high or low, regardless of the shape of $\Bar{v}_i$. When it assumes intermediate values, however, optimal coalitions might be a non-convex set of types and exclude the highest or lowest types.

Second, in \cite{pitsuwan2023} public-good provision is a non-monotonic function of the outside option public-good level; moreover, it is never optimal to propose the outside option. As shown \Cref{sec:linear}, however, even when the reservation utility profile is the same as in \cite{pitsuwan2023}, the latter result is reversed.  

Third, lowering $q$ never decreases public-good provision. This is the exact opposite of \cite{pitsuwan2023}, whereby \textit{increasing} $q$ can never decrease public-good provision.

\section{Model} \label{sec:model}
There is a set $N=\{1, \dots, n\}$ of agents, with $n \geq 2$. They have to make a collective choice about the provision of a nonrival, nonexcludable public good and its financing. Agents are heterogeneous with respect to their value for the public good. For each agent $i \in N$, the value for the public good (or the `type') $\theta_i$ is private information. In particular, $\theta_i \in \Theta_i \equiv [\underline{\theta}, \overline{\theta}]$ has a cumulative distribution function $F_i$ with strictly positive density $f_i(\theta_i)>0$. Denote by $\theta$ the generic vector $(\theta_1, \theta_2,..., \theta_n)$, where $\theta \in \Theta \equiv [\underline{\theta}, \overline{\theta}]^{N}$. Moreover, denote $f(\theta)= \prod_{i \in N}f_i(\theta_i)$.\\
The agenda-setter, $a \in N$, makes a proposal consisting of a public-good level and a vector of contributions. Formally, the proposal is a $n+1$-tuple $(g,t_i)_{i \in N}$, where $g \in \mathbb{R}_+$ is the public-good level and $t_i \in \mathbb{R}$ is the contribution of agent $i$. The cost of producing the public good is linear: $c(g)=g$. The proposal has to satisfy a resource constraint
\begin{equation*}
    g \leq \sum_{i \in N} t_i \ . \\
\end{equation*}
In other words, the total contributions by the agents need to be at least as high as the cost of providing the public good. Given a policy proposal, agent $i$'s utility is given by
\begin{equation*}
    u_i(\theta_i, g,t_i)=\theta_i \phi(g)-t_i
\end{equation*}
with $\phi : \mathbb{R}_+ \longrightarrow \mathbb{R}_+$ being a twice-differentiable, non-decreasing and concave function. Moreover, it satisfies $\phi(0)=0$.\\
The agenda-setter needs the support of $q\leq n$ agents to have his/her policy implemented. If the proposal is not approved, then an outside option public-good level is implemented. This yields utility $\Bar{v}_i(\theta_i,g^\circ)$, where $\Bar{v}_i$ is non-decreasing in $g^\circ$, twice-differentiable and it satisfies $\Bar{v}_i(\cdot, 0)=0$. Importantly, $\Bar{v}_i(\theta_i, g^\circ)$ is type-dependent. Moreover, the utility levels from the agenda-setter's proposal may differ from those under the outside option. This generality is meant to capture some potential irreversibilities entrenched in the institutional framework.

For example, a set of local governments might want to provide a nation-wide public-good provision. In this case, the provision of the public good at the national level generates utility levels which are different under the outside option. This might be due to the presence of fixed costs or the impossibility of providing the same public-good level at the same cost with the local technology.\footnote{Caveat: different technologies might imply different utility function $\Bar{v}_i$.}

Alternatively, the outside option might be linked to the marginal benefit of the policy proposal in a nonlinear way. For example, some agents might benefit from the policy proposal the most because they would be the worst-off under the outside option.

Therefore, no restriction is placed on the shape of the reservation utility profile.
% We assume $\Bar{v}_i'(\theta_i, g^\circ)>0$. This will be relaxed in \Cref{sec:negslope}.
\medskip\\

\section{Second-best Mechanisms} \label{sec:sb}
We now consider second-best mechanisms.\footnote{\Cref{sec:fb} discusses first-best mechanisms.} By the Revelation Principle (\cite{myerson1979a}), we can restrict our attention to direct mechanisms without loss of generality.
\begin{definition} \label{def:direct}
A direct mechanism is a set of functions $g$ and $t_i$, \quad  $\forall i \in N$
\begin{align*}
    g &: \Theta \longrightarrow \mathbb{R}_+,\\
    t_i &: \Theta \longrightarrow \mathbb{R} \ .
\end{align*}
\end{definition}
Formally, there is the following sequence of events.

\begin{enumerate}
    \item An agenda-setter $a \in N$ has the right to make a proposal.
    \item Each agent $i \in N$ is privately informed about his/her own value for the public-good $\theta_i$.
    \item The agenda-setter asks each agent $i \in N$ to reveal his/her value for the public good and simultaneously makes a take-it-or-leave-it policy proposal consisting of a public-good level $g(\theta)$ and a vector of transfers to finance it, $t_i(\theta)$.
    \item The voting stage occurs secretly. If at least $q \leq n$ agents vote in favor of the policy proposal, then it is implemented. If not, an outside option public-good level is implemented, yielding utilities $\Bar{v}_i(\theta_i, g^\circ)$ to each agent $i \in N$.
\end{enumerate}
\bigskip
Denote
\begin{equation*}
    \mathcal{Q} \equiv \left\{Q \subseteq N: a \in Q \; \text{and} \; |Q| = q \right\}
\end{equation*}
as the set of all minimal winning coalitions that include $a$.
The agenda-setter solves the following problem
\begin{equation*}
\begin{gathered}
\max_{g(\theta), t(\theta), Q} \qquad \int \theta_a \phi(g(\theta)) -t_a(\theta) \ dF(\theta) \\
\text{subject to} \qquad \; g(\theta) \leq t_a(\theta) + \sum_{i \in N \setminus \{a\}}t_i(\theta),\\
\theta_i \phi(g(\theta)) -t_i(\theta) \geq \theta_i \phi(g(\theta'_i, \theta_{-i})) - t_i(\theta'_i, \theta_{-i}) \qquad \forall \theta_i, \theta'_i \in \Theta_i, \theta_{-i} \in \Theta_{-i} \ \text{and} \ i \in N \setminus \{a\} , \\
\text{and for some $Q \in \mathcal{Q}$}, \\
\theta_j \phi(g(\theta)) -t_j(\theta) \geq \Bar{v}_j(\theta_j, g^\circ) \qquad \forall \; j \in Q.
\end{gathered}
\end{equation*}
The first constraint is the \textit{resource constraint}: the net contributions of the agents have to be enough to cover the public-good expenditure.\\
The second is the \textit{incentive compatibility constraint}: every agent $i \in N \setminus \{a\}$ needs to have an incentive to reveal his/her true type truthfully to the agenda-setter.\\
The last constraint is the \textit{participation constraint}: the utility yielded by the proposed policy to the agents in the coalition needs to be at least as high as the one yielded by the outside option.

The participation constraint needs to be satisfied only for those $q-1$ agents included in the coalition; the remaining agents can be forced to participate. However, they have to be given the incentive to reveal their type truthfully; hence, the incentive compatibility constraint has to be satisfied for all agents.

This is consistent with alternative interpretations of the model. First, there might be political economy constraints preventing the agenda-setter to extract resources from agents who do not participate in the mechanisms.\footnote{By this, I do not mean the agents who participate but whose participation constraints are violated. Rather, they are agents who are not even asked by the agenda-setter to participate in the mechanism; e.g., this would be the case if the agenda-setter chose randomly $q-1$ agents and proposed them a direct mechanism.} Second, the agenda-setter might be thought of as a government placing any set of welfare weights to the agents. For example, an egalitarian government might place a unit weight on the agent(s) with the lowest type. In order to identify the type(s) of interest, the types of all the agents have to be elicited.\footnote{However, when the agenda-setter is indeed one of the agents seeking to maximize his/her expected utility, it might be reasonable to assume that the incentive compatibility constraint has to hold only for the members of the winning coalition. In \Cref{sec:relaxss}, it is shown that the agents in the winning coalition $Q$ will be chosen randomly.}\\
To begin with, a few assumptions are in order. They will be relaxed in \Cref{sec:relaxss}.
\begin{assumption} \label{assmpt:fc}
The agenda-setter has full commitment power.
\end{assumption}
This assumption rules out the possibility that the agenda-setter might modify the allocation rule after it has been announced and the preferences of the other agents have been revealed.
\begin{assumption}
\label{assumpt:regular}
The density functions $f_i(\theta_i)$, for every $i \in N$, are log-concave.
\end{assumption}
This assumption is standard in the literature and it ensures that the inverse of the hazard rate is increasing in types.
\begin{assumption}
\label{assumpt:homogen}
    The reservation utility profile is implementable.
\end{assumption}
This assumption is referred to as \textit{homogeneity} in \cite{jullien2000}. It requires that the reservation utility profile be implementable without excluding any agent. It implies that $\Bar{v_i}$ is non-decreasing in types. 
\medskip\\
Standard results in mechanism design theory (\cite{maggi1995}) yield the following lemma.
\begin{lemma} \label{lemma:ic}
A direct mechanism $(g,t_i)_{i \in N}$ is dominant-strategy incentive compatible if and only if for every $\theta_i \in \Theta_{i}$ and $\theta_{-i} \in \Theta_{-i}$:
\begin{itemize}
    \item[(i)] $g(\theta_i, \theta_{-i})$ is increasing in $\theta_i$;
    \item[(ii)] For every $\theta_i \in [\underline{\theta}, \overline{\theta}]$ we have
    \begin{equation} \label{eq:envelope}
        dU_i/d\theta_i=\phi(g(\theta))-\Bar{v}_i'(\theta_i,g^\circ) \ ,
    \end{equation}
    where $U_i(\theta_i)= \theta_i \phi(g(\theta)) - \Bar{v}_i(\theta_i, g^\circ)$.
\end{itemize}
\end{lemma}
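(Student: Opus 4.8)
The plan is to reduce the claim to the classical one-dimensional screening characterization. Since we seek \emph{dominant-strategy} incentive compatibility, I would fix the profile $\theta_{-i}$ of the other agents' reports throughout and treat $g(\cdot,\theta_{-i})$ and $t_i(\cdot,\theta_{-i})$ as functions of agent $i$'s report alone; truthful reporting must then be optimal for \emph{every} such $\theta_{-i}$, which is exactly the content of the displayed incentive constraint. The first observation is that the reservation utility $\Bar{v}_i(\theta_i,g^\circ)$ does \emph{not} enter the incentive constraint at all: the agent cannot influence his outside option by misreporting, so $\Bar{v}_i$ only shifts the level and slope of $U_i$ by a report-independent amount. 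Incentive compatibility is therefore governed entirely by the mechanism payoff $\theta_i\phi(g)-t_i$, which exhibits the single-crossing (Spence--Mirrlees) property: its cross-partial in $(\theta_i,g)$ is $\phi'(g)\ge 0$, so higher types have a (weakly) larger marginal value for the public good.

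For necessity, I would assume the mechanism is dominant-strategy incentive compatible and derive (i) and (ii) in the standard way. Writing the two incentive constraints that prevent a type $\theta_i$ from mimicking $\hat\theta_i$ and vice versa, and adding them, yields
\begin{equation*}
    (\hat\theta_i-\theta_i)\bigl[\phi(g(\hat\theta_i,\theta_{-i}))-\phi(g(\theta_i,\theta_{-i}))\bigr]\ge 0,
\end{equation*}
so $\phi(g(\cdot,\theta_{-i}))$, and hence $g(\cdot,\theta_{-i})$, is non-decreasing, which is (i). For (ii), I would introduce the indirect utility $V_i(\theta_i;\theta_{-i})=\max_{\theta_i'}\;\theta_i\phi(g(\theta_i',\theta_{-i}))-t_i(\theta_i',\theta_{-i})$. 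As an upper envelope of functions that are affine in $\theta_i$, $V_i$ is convex, hence locally Lipschitz and differentiable almost everywhere, and the envelope theorem gives $V_i'(\theta_i)=\phi(g(\theta))$ wherever the derivative exists. Since $U_i(\theta_i)=V_i(\theta_i;\theta_{-i})-\Bar{v}_i(\theta_i,g^\circ)$, differentiating produces $dU_i/d\theta_i=\phi(g(\theta))-\Bar{v}_i'(\theta_i,g^\circ)$, which is (ii).

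For sufficiency, I would start from (i)--(ii), recover the transfer from $U_i$ (equivalently, integrate the envelope condition to pin down $V_i$ up to a constant), and verify the incentive constraint directly. The key identity is that the loss from reporting $\theta_i'$ rather than the truth $\theta_i$ equals
\begin{equation*}
    V_i(\theta_i;\theta_{-i})-\bigl[\theta_i\phi(g(\theta_i',\theta_{-i}))-t_i(\theta_i',\theta_{-i})\bigr]=\int_{\theta_i'}^{\theta_i}\bigl[\phi(g(s,\theta_{-i}))-\phi(g(\theta_i',\theta_{-i}))\bigr]\,ds,
\end{equation*}
which is non-negative because monotonicity of $\phi\circ g$ from (i) makes the integrand and the direction of integration agree in sign. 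This establishes that no misreport is profitable, completing the equivalence.

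The proof is essentially the textbook characterization, so I do not expect a genuine obstacle; the points that require care are bookkeeping rather than conceptual. First, I would make explicit that the outside option drops out of the incentive problem and re-enters only through the definition of $U_i$, since otherwise the appearance of $\Bar{v}_i'$ in (ii) looks mysterious. Second, the step from ``$\phi(g)$ non-decreasing'' to ``$g$ non-decreasing'' needs $\phi$ to be (locally) strictly increasing; on any flat stretch of $\phi$ only $\phi\circ g$ is pinned down, and I would note that this is without loss for the allocation. Third, because $V_i$ need not be everywhere differentiable, I would rely on its convexity and absolute continuity to justify recovering it exactly as the integral of its derivative, so that the envelope condition (ii) holds in the integral sense used in the sufficiency argument.
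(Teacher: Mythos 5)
Your proof is correct and takes essentially the same route as the paper's: monotonicity is obtained from the paired incentive constraints, the envelope condition gives $dU_i/d\theta_i=\phi(g(\theta))-\Bar{v}_i'(\theta_i,g^\circ)$, and sufficiency follows by integrating that condition and using monotonicity of $\phi\circ g$, exactly as in the paper's sandwich-and-integrate argument. The only differences are cosmetic refinements on your side---justifying differentiability via convexity of the indirect utility rather than a pointwise limit, and making explicit that $\Bar{v}_i$ is report-independent and so enters only through the definition of $U_i$.
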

The first condition is the standard \textit{monotonicity constraint}, a well-known necessary condition for implementability of the allocation rule.

Condition $(\ref{eq:envelope})$ is an envelope condition which shows the two `countervailing incentives'\footnote{To the best of my knowledge, the term was coined by \cite{lewis1989} and later used by \cite{maggi1995}.} at play. On the one hand, agents have the incentive to understate their value for the public-good: since it is nonrival and nonexcludable, all agents will benefit from its consumption, independently of who contributes to its financing.

On the other hand, agents have the incentive to overstate their types so as to overstate their reservation utility, the utility yielded by the outside option. Depending on which effects dominates, agents will have the overall incentive to understate or overstate their value for the public good.

Condition $(\ref{eq:envelope})$ implies that, over the set of types for whom the dominating incentive is to understate, information rents will be \textit{increasing} in types. Analogously, when the dominating incentive is to overstate the types, information rents will be \textit{decreasing} over this set of types.

This implication of dominant-strategy incentive compatibility will play a major role in the subsequent analysis.

\section{Linear Reservation Utility Profile} \label{sec:linear}
In order to foreshadow the main results, we consider a particular functional form for the reservation utility profiles, namely the case in which $\Bar{v}_i(\cdot)$ is linear in types:
\begin{equation*}
 \Bar{v}_i(\theta_i, g^\circ)=\theta_i \phi(g^\circ)-\frac{g^\circ}{n}, \qquad  \forall i \in N.   
\end{equation*}
The outside option is given by a public-good level $g^\circ$, and a uniform array of transfers to finance it.\footnote{This is a common assumption in the political economy literature; see e.g., \cite{battaglini2008} and \cite{pitsuwan2023}.} Importantly, by condition (\ref{eq:envelope}), the incentive to understate or overstate the type is given by
\begin{equation} \label{eq:linearic}
        dU_i/d\theta_i=\phi(g(\theta))-\phi(g^\circ)  \ .
    \end{equation}
Hence, the sign of (\ref{eq:linearic}) is type-independent. The analysis is thus greatly simplified.

\subsection{Unanimity}
Suppose first that the institutional constraints require that the proposal be approved by unanimity; in other words, $q=n$.
The agenda-setter solves the following problem\\
\begin{equation*} 
\begin{gathered} \tag{$\mathcal{LU}$} \label{problem:lu}
\max_{g(\theta), \bold{t}(\theta)} \qquad \int \theta_a \phi(g(\theta)) -t_a(\theta) \ dF(\theta) \\
\text{subject to} \qquad \; g(\theta) \leq  t_a(\theta) + \sum_{i \in N\setminus \{a\}}t_i(\theta),\\
\frac{d g(\theta)}{d \theta} \geq 0  , \\
dU_i/d\theta_i=\phi(g(\theta))-\phi(g^\circ), \\
\theta_i \phi(g(\theta)) -t_i(\theta) \geq \theta_i \phi(g^\circ)-\frac{g^\circ}{n} \qquad \forall \; i \in N.
\end{gathered}
\end{equation*}
\medskip\\
Denote $g^*(\theta)$ as the solution to \ref{problem:lu}.
\begin{lemma} \label{lemma:ir}
    If $g^*(\theta) \neq g^\circ$, then the participation constraint can bind at most at one type.
\end{lemma}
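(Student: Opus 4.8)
The plan is to reduce everything to a single structural fact: in the linear case the slack in the participation constraint is a \emph{convex} function of the agent's own type. Let $U_i$ denote agent $i$'s participation-constraint slack as in \Cref{lemma:ic}, so that the constraint reads $U_i(\theta_i)\ge 0$, and by the linear specialization of the envelope condition, equation $(\ref{eq:linearic})$, its derivative is $U_i'(\theta_i)=\phi(g^*(\theta))-\phi(g^\circ)$. The monotonicity part of \Cref{lemma:ic} makes $g^*(\cdot,\theta_{-i})$ non-decreasing in $\theta_i$, and $\phi$ is non-decreasing; hence $U_i'$ is non-decreasing in $\theta_i$, i.e.\ $U_i$ is convex. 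This is the analytic expression of the ``countervailing incentives'': for low types the proposed level lies below $g^\circ$ and the slack is decreasing, for high types it lies above $g^\circ$ and the slack is increasing, so $U_i$ is U-shaped.

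Next I would combine convexity with non-negativity. Suppose, toward a contradiction, that the participation constraint binds at two distinct types $\theta_1<\theta_2$, i.e.\ $U_i(\theta_1)=U_i(\theta_2)=0$. Convexity gives $U_i(\lambda\theta_1+(1-\lambda)\theta_2)\le \lambda U_i(\theta_1)+(1-\lambda)U_i(\theta_2)=0$ for every $\lambda\in(0,1)$; together with $U_i\ge 0$ this forces $U_i\equiv 0$ on the whole interval $[\theta_1,\theta_2]$. Differentiating, $U_i'\equiv 0$ there, so $(\ref{eq:linearic})$ yields $\phi(g^*(\theta))=\phi(g^\circ)$, and since $\phi$ is increasing this means $g^*(\theta_i,\theta_{-i})=g^\circ$ for all $\theta_i\in[\theta_1,\theta_2]$.

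Finally I would close the argument by observing that a non-degenerate binding interval would force the mechanism to reproduce the outside-option level $g^\circ$ on a set of types of positive measure, contradicting the maintained hypothesis $g^*(\theta)\neq g^\circ$. Hence the binding set contains at most one type, which is the claim. The same bound holds for every fixed $\theta_{-i}$, and the boundary cases (binding only at $\underline{\theta}$ or only at $\overline{\theta}$, which occur when $g^*$ stays entirely below or above $g^\circ$) are already singletons and require no separate treatment.

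I expect the delicate point to be this last step: the precise reading of ``$g^*(\theta)\neq g^\circ$'' and the exclusion of the degenerate flat region. A clean contradiction requires either interpreting the hypothesis as ``the solution does not coincide with the outside option'' — so that $g^*=g^\circ$ on an interval is exactly the excluded regime — or invoking strict monotonicity of $g^*$ (equivalently strict convexity of $U_i$) off any bunching region, which rules out $U_i$ being flat at zero. I would also ensure $\phi$ is strictly increasing on the relevant range, so that $\phi(g^*)=\phi(g^\circ)$ genuinely implies $g^*=g^\circ$; if $\phi$ is only weakly increasing, this inference must be restricted to the region where $\phi'>0$.
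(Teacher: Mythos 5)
Your proof is correct, but it takes a genuinely different route from the paper's. The paper argues directly from the type-independence of the sign of $dU_i/d\theta_i=\phi(g^*(\theta))-\phi(g^\circ)$ in (\ref{eq:linearic}): if $g^*(\theta)>g^\circ$, every agent has the \emph{ex post} incentive to understate, so if the constraint bound at two types $\theta_l<\theta_k$, the higher type could profitably report $\theta_l$ (symmetrically, $\theta_l$ reports $\theta_k$ when $g^*(\theta)<g^\circ$), contradicting incentive compatibility --- in effect, the slack is strictly monotone in type and so has at most one zero. You instead use only the monotonicity of $g^*(\cdot,\theta_{-i})$ from \Cref{lemma:ic} to establish convexity of the slack $U_i$, then pinch a convex nonnegative function with two zeros to force $U_i\equiv 0$ on $[\theta_1,\theta_2]$, hence $g^*=g^\circ$ on that interval, which the hypothesis excludes. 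The trade-off: the paper's deviation argument is shorter and carries the economics (who misreports to whom), but it tacitly presupposes that $\phi(g^*(\theta))-\phi(g^\circ)$ has a uniform sign across the type space; your convexity route drops that presupposition and shows instead that any mixed-sign or flat configuration with two binding types collapses into the outside-option regime, thereby localizing exactly where the hypothesis $g^*(\theta)\neq g^\circ$ is needed. The caveats you flag are the right ones and match the paper's implicit conventions: ``$g^*(\theta)\neq g^\circ$'' is indeed read as ``the proposal is not the outside option,'' which is precisely the middle regime of \Cref{prop:unanimtd}; the nonnegativity of $U_i$ between the binding types is available because the participation constraint in \ref{problem:lu} holds for all agents under unanimity; and strict monotonicity of $\phi$ on the relevant range is needed to pass from $\phi(g^*)=\phi(g^\circ)$ to $g^*=g^\circ$ (the paper only states $\phi$ non-decreasing, so this qualification is a genuine, if minor, tightening of the paper's own argument).
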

\begin{proof}
\ Suppose not. Then, there exist two agents, $\theta_k$ and $\theta_l$, with $\theta_k > \theta_l$, whose participation constraints are binding. However, by condition (\ref{eq:linearic}) the incentive to understate or overstate the type is type-independent.
    
    But then, if $g^*(\theta)> g^\circ$, all agents have \textit{ex post} the incentive to understate their type. Analogously, if $g^*(\theta)< g^\circ$, all agents have \textit{ex post} the incentive to overstate their type. Therefore, in the first case $\theta_k$ could pay a lower contribution by reporting $\theta_l$; in the second case, $\theta_l$ could pay a lower contribution by reporting $\theta_k$. This contradicts the optimality of the participation constraints of both agents being binding.
\end{proof}
Then, from \Cref{lemma:ic} and \Cref{lemma:ir} we immediately have the following result.
\begin{proposition} \label{prop:unanimtd}
    There exist two threshold values, $g_U^L$ and $g_U^H$, such that:
    \begin{equation*}
g^*(\theta)=\begin{cases}
g_U^L&\mbox{if $g^\circ < g_U^L$,}\\
g^\circ &\mbox{if $g_U^L\leq g^\circ \leq g_U^H$,}\\
g_U^H&\mbox{if $g^\circ > g_U^H$.}
\end{cases}
\end{equation*}
where $g_U^L$ and $g_U^H$ solve, respectively,
\begin{align*}
    \left[\theta_a + \sum_{i \in N\setminus \{a\}} \left(\theta_i - \frac{1-F_i(\theta_i)}{f_i(\theta_i)} \right)\right]\phi'(g(\theta))&=1 \\
     \left[\theta_a + \sum_{i \in N\setminus \{a\}} \left(\theta_i + \frac{F_i(\theta_i)}{f_i(\theta_i)} \right)\right]\phi'(g(\theta))&=1 \ .
\end{align*}
\end{proposition}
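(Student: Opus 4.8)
The plan is to reduce problem \ref{problem:lu} to a pointwise maximization of a ``virtual surplus'' and then read off the optimal $g^*(\theta)$ by comparing the constant $g^\circ$ with two candidate peaks. First I would argue that the resource constraint binds at the optimum (any slack can be rebated to lower $t_a$), substitute $t_a(\theta)=g(\theta)-\sum_{i\neq a}t_i(\theta)$, and write each non-setter's transfer as $t_i=\theta_i\phi(g)-\Bar{v}_i(\theta_i,g^\circ)-W_i(\theta_i)$, where $W_i$ is the net rent over the outside option. By \Cref{lemma:ic} this rent obeys $dW_i/d\theta_i=\phi(g(\theta))-\phi(g^\circ)$ and the participation constraint is simply $W_i\geq 0$. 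The objective becomes
\begin{equation*}
\int\Big[\big(\textstyle\sum_{i\in N}\theta_i\big)\phi(g(\theta))-g(\theta)-\sum_{i\neq a}\Bar{v}_i(\theta_i,g^\circ)-\sum_{i\neq a}W_i(\theta_i)\Big]\,dF(\theta),
\end{equation*}
so that, modulo the $g$-independent reservation term, the setter maximizes total surplus net of expected rents.

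The key simplification comes from (\ref{eq:linearic}): since $g(\theta)$ is common to all agents, at every profile $\theta$ the sign of $\phi(g(\theta))-\phi(g^\circ)$ is the same for all non-setters. Hence at each $\theta$ either $g(\theta)>g^\circ$ (all rents increasing in type, so by \Cref{lemma:ir} the participation constraint binds at $\underline{\theta}$) or $g(\theta)<g^\circ$ (all rents decreasing, binding at $\overline{\theta}$), or $g(\theta)=g^\circ$ (flat rents). Setting the rent to zero at the binding type and integrating by parts turns the expected rent of agent $i$ into a pointwise term: $\tfrac{1-F_i}{f_i}$ enters with a minus sign in the understate region and $\tfrac{F_i}{f_i}$ with a plus sign in the overstate region. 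The pointwise objective is therefore the continuous, piecewise-concave function
\begin{equation*}
\Psi(g;\theta)=
\begin{cases}
\big[\theta_a+\sum_{i\neq a}\big(\theta_i+\tfrac{F_i(\theta_i)}{f_i(\theta_i)}\big)\big]\phi(g)-g & \text{if } g\leq g^\circ,\\[4pt]
\big[\theta_a+\sum_{i\neq a}\big(\theta_i-\tfrac{1-F_i(\theta_i)}{f_i(\theta_i)}\big)\big]\phi(g)-g & \text{if } g\geq g^\circ,
\end{cases}
\end{equation*}
up to a constant independent of $g$, with a kink at $g=g^\circ$ where the two branches agree.

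It remains to maximize $\Psi$ pointwise. Its left branch peaks at the $g_U^H$ solving the second displayed first-order condition and its right branch at the $g_U^L$ solving the first; since $\theta_i-\tfrac{1-F_i}{f_i}<\theta_i+\tfrac{F_i}{f_i}$ and $\phi'$ is decreasing, $g_U^L<g_U^H$. Three cases follow. If $g^\circ<g_U^L$ the right branch is still rising at $g^\circ$, so the global maximizer is $g_U^L$; if $g^\circ>g_U^H$ the left branch is still rising at $g^\circ$, so the maximizer is $g_U^H$; and if $g_U^L\leq g^\circ\leq g_U^H$ both branches are maximized at the kink, giving $g^*(\theta)=g^\circ$. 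Here $g_U^L$ and $g_U^H$ are understood as functions of $\theta$ compared against the constant $g^\circ$, which is exactly the claimed rule.

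Finally I would verify feasibility of the relaxed solution. Under \Cref{assumpt:regular} the virtual types are increasing, so $g_U^L(\theta)$ and $g_U^H(\theta)$ are non-decreasing in each $\theta_i$ and the intermediate region is flat at $g^\circ$; hence $g^*(\theta)$ is non-decreasing in own type and the monotonicity constraint of \Cref{lemma:ic} that I dropped is slack. Pinning the rent to zero at the binding type makes $W_i\geq 0$ throughout, so the participation constraints hold, and incentive compatibility holds by construction of the envelope. I expect the main obstacle to be the rent accounting of the second paragraph: one must justify the integration by parts when the zero-rent type is determined endogenously (and may sit in the interior, where $g^*=g^\circ$), and invoke \Cref{lemma:ir} to rule out rents binding at more than the single relevant type. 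Once $\Psi$ is correctly derived, the remaining case analysis is routine.
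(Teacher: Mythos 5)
Your proof is correct in substance, and its ingredients coincide with the paper's (binding resource constraint, envelope-based rent accounting, Fubini-type change of the order of integration yielding the hazard-rate virtual surplus, \Cref{assumpt:regular} to clear the monotonicity constraint), but the packaging is genuinely different. The paper argues case by case: it assumes $dU_i/d\theta_i>0$ globally and derives the first first-order condition with the $-\frac{1-F_i}{f_i}$ correction, assumes $dU_i/d\theta_i<0$ globally and derives the second with the $+\frac{F_i}{f_i}$ correction, and then disposes of the intermediate region $g_U^L<g^\circ<g_U^H$ by a separate countervailing-incentives argument on a three-agent example, together with \Cref{lemma:ir}. You instead assemble a single continuous, piecewise-concave pointwise objective $\Psi(g;\theta)$ whose slope coefficient drops from $\theta_a+\sum_{i\neq a}(\theta_i+\tfrac{F_i}{f_i})$ to $\theta_a+\sum_{i\neq a}(\theta_i-\tfrac{1-F_i}{f_i})$ as $g$ crosses $g^\circ$, so that all three regimes --- including the optimality of $g^*=g^\circ$ in the middle --- fall out of one concave maximization with a kink. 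This buys a cleaner and arguably more complete treatment of the intermediate case than the paper's Case III sketch, and it reads the proposition correctly pointwise in $\theta$ (the thresholds are $\theta$-dependent, which your median-of-three logic makes transparent). What it costs is exactly the obstacle you flag: the branch-dependent rent accounting presumes the zero-rent type (bottom versus top) is pinned down by the sign of $\phi(g(\theta))-\phi(g^\circ)$, which is endogenous; the paper is equally informal on this point and only resolves it later, via the Jullien multiplier $\gamma^*$, in \Cref{thrm:u}.

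Two slips, neither fatal. First, your displayed $\Psi$ discards branch-specific constants that are needed for the claim that ``the two branches agree'' at the kink: the rent terms are proportional to $\phi(g)-\phi(g^\circ)$, so the left branch must carry $-\sum_{i\neq a}\tfrac{F_i(\theta_i)}{f_i(\theta_i)}\phi(g^\circ)$ and the right branch $+\sum_{i\neq a}\tfrac{1-F_i(\theta_i)}{f_i(\theta_i)}\phi(g^\circ)$; with these offsets retained, both branches equal $\bigl(\theta_a+\sum_{i\neq a}\theta_i\bigr)\phi(g^\circ)-g^\circ$ at $g=g^\circ$ (rents vanish there), whereas as written they do not agree, and continuity at the kink is what your global-concavity case analysis relies on. Second, when $g^\circ>g_U^H$ the left branch is \emph{falling}, not ``still rising,'' at $g^\circ$ --- it peaked at $g_U^H<g^\circ$ --- which is precisely why the maximizer is the interior peak $g_U^H$; your conclusion is right, the justification as stated is inverted.
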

\Cref{fig:linearunanimity} illustrates the solution.
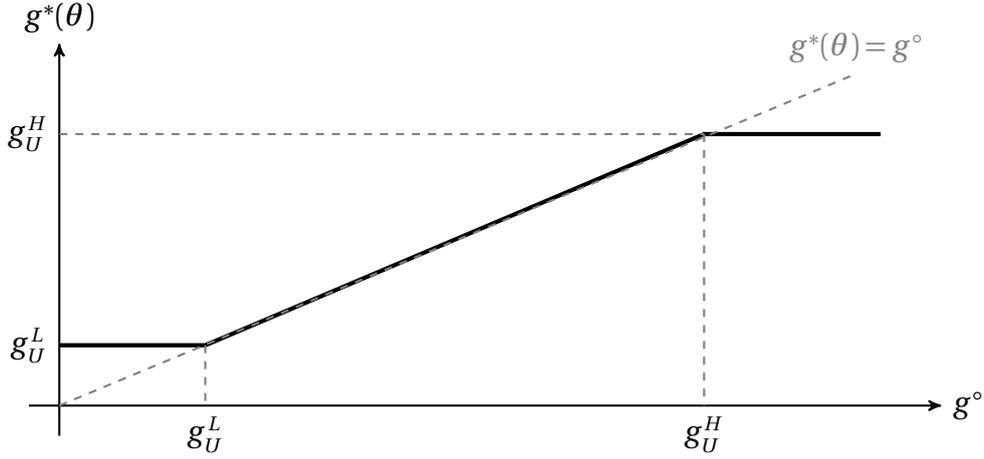
\begin{figure}[ht]
\centering
    \begin{tikzpicture}[-,>=stealth',auto,node distance=1.2cm,
  thick,main node/.style={transform shape, circle,draw,font=\rmfamily, minimum width=5pt}, scale=0.8]

    %Axis
    \draw[->] (-5,0) -- (10,0) node[right]{$g^\circ$};
    \draw[->] (-4.5,-0.5) -- (-4.5,6) node[above]{$g^*(\theta)$};

    %Plot
  \draw[ultra thick] (-4.5,1)node[left]{$g_U^L$} -- (-2.1,1) -- (6.1,4.5) -- (9,4.5);
   %labels
   \draw (-2.1,0)node[below]{$g_U^L$};
   \draw (6.1,0)node[below]{$g_U^H$};
   \draw (-4.5,4.5)node[left]{$g_U^H$};
    %Dashed
    \draw[dashed, color=gray] (-4.5,0) -- (8.6,5.5) node[above]{$g^*(\theta) = g^\circ$}; 
    \draw[dashed, color=gray] (-2.1,1)--(-2.1,0);
    \draw[dashed, color=gray] (6.1,4.5)--(6.1,0);
    \draw[dashed, color=gray] (-4.5,4.5)--(6.1,4.5);
\end{tikzpicture}
\caption{$g^*(\theta)$ in the linear case with unanimity.}
\label{fig:linearunanimity}
\end{figure}\\
The intuition behind the result is very simple. The incentive to understate or overstate the type is \textit{ex post} type-independent: once a public-good level $g^*(\theta)$ has been provided, all agents will have the \textit{ex post} incentive to either understate or overstate their type, depending on the sign of (\ref{eq:linearic}). 

If the outside option public-good level is sufficiently low $(g^\circ \leq g_U^L)$, all agents will have the incentive to understate their type. If the outside option public-good level is sufficiently high $(g^\circ \geq g_U^H)$, all agents will have the incentive to overstate their type.

Finally, if the outside option public-good level assumes intermediate values $(g_U^L \leq g^\circ \leq g_U^H)$, then some agents will have the incentive to understate, whereas some other agents will have the incentive to overstate. The two conflicting incentives cannot be consistent \textit{ex post} with (\ref{eq:linearic}). Hence, the agenda-setter can only propose the outside-option public-good level.

\subsection{Majority}
Suppose now that $q \leq n$.
The agenda-setter solves
\begin{equation} \tag{$\mathcal{LM}$} \label{problem:lm}
\begin{gathered}
\max_{g(\theta), \bold{t}(\theta), Q} \qquad \int \theta_a \phi(g(\theta)) -t_a(\theta) \ dF(\theta)\\
\text{subject to} \qquad \; g(\theta) \leq t_a(\theta) + \sum_{i \in N \setminus \{a\}}t_i(\theta),\\
\frac{d g(\theta)}{d \theta} \geq 0 ,\\
dU_i/d\theta_i=\phi(g(\theta))-\Bar{v}_i'(\theta_i,g^\circ) \qquad \forall \; i \in N \setminus \{a\}, \\
\text{and for some $Q \in \mathcal{Q}$}, \\
\theta_j \phi(g(\theta)) -t_j(\theta) \geq \theta_j \phi(g^\circ)-\frac{g^\circ}{n} \qquad \forall \; j \in Q.
\end{gathered}
\end{equation}
\medskip\\
The first three constraints are the same as in the unanimity problem. However, in addition to $g(\theta)$ and $\bold{t}(\theta)$, the agenda-setter now also needs to choose a coalition $Q \in \mathcal{Q}$.\\
For any $Q \in \mathcal{Q}$, let $\Xi(Q) = \sum_{i \in Q} \theta_i$ be $Q$'s value for the public good. Define
\begin{equation*} 
\underline{Q} \equiv \argmin_{Q \in \mathcal{Q}} \Xi(Q) \quad \text{and} \quad \overline{Q} \equiv \argmax_{Q \in \mathcal{Q}} \Xi(Q),
\end{equation*}
to be, respectively, the `lowest' and `highest' coalitions as measured by their value for the public good.
Along the same lines as in the unanimity case, we obtain the following

\begin{proposition} \label{prop:majoritd}
There exist a set of agents $I$, with $|I| \leq n-q$, two cutoff types, $\Tilde{\theta}$, $\Tilde{\Tilde{\theta}} \in [\underline{\theta}, \overline{\theta}]$ and two threshold public-good values, $g_M^L$, $g_M^H \in \mathbb{R}_+$, such that, if $g_M^L < g_M^H$, 
\begin{equation*}
g^*(\theta)=\begin{cases}
g_M^L&\mbox{if $g^\circ < g_M^L$,}\\
g^\circ &\mbox{if $g_M^L\leq g^\circ \leq g_M^H$,}\\
g_M^H&\mbox{if $g^\circ > g_M^H$,}
\end{cases}
\end{equation*}
where $g_M^L$ and $g_M^H$ solve, respectively, 
\begin{align*}
    \left[ \theta_a + |I| \cdot \Tilde{\theta} + \sum_{i \in \overline{Q}\setminus \{a\}} \left(\theta_i - \frac{1-F_i(\theta_i)}{f_i(\theta_i)} \right)\right]\phi'(g(\theta)) &= 1 \ ,\\
   \left[ \theta_a + |I| \cdot \Tilde{\Tilde{\theta}} + \sum_{i \in \underline{Q}\setminus \{a\}} \left(\theta_i + \frac{F_i(\theta_i)}{f_i(\theta_i)} \right)\right]\phi'(g(\theta)) &= 1 \ .
\end{align*}
Moreover,
\begin{equation*}
Q^*=\begin{cases}
\overline{Q}&\mbox{if $g^\circ < g_M^L$,}\\
\textrm{any}  \ Q \in \mathcal{Q} &\mbox{if $g_M^L\leq g^\circ \leq g_M^H$,}\\
\underline{Q}&\mbox{if $g^\circ > g_M^H$.}
\end{cases}
\end{equation*}
\bigskip\\
If $g_M^L>g_M^H$ for some $q\leq n$, then 
\begin{equation*}
g^*(\theta)=\begin{cases}
g_M^L&\mbox{if $g^\circ < g_M^L$,}\\
g_M^H&\mbox{if $g^\circ > g_M^L$.}
\end{cases}
\end{equation*}
Moreover,
\begin{equation*}
Q^*=\begin{cases}
\overline{Q}&\mbox{if $g^\circ < g_M^L$,}\\
\underline{Q}&\mbox{if $g^\circ > g_M^L$}.
\end{cases}
\end{equation*}
\end{proposition}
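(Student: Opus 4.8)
The plan is to solve the program~\eqref{problem:lm} by nested optimisation: for each candidate coalition $Q\in\mathcal{Q}$ I would first solve the inner problem in $(g,\mathbf{t})$ holding $Q$ fixed, and then optimise over $Q$. For fixed $Q$ the inner problem is structurally the unanimity problem~\eqref{problem:lu} behind \Cref{prop:unanimtd}; the only differences are that the participation constraint is imposed only on the $q-1$ members of $Q\setminus\{a\}$, while each excluded agent is instead subject to the taxation bound $\bar\tau$. As in the unanimity case, the sign of $dU_i/d\theta_i=\phi(g)-\phi(g^\circ)$ is type-independent, so the analysis splits into an \emph{understatement} regime ($g>g^\circ$) and an \emph{overstatement} regime ($g<g^\circ$); \Cref{lemma:ir} ensures that within each regime a member's participation constraint binds only at one endpoint of $[\underline{\theta},\overline{\theta}]$.

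In the understatement regime each agent's rent is increasing in type, so a member's participation constraint binds at $\underline{\theta}$; substituting transfers through the envelope characterisation of \Cref{lemma:ic} and integrating by parts converts each member's expected contribution into the virtual valuation $\theta_i-\frac{1-F_i(\theta_i)}{f_i(\theta_i)}$, which \Cref{assumpt:regular} renders increasing in $\theta_i$. An excluded agent has no reservation constraint, so the agenda-setter lowers the rent until $t_i\le\bar\tau$ binds; because rents are monotone this bound binds from one side and induces a cutoff type $\tilde{\theta}$, so that the excluded agent's marginal effect on the public good collapses to the constant $\tilde{\theta}$ rather than a type-varying term. Maximising the pointwise virtual surplus in $g$ then yields the first-order condition defining $g_M^L$. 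The overstatement regime is symmetric: members' constraints bind at $\overline{\theta}$, the relevant virtual valuation is $\theta_i+\frac{F_i(\theta_i)}{f_i(\theta_i)}$, the excluded agents deliver a cutoff $\tilde{\tilde{\theta}}$, and the analogous condition defines $g_M^H$.

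The core of the argument is the choice of $Q$. Writing $w=\sum_{i\in Q\setminus\{a\}}\left(\theta_i-\frac{1-F_i(\theta_i)}{f_i(\theta_i)}\right)$ for the coalition's aggregate virtual valuation, the members enter the agenda-setter's objective only through the term $(\phi(g)-\phi(g^\circ))\,w$, the $\phi(g^\circ)$ piece being the reservation value left by the binding participation constraint; the excluded agents instead enter as $\phi(g)\,|I|\,\tilde{\theta}$, with no $\phi(g^\circ)$ offset. Because every excluded agent contributes the same constant $\tilde{\theta}$, the excluded total $|I|\tilde{\theta}$ is invariant to which agents of a fixed-size coalition are left out, so the selection of $Q$ reduces to optimising $w$. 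Letting $M(w)$ be the value of the inner $g$-problem, the envelope theorem gives $\frac{dM}{dw}=\phi(g^*)-\phi(g^\circ)$, which is positive in the understatement regime and negative in the overstatement regime. Hence the agenda-setter \emph{maximises} $w$ when $g^*>g^\circ$ and \emph{minimises} it when $g^*<g^\circ$; since virtual valuations are increasing in type, this selects the highest-value coalition $\overline{Q}$ in the former case and the lowest-value coalition $\underline{Q}$ in the latter. The same computation with $\theta_i+\frac{F_i}{f_i}$ governs the overstatement regime.

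It remains to glue the regimes through the outside option. The understatement solution $g_M^L$ is admissible exactly when $g_M^L>g^\circ$ and the overstatement solution $g_M^H$ exactly when $g_M^H<g^\circ$. If $g_M^L<g_M^H$, then for $g^\circ\in(g_M^L,g_M^H)$ neither regime is self-consistent — the two countervailing incentives cannot be reconciled, just as in \Cref{prop:unanimtd} — so the agenda-setter can only offer $g^*=g^\circ$, yielding the three-part schedule. If $g_M^L>g_M^H$, both regimes are feasible on the overlap $(g_M^H,g_M^L)$; here I would compare the two candidate values and show that the understatement solution $(g_M^L,\overline{Q})$ dominates up to $g^\circ=g_M^L$, where admissibility of the understatement regime fails and the agenda-setter switches to $(g_M^H,\underline{Q})$, giving the two-part schedule. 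The main obstacle is the excluded-agents block: showing rigorously that the taxation bound binds from one side, that the induced cutoff collapses each excluded agent's contribution to a single constant (so that $|I|\le n-q$ and the coalition-invariance used above holds), and verifying the monotonicity requirement of \Cref{lemma:ic} for the allocation glued across regimes and coalitions. The value comparison in the degenerate case, together with the monotonicity of $g_M^L$ and $g_M^H$ in the realised profile inherited from \Cref{assumpt:regular}, are the remaining technical points.
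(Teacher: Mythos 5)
Your treatment of the excluded agents rests on a taxation bound $\bar{\tau}$ that does not exist in Problem \ref{problem:lm}. In the majority problem, incentive compatibility is imposed on \emph{all} agents, members and non-members alike; the only constraint the excluded agents lose is participation, and there is no exogenous cap on their transfers. The bound $\bar{\tau}$ belongs to Problem \ref{sproblem} in \Cref{sec:stochastic}, where IC holds only for coalition members --- and there the paper shows the coalition is chosen \emph{randomly}, which is inconsistent with the $\overline{Q}/\underline{Q}$ characterization you are trying to establish. In the paper's proof the cutoff $\Tilde{\theta}$ arises endogenously from dominant-strategy IC: the allocation is flat below the lowest type whose participation constraint binds, so IC forces $t(\theta_i)=t(\Tilde{\theta})$ for every $\theta_i \leq \Tilde{\theta}$ (any larger transfer would induce those types to report $\Tilde{\theta}$), and the modified Fubini computation then shows each excluded agent contributes exactly $\Tilde{\theta}\,\phi(g)$ to the virtual surplus. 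That bunching argument, not a binding tax ceiling, is the source of the $|I|\cdot\Tilde{\theta}$ term, and your own list of ``remaining technical points'' (that the bound binds from one side, that the contribution collapses to a constant) is precisely where your construction would fail, because nothing in \ref{problem:lm} makes $\bar{\tau}$ bind.

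The same misstep undermines your coalition-selection step. Since rents are increasing in the understatement regime and participation, once satisfied at $\theta_1$, is satisfied at every higher type, the excluded set must be a \emph{lower tail} of the type order and $\Tilde{\theta}$ must be the lowest type still inside the coalition (the paper's \Cref{lemma:coalition}); this pins $Q^*=\overline{Q}$ directly, and it shows your invariance claim --- that $|I|\Tilde{\theta}$ does not depend on which agents are left out --- is false here, because changing the coalition changes the lowest included type and hence $\Tilde{\theta}$ itself. Your envelope computation $dM/dw=\phi(g^*)-\phi(g^\circ)$ is a reasonable way to rank coalitions \emph{after} the tail structure of exclusion has been established, but it cannot replace that argument, which is what rules out excluding interior types in the first place. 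Finally, you do not address why $|I|\leq n-q$ can be slack: the paper's second lemma in the proof shows that bunching additional agents at $\Tilde{\theta}$ raises $g^*$ only until the efficient level (\ref{eq:fb}) is reached, after which it is optimal to keep the same cutoff, so $|I|$ is limited by efficiency considerations rather than by the coalition-size bound alone.
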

\bigskip
\Cref{fig:linearmajority-1} and \Cref{fig:linearmajority-2} illustrate the solution. 

\begin{figure}[ht]
\centering
    \begin{tikzpicture}[-,>=stealth',auto,node distance=1.2cm,
  thick,main node/.style={transform shape, circle,draw,font=\rmfamily, minimum width=5pt}, scale=0.8]

    %Axis
    \draw[->] (-5,0) -- (10,0) node[right]{$g^\circ$};
    \draw[->] (-4.5,-0.5) -- (-4.5,6) node[above]{$g^*(\theta)$};

    %Plot
  \draw[ultra thick] (-4.5,1.5)-- (-1,1.5) -- (5,4) -- (9,4);
   %labels
   \draw(-4.5,1.5)node[left]{$g_M^L$};
   \draw (-1,0)node[below]{$g_M^L$};
   \draw (5,0)node[below]{$g_M^H$};
   \draw (-4.5,4.5)node[left]{$g_M^H$};
    %Dashed
    \draw[dashed, color=gray] (-4.5,0) -- (8.6,5.5) node[above]{$g^*(\theta) = g^\circ$}; 
    \draw[dashed, color=gray] (-1,1.5)--(-1,0);
    \draw[dashed, color=gray] (5,4)--(5,0);
    \draw[dashed, color=gray] (-4.5,4)--(5,4);
\end{tikzpicture}
\caption{$g^*(\theta)$ in the linear case with majority ($g_M^L<g_M^H$).}
\label{fig:linearmajority-1}
\end{figure}

\bigskip

\begin{figure}[ht]
\centering
    \begin{tikzpicture}[-,>=stealth',auto,node distance=1.2cm,
  thick,main node/.style={transform shape, circle,draw,font=\rmfamily, minimum width=5pt}, scale=0.8]

    %Axis
    \draw[->] (-5,0) -- (10,0) node[right]{$g^\circ$};
    \draw[->] (-4.5,-0.5) -- (-4.5,6) node[above]{$g^*(\theta)$};

    %Plot
  \draw[ultra thick] (-4.5,4) -- (5,4);
  \draw[ultra thick] (-0.9,1.5) -- (8,1.5);
   %labels
   \draw (-4.5,1.5)node[left]{$g_M^H$};
   \draw (-1,0)node[below]{$g_M^H$};
   \draw (5,0)node[below]{$g_M^L$};
   \draw (-4.5,4)node[left]{$g_M^L$};
    %Dashed
    \draw[dashed, color=gray] (-4.5,0) -- (8.6,5.5) node[above]{$g^*(\theta) = g^\circ$}; 
    \draw[dashed, color=gray] (-4.5,1.5)--(2,1.5);
    \draw[dashed, color=gray] (5,4)--(5,0);
    \draw[dashed, color=gray] (-0.9,1.5) -- (-0.9,0);
    \draw[dashed, color=gray] (-4.5,4)--(5,4);
\end{tikzpicture}
\caption{$g^*(\theta)$ in the linear case with majority ($g_M^L>g_M^H$).}
\label{fig:linearmajority-2}
\end{figure}
% \begin{remark}
% The choice of the coalition when $g_M^L>g_M^H$ is the same as in \cite{pitsuwan2023}.
% \end{remark}
The reasoning is similar to the unanimity case. Notice, however, that the thresholds $g_M^L$ and $g_M^H$ are different.

In fact, unlike the unanimity case, the agenda-setter can now exclude some agents from the winning coalition; i.e., some agents can be forced to participate in the collective choice. However, they have to be given the incentive to truthfully report their types. Therefore, they will also affect public-good provision.
The intuition behind \Cref{prop:majoritd} is again related to (\ref{eq:linearic}).

If $g^*(\theta) > g^\circ$, then all agents have the \textit{ex post} incentive to understate their types. Moreover, information rents have to be increasing in types.

Intuitively, when the outside option is low, incentive constraints bind downwards and the participation constraint binds on some type $\Tilde{\theta}$. All agents with higher types must earn information rents; however, the types below can be bunched at the same allocation of $\Tilde{\theta}$, thus violating their participation constraints. It is then optimal to choose $\Tilde{\theta}$ in such a way that the highest $q-1$ agents earn information rents. An analogous reasoning goes through for the case where $g^\circ$ is high.

Interestingly, when for some $q$ it is the case that $g_M^L > g_M^H$, public-good provision becomes non-monotonic in the outside option public-good level. This is the same result obtained in \cite{pitsuwan2023}.

The results in this section come as no surprise. When the reservation utility profile is linear in types, condition (\ref{eq:linearic}) states that the agents' incentive to misreport their types is type-independent. Hence, the agenda-setter is able to implement a collective choice which is different from the outside option only when the incentives are aligned for all agents. This is the case when the outside option public-good level is sufficiently low or sufficiently high. 

On the other hand, when the outside option public-good level assumes intermediate values, the only collective choice which can be implemented in dominant strategies is proposing the outside option.

The same reasoning applies when the agenda-setter only needs the support of a subset of agents. Dominant strategy incentive compatibility severely restricts the ability of the agenda-setter to `exploit' the agents excluded from the winning coalition.

\section{Main Results} \label{sec:main}
In this section, I consider a general reservation utility profile $\Bar{v}_i$. Before stating the two main results of this paper, some preliminary definitions and results are in order.

A key insight from \cite{jullien2000} is that the shadow value on the participation constraints at the optimal allocation can be represented by $\gamma^*(\theta_i)$, which is a cumulative distribution function over the set of types. The agenda-setter surplus can be rewritten so as to account for the information rents brought about by incentive compatibility.
\begin{lemma} \label{lemma:objfct}
    The agenda-setter's objective function can be rewritten as
    \begin{equation} \label{eq:objfct}
   \sigma(\theta, \gamma(\theta), g) \equiv   \int_{\Theta} \left[\sum_{i \in N} \theta_i\phi(g(\theta)) - \sum_{i \in N\setminus \{a\}} \left(\frac{\gamma^*(\theta_i)-F_i(\theta_i)}{f_i(\theta_i)} \right)\phi(g(\theta)) - g(\theta)\right] f(\theta) d\theta   \ .
    \end{equation}
\end{lemma}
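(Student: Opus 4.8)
The plan is to turn the constrained program into an unconstrained maximization of a single ``virtual surplus'' in the allocation rule $g(\theta)$, by eliminating every transfer. First I would argue that the resource constraint binds pointwise. Since the agenda-setter's objective $\int[\theta_a\phi(g(\theta))-t_a(\theta)]\,dF(\theta)$ is strictly decreasing in $t_a$, and $t_a$ enters the program only through $g(\theta)\le t_a(\theta)+\sum_{i\neq a}t_i(\theta)$, optimality forces $t_a(\theta)=g(\theta)-\sum_{i\in N\setminus\{a\}}t_i(\theta)$ for almost every $\theta$. Substituting this collapses the objective to
\begin{equation*}
\int_{\Theta}\Big[\theta_a\phi(g(\theta))-g(\theta)+\sum_{i\in N\setminus\{a\}}t_i(\theta)\Big]\,dF(\theta).
\end{equation*}

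Next I would eliminate the remaining transfers through incentive compatibility. Writing $U_i$ for agent $i$'s rent net of the outside option, so that $U_i(\theta_i)=\theta_i\phi(g(\theta))-t_i(\theta)-\bar{v}_i(\theta_i,g^\circ)$ is the object whose envelope derivative is recorded in \Cref{lemma:ic}, I can solve $t_i(\theta)=\theta_i\phi(g(\theta))-\bar{v}_i(\theta_i,g^\circ)-U_i(\theta_i)$. Inserting this yields
\begin{equation*}
\int_{\Theta}\Big[\sum_{i\in N}\theta_i\phi(g(\theta))-g(\theta)-\sum_{i\in N\setminus\{a\}}\big(U_i(\theta_i)+\bar{v}_i(\theta_i,g^\circ)\big)\Big]\,dF(\theta).
\end{equation*}
The terms $\bar{v}_i(\theta_i,g^\circ)$ depend only on the fixed outside option and not on $g(\theta)$, so they are additive constants that do not affect the maximization and may be absorbed; what remains is to convert the expected rents $\int_\Theta\sum_{i\neq a}U_i\,dF$ into the virtual-valuation form.

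This last conversion is the crux. For each $i\neq a$ I would integrate the inner $\theta_i$-integral of $\int U_i(\theta_i)f_i(\theta_i)\,d\theta_i$ by parts, using the envelope identity $U_i'(\theta_i)=\phi(g(\theta))-\bar{v}_i'(\theta_i,g^\circ)$ from \Cref{lemma:ic} to trade $U_i'$ for a $g$-dependent expression. The difficulty is that, under countervailing incentives, the participation constraint $U_i\ge 0$ need not bind at an endpoint, so I cannot anchor $U_i(\underline{\theta})=0$ and integrate forward. Here I would invoke the representation of \cite{jullien2000}: the normalized cumulative shadow value attached to the participation constraints at the optimum is a cumulative distribution function $\gamma^*(\theta_i)$ over types, and carrying the anchoring data of the integration by parts through this measure makes the weight multiplying $U_i'$ equal to $\gamma^*(\theta_i)-F_i(\theta_i)$ rather than the usual $1-F_i$ or $-F_i$. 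Dividing through by the density and collecting the $g$-independent pieces into the constant gives exactly
\begin{equation*}
\sigma(\theta,\gamma(\theta),g)=\int_{\Theta}\Big[\sum_{i\in N}\theta_i\phi(g(\theta))-\sum_{i\in N\setminus\{a\}}\frac{\gamma^*(\theta_i)-F_i(\theta_i)}{f_i(\theta_i)}\,\phi(g(\theta))-g(\theta)\Big]f(\theta)\,d\theta.
\end{equation*}

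The main obstacle is justifying the $\gamma^*$-representation and that $\gamma^*$ is a genuine cumulative distribution function: one must check that complementary slackness pins down a nonnegative shadow-value measure, that the boundary terms from the integration by parts are exactly captured by the endpoints $\gamma^*(\underline{\theta})$ and $\gamma^*(\overline{\theta})$, and that the monotonicity requirement (i) of \Cref{lemma:ic} is consistent with the resulting allocation. As a consistency check I would confirm the two extreme cases reproduce the linear-profile formulas of \Cref{prop:unanimtd}: $\gamma^*\equiv 1$ (all shadow weight at $\underline{\theta}$) recovers the downward-distortion term $-\frac{1-F_i}{f_i}$, while $\gamma^*\equiv 0$ (all weight at $\overline{\theta}$) recovers the upward-distortion term $+\frac{F_i}{f_i}$.
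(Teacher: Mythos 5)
Your proposal is correct and takes essentially the same route as the paper's proof: eliminate $t_a$ through the pointwise-binding resource constraint, substitute the remaining transfers via the envelope identity of \Cref{lemma:ic}, and convert the expected rents into virtual-valuation form by integration by parts against the cumulative shadow-value distribution $\gamma^*$ of \cite{jullien2000}, absorbing the $g$-independent reservation terms. The ``main obstacle'' you flag is resolved in the paper exactly as you anticipate --- a brief Jullien-style argument that $\gamma^*$ is nonnegative and nondecreasing (extending the set of relaxed constraints) with $\gamma^*(\overline{\theta})=1$ (a uniform relaxation) --- and your endpoint checks $\gamma^*\equiv 1$ and $\gamma^*\equiv 0$ correctly recover the two distortion terms in \Cref{prop:unanimtd}.
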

Since (\ref{eq:objfct}) is strictly quasi-concave in the public-good level, the objective function will have a unique maximizer.
\begin{definition}
    \begin{equation*}
        \xi(\gamma(\theta), \theta) = \argmax_g \ \sigma(\theta, \gamma(\theta), g).
    \end{equation*}
\end{definition}
One last piece of notation. It is useful to define the partition of  $N \setminus \{a\}$ depending on whether agents have the \textit{ex post} incentive to understate or overstate their types.
\begin{definition}
\begin{equation*}
  N \setminus \{a\} =  \ K(g^\circ) \ \cup \ L(g^\circ) \ \cup \ M(g^\circ) \ ,
\end{equation*}
where
\begin{align*}
    K(g^\circ) &= \left\{ i \in N \setminus \{a\} \ | \ dU_i/d\theta_i<0 \right\} \\
    L(g^\circ) &= \left\{ i \in N  \setminus \{a\} \ | \ dU_i/d\theta_i=0 \right\} \\
    M(g^\circ) &= \left\{ i \in N \setminus \{a\} \ | \  dU_i/d\theta_i>0 \right\} \ .
\end{align*}
\end{definition}

\subsection{Unanimity}
The agenda-setter solves
\begin{equation} \tag{$\mathcal{U}$} \label{uproblem}
\begin{gathered}
\max_{g(\theta), \bold{t}(\theta)} \qquad \int \theta_a \phi(g(\theta)) -t_a(\theta) \ dF(\theta)\\
\text{subject to} \qquad \; g(\theta) \leq  t_a(\theta) + \sum_{i \in N\setminus \{a\}}t_i(\theta),\\
\frac{d g(\theta)}{d \theta} \geq 0  , \\
dU_i/d\theta_i=\phi(g(\theta))-\Bar{v}_i'(\theta_i,g^\circ) , \\
U_i(\theta_i, g(\theta)) \geq \Bar{v}_i(\theta_i, g^\circ) \qquad \forall \; i \in N \setminus \{a\}.
\end{gathered}
\end{equation}
\begin{theorem} \label{thrm:u}
Under Assumptions (\ref{assmpt:fc})-(\ref{assumpt:homogen}), there exists a unique optimal solution to \ref{uproblem}. An implementable allocation $g^*(\theta)$ is optimal if and only if there exists a cumulative distribution function  $\gamma^*(\theta_i)$ such that
\begin{equation*}
\begin{cases}
&(\mathcal{A}) \qquad d\gamma^*(\theta_i) >0  \qquad \textrm{for} \ \ \theta_i \in L(g^\circ),\\
&(\mathcal{B}) \qquad g^*(\theta)= \xi(\gamma^*(\theta_i), \theta).
\end{cases}
\end{equation*}
The optimal allocation partitions the set of agents $N \setminus \{a\}$ into $K(g^\circ)$, $L(g^\circ)$ and $M(g^\circ)$.\\
Otherwise, $g^*(\theta)=g^\circ$.
\end{theorem}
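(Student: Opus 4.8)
The plan is to recast \ref{uproblem} as an optimal control problem and then invoke the multiplier-representation technique of \cite{jullien2000}, adapted to the fact that the allocation $g(\theta)$ is a single, nonrival quantity shared by all agents rather than a private quantity tailored to each. Fixing $\theta_{-i}$, \Cref{lemma:ic} turns each agent $i$'s incentive constraints into the law of motion $dU_i/d\theta_i = \phi(g(\theta)) - \bar v_i'(\theta_i, g^\circ)$ together with the monotonicity requirement $dg/d\theta_i \ge 0$, while the participation constraint becomes the state constraint $U_i(\theta_i) \ge \bar v_i(\theta_i, g^\circ)$. Using the (binding) resource constraint to eliminate $t_a$ and the envelope representation of the rents, the objective collapses to the virtual surplus $\sigma(\theta, \gamma(\theta), g)$ of \Cref{lemma:objfct}, in which the endogenous object $\gamma = (\gamma_i)_{i \neq a}$ encodes the cumulative shadow costs of the participation constraints.

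I would first solve the relaxed problem that drops the monotonicity constraint. Attaching a nonnegative measure to agent $i$'s participation constraint and integrating by parts against the law of motion of \Cref{lemma:ic}, one obtains — exactly as in \cite{jullien2000} — that the relevant statistic is the cumulative multiplier, which \emph{homogeneity} (\Cref{assumpt:homogen}) guarantees can be normalized into a cumulative distribution function $\gamma_i^*(\theta_i)$ running from $0$ to $1$; this is precisely the $\gamma^*$ entering the virtual surplus of \Cref{lemma:objfct}. Maximizing $\sigma$ pointwise in $g$ then yields $g^*(\theta) = \xi(\gamma^*(\theta_i), \theta)$, condition $(\mathcal{B})$, with first-order condition
\begin{equation*}
\left[\sum_{i \in N}\theta_i - \sum_{i \in N \setminus \{a\}} \frac{\gamma_i^*(\theta_i) - F_i(\theta_i)}{f_i(\theta_i)}\right]\phi'(g) = 1 .
\end{equation*}
Complementary slackness forces $d\gamma_i^*$ to be supported on the types where the participation constraint binds over a nondegenerate interval; on such an interval $U_i$ is flat, so $dU_i/d\theta_i = 0$ and that type lies in $L(g^\circ)$ — this is condition $(\mathcal{A})$. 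The partition of $N\setminus\{a\}$ into $K$, $L$, $M$ is then read off from the sign of $dU_i/d\theta_i = \phi(g^*(\theta)) - \bar v_i'(\theta_i, g^\circ)$ evaluated at the optimum.

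Sufficiency (the ``if'' direction) follows from a verification argument: given any $\gamma^*$ satisfying $(\mathcal{A})$--$(\mathcal{B})$, strict quasi-concavity of $\sigma$ in $g$ (noted after \Cref{lemma:objfct}) makes $\xi(\gamma^*(\theta_i), \theta)$ the unique pointwise maximizer, and the dual value generated by $\gamma^*$ bounds the objective at every feasible allocation, certifying optimality and simultaneously delivering uniqueness. It then remains to discharge the monotonicity constraint that was relaxed: \Cref{assumpt:regular} (log-concavity, hence a monotone inverse hazard rate) makes the virtual marginal valuation $\theta_i - (\gamma_i^*(\theta_i)-F_i(\theta_i))/f_i(\theta_i)$ increasing, so $\xi$ is automatically nondecreasing in each $\theta_i$ and the relaxation binds without loss.

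Finally, the ``otherwise'' clause: \emph{homogeneity} guarantees that the reservation profile is itself implementable, so $g^\circ$ is always feasible; if no $\gamma^*$ reconciling $(\mathcal{A})$ and $(\mathcal{B})$ exists — i.e. the understating agents ($M$) and overstating agents ($K$) impose mutually inconsistent requirements on the common $g$, exactly as in the intermediate region of \Cref{prop:unanimtd} — then the only allocation meeting every participation constraint is the outside option, whence $g^*(\theta) = g^\circ$. The hard part will be the representation step: because $g(\theta)$ is a single nonrival quantity entering every agent's envelope and participation constraint, the agents' screening problems are \emph{coupled} through the first-order condition for $g$, unlike the separable environment of \cite{jullien2000}. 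Establishing that the continuum of participation multipliers aggregates into well-defined distribution functions $\gamma_i^*$ whose supports are mutually consistent with the endogenous partition $K/L/M$ induced by the very $g^*$ they determine is the fixed point at the heart of the argument, and it is where \emph{homogeneity} and log-concavity do the decisive work.
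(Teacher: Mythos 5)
Your representation step is sound and is in fact the paper's own route: the proof of \Cref{lemma:objfct} likewise defines $\gamma^*(\theta_i)$ as the cumulative shadow value of a uniform relaxation of the participation constraints, verifies it is a cumulative distribution function, and integrates by parts twice to reach the virtual surplus, after which $(\mathcal{B})$ is pointwise maximization and $(\mathcal{A})$ is complementary slackness. The genuine gap is the step you yourself flag and then defer: the existence and construction of a $\gamma^*$ whose support is consistent with the endogenous partition $K(g^\circ)$, $L(g^\circ)$, $M(g^\circ)$. The paper does not obtain this from an abstract duality or fixed-point argument; it constructs $\gamma^*$ explicitly by a case analysis on the curvature of $\Bar{v}_i$, using \Cref{lemma:propofG} to order types by their incentive to misreport. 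In the concave case (\Cref{lemma:partition}), $\gamma^*$ is a step function jumping at a single (possibly interior) type $\theta_l$; the transfers are written separately for $\theta_i < \theta_l$ (rents integrated from above, involving $\Bar{v}_i$) and $\theta_i \geq \theta_l$, yielding a first-order condition that mixes $+F_i(\theta_i)/f_i(\theta_i)$ below the cutoff with $-(1-F_i(\theta_i))/f_i(\theta_i)$ above it, with the cutoff indices pinned down by explicit thresholds $\bar{g}^L, \dots, \bar{g}^H$ on $g^\circ$. In the convex case (\Cref{lemma:partition2}), no interior binding point is possible: $\gamma^*$ must be constant on the interior with mass only at the endpoints, its level is determined by the rent-differential equation $R(\gamma^*)=0$ (\Cref{lemma:R}), and quasiconcavity of $J(\theta;g^\circ)$ verifies that participation holds globally once it holds at $\underline{\theta}$ and $\overline{\theta}$. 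Without carrying out this construction, your argument establishes neither existence and uniqueness nor the ``otherwise $g^*(\theta)=g^\circ$'' clause, which you assert (correctly, by the countervailing-incentives logic of Case III in \Cref{prop:unanimtd}) but do not prove.

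A second, more local error: your claim that log-concavity alone makes the virtual valuation $\theta_i - (\gamma^*(\theta_i)-F_i(\theta_i))/f_i(\theta_i)$ increasing, so that monotonicity of $\xi$ is automatic for whatever $\gamma^*$ the relaxed problem produces, is false in general. At a jump of $\gamma^*$ of size $\Delta$ the virtual valuation drops by $\Delta/f_i(\theta_i)$ --- for instance, from $\theta_l + F_i(\theta_l)/f_i(\theta_l)$ just below an interior binding type to $\theta_l - (1-F_i(\theta_l))/f_i(\theta_l)$ just above it --- so the pointwise maximizer need not be monotone across the cutoff. \Cref{assumpt:regular} only discharges monotonicity within each region where $\gamma^*$ is constant; at the transition one needs precisely the structure encoded in condition $(\mathcal{A})$ (where the constraint binds on a nondegenerate set, $\gamma^*$ increases so that $U_i$ stays flat, i.e.\ $\phi(g)=\Bar{v}_i'(\theta_i,g^\circ)$ on $L(g^\circ)$) or bunching at the cutoff type, which the paper treats explicitly. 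So the monotonicity of the solution is a consequence of the constructed $\gamma^*$, not of log-concavity alone, and your proof as written leaves the central construction undone.
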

\Cref{thrm:u} is closely related to Theorem 1 in \cite{jullien2000}. Condition $\mathcal{A}$ identifies which participation constraints bind and, hence, which types earn no information rents. It states that the participation constraint can bind on a single interior type and at the extremes: the lowest and highest types.

Condition $\mathcal{B}$ characterizes the optimal public-good provision as the public-good level which maximizes the agenda-setter’s utility adjusted to take into account the information rents induced by incentive compatibility. Depending on the outside option public-good level $g^\circ$, some agents will have the incentive to understate their type, whereas others will have the incentive to overstate it. In particular, if $\gamma^*(\theta_i) > F_i(\theta_i)$, the dominating incentive is to understate the type; if $\gamma^*(\theta_i) < F_i(\theta_i)$, the dominating incentive is to overstate the type.

The main idea behind \Cref{thrm:u} is very simple: if there exists a cdf $\gamma^*(\theta_i)$ such that the collective choice, the public-good provision $g^*(\theta)$, is consistent with the agents’ \textit{ex post} incentives to misreport their types, then such collective choice will be implemented. If not, the agenda-setter can do no better than proposing the outside option.

\subsection{Majority}
We consider the following problem faced by the agenda-setter
\begin{equation*}  \tag{$\mathcal{M}$} \label{mproblem}
\begin{gathered}
\max_{g(\theta), \bold{t}(\theta), Q} \qquad \int \theta_a \phi(g(\theta)) -t_a(\theta) \ dF(\theta)\\
\text{subject to} \qquad \; g(\theta) \leq t_a(\theta) + \sum_{i \in N \setminus \{a\}}t_i(\theta),\\
\frac{d g(\theta)}{d \theta} \geq 0 \\
\text{and for some $Q \in \mathcal{Q}$}, \\
dU_i/d\theta_i=\phi(g(\theta))-\Bar{v}_i'(\theta_i,g^\circ) \qquad \forall \; i \in N \setminus \{a\}, \\
U_j(\theta_j, g(\theta)) \geq \Bar{v}_j(\theta_j, g^\circ) \qquad \forall \; j \in Q \ .
\end{gathered}
\end{equation*}
We now state the second main result.
\begin{theorem} \label{thrm:m}
    There exists a unique optimal solution to \ref{mproblem}. An implementable allocation $g^*(\theta)$ is optimal if and only if there exists a cumulative distribution function  $\gamma^*(\theta_i)$ such that
\begin{equation*}
\begin{cases}
&(\mathcal{A}) \qquad d\gamma^*(\theta_i) >0  \qquad \textrm{for} \ \ \theta_i \in L(g^\circ),\\
&(\mathcal{B}) \qquad g^*(\theta)= \xi(\gamma^*(\theta_i), \theta).
\end{cases}
\end{equation*}
The optimal allocation partitions the set of agents $N \setminus \{a\}$ into $K(g^\circ)$, $L(g^\circ)$ and $M(g^\circ)$. Moreover, there exists a set of agents
\begin{equation*}
     I(g^\circ) = \left\{ i \in N \ | \  U_i(\theta_i)<0 \right\} \ ,
\end{equation*}
with $|I(g^\circ)| \leq n-q$.
Otherwise, $g^*(\theta)=g^\circ$.
\end{theorem}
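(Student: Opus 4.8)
The plan is to treat \ref{mproblem} as a finite collection of unanimity-type programs indexed by the coalition. Because $\mathcal{Q}$ is finite, I would first fix an arbitrary $Q \in \mathcal{Q}$ and analyze the sub-problem in which the monotonicity and envelope conditions are imposed for every $i \in N \setminus \{a\}$, exactly as in \ref{uproblem}, but the participation constraints are retained only for $j \in Q$. This sub-problem is the relaxation of \ref{uproblem} obtained by deleting the participation constraints of the $n-q$ agents outside $Q$, so the machinery behind \Cref{thrm:u}, established under Assumptions \ref{assmpt:fc}--\ref{assumpt:homogen}, applies to it verbatim.

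For the fixed-$Q$ sub-problem I would rewrite the objective using \Cref{lemma:objfct}, attach Lagrange multipliers to the surviving participation constraints, and invoke the shadow-value representation of \cite{jullien2000}: the multipliers aggregate into a cumulative distribution function $\gamma^*(\theta_i)$ whose increments are supported exactly where the participation constraint binds. Optimality of an implementable $g^*(\theta)$ is then equivalent to conditions $(\mathcal{A})$ and $(\mathcal{B})$, and strict quasi-concavity of (\ref{eq:objfct}) in the public-good level (noted after \Cref{lemma:objfct}) makes the maximizer $\xi(\gamma^*(\theta_i), \theta)$ unique. The sign of $dU_i/d\theta_i = \phi(g^*(\theta)) - \Bar{v}_i'(\theta_i, g^\circ)$ at this optimum sorts $N \setminus \{a\}$ into $K(g^\circ)$, $L(g^\circ)$ and $M(g^\circ)$, precisely as in the unanimity analysis.

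I would then optimize over the finitely many coalitions; taking the best $Q^*$ delivers existence, and uniqueness of the allocation follows from per-coalition uniqueness together with the observation that any two optimal coalitions must yield the same value of (\ref{eq:objfct}) and hence the same $\xi$. To pin down $I(g^\circ)$, note that every $j \in Q^*$ satisfies its participation constraint, so $U_j(\theta_j) \geq 0$ and $j \notin I(g^\circ)$; since $a \in Q^*$, we also have $a \notin I(g^\circ)$. Hence $I(g^\circ) \subseteq N \setminus Q^*$ and $|I(g^\circ)| \leq |N \setminus Q^*| = n - q$. Finally, exactly as in \Cref{thrm:u}, if the understatement and overstatement incentives across the retained agents cannot be simultaneously reconciled---that is, if no cumulative distribution function satisfies $(\mathcal{A})$--$(\mathcal{B})$---then the agenda-setter can do no better than to propose the outside option, so $g^*(\theta) = g^\circ$.

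The step I expect to be the main obstacle is the joint determination of the coalition and the information rents. Deleting an agent's participation constraint does not remove that agent from the program: its type continues to enter both the welfare term $\theta_i \phi(g)$ and the rent term $(\gamma^*(\theta_i) - F_i(\theta_i))/f_i(\theta_i)$ in (\ref{eq:objfct}) through the retained incentive constraint. Consequently the optimal $g^*$ depends on which participation constraints are dropped, while the optimal choice of dropped constraints depends on $g^*$; the two decisions do not separate and must be shown to be mutually consistent at $Q^*$. Establishing that the characterization $(\mathcal{A})$--$(\mathcal{B})$ survives this coupling, and that the agents it is optimal to exclude are exactly those whose participation is cheapest to force, is the delicate part, requiring Jullien's variational argument to be carried out under the side constraint that at most $n-q$ participation constraints may be slack.
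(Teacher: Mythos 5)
There is a genuine gap, and it lies exactly where you flagged ``the main obstacle'': your device for resolving it---enumerating over \emph{fixed} coalitions $Q \in \mathcal{Q}$ and solving a relaxed unanimity program per coalition---is not a valid decomposition of Problem \ref{mproblem}. First, the fixed-$Q$ subproblem you describe is not a well-posed relaxation to which the machinery of \Cref{thrm:u} applies ``verbatim'': if agent $i \notin Q$ has \emph{no} participation constraint at any type while only dominant-strategy incentive compatibility is retained, the agenda-setter can shift $t_i(\theta_i,\theta_{-i})$ by an arbitrary constant $c_i(\theta_{-i})$ without disturbing any incentive constraint, so the subproblem is unbounded. (This is precisely why, in \Cref{sec:stochastic}, where incentive compatibility is required only for coalition members, the excluded agents can only be taxed up to an exogenous bound $\bar{\tau}$.) In the paper's problem the exclusion is \emph{type-contingent}, not agent-contingent: the constraint is that for (almost) every realized profile $\theta$ at least $q$ participation constraints hold, so each agent's participation constraint holds on a region of his own type space, which pins down his utility level there, and extraction from the excluded type region is bounded by \emph{within-agent} incentive compatibility---excluded types mimic the cutoff type and are bunched at its allocation (the Claim inside the proof of \Cref{prop:majoritd}, and \Cref{prop:bunch}). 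Your plan has no counterpart to this bunching step, which is what makes the excluded set $I(g^\circ) = \{i : U_i(\theta_i) < 0\}$ well defined and bounded in cardinality by $n-q$.

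Second, because exclusion is by type region, the identity of the optimal coalition varies with the realized profile ($\overline{Q}$ and $\underline{Q}$ are order-statistic objects, and \Cref{lemma:coalition} selects them profile by profile), so the maximum over finitely many \emph{fixed} coalitions is generically strictly below the value of \ref{mproblem}; your uniqueness argument (``any two optimal coalitions yield the same $\xi$'') inherits this defect. The paper's actual proof proceeds differently: it imports \Cref{thrm:u} per realized coalition (\Cref{cor:aux}), then determines the violation region jointly with the multiplier $\gamma^*$, splitting by the curvature of $\bar{v}_i$. When all $dU_i/d\theta_i$ have the same sign, $I$ is a tail below or above a single cutoff (\Cref{prop:bunch}); when $\bar{v}_i$ is concave, the key step is \Cref{lemma:R2} together with the quasiconvexity of $J(\theta; g^\circ)$, which forces $\gamma^*$ to be constant on an interior interval $(\theta_p, \theta_q)$ with $R(\gamma^*)=0$, so participation binds at $\theta_p$ and $\theta_q$ and is \emph{violated exactly in between}---yielding a non-convex coalition containing at most $n-q$ excluded agents; when $\bar{v}_i$ is convex, the violated set consists of the two tails. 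None of this structure---the joint determination of $(\theta_p,\theta_q)$ and $\gamma^*$, the counting lemma that the excluded interval comprises at most $n-q$ agents, and the curvature-dependent shape of $I(g^\circ)$---is recoverable from a finite enumeration over ex-ante coalitions, so the proposal as written does not establish the theorem.
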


The interpretation of \Cref{thrm:m} is very similar. Conditions $\mathcal{A}$ and $\mathcal{B}$ still apply. Moreover, the solution will induce a partition of agents into $K(g^\circ)$, $L(g^\circ)$ and $M(g^\circ)$. However, the agenda-setter needs to choose a coalition $Q$, for whose members the participation constraints need to be satisfied. There is a set $I(g^\circ)$ for whom the participation constraints is violated.

Notice that this type of exclusion is different from the exclusion of some agents in \cite{jullien2000}. The reason is that, in the present framework, agents can be forced to participate in the mechanism, and they will still consume the public good once it has been provided. However, they still need to be given the incentives to reveal their types truthfully \textit{ex post}.

The idea is again very simple. $K(g^\circ)$ and $M(g^\circ)$ are two convex sets.\footnote{Hence, for any three types $\theta_1<\theta_2<\theta_3$, if $\theta_1$ and $\theta_3$ are included in either set, then also $\theta_2$ is included in the same set.} This is because, by condition (\ref{eq:envelope}), information rents have to be increasing (decreasing) over the set of agents who have the incentive to understate (overstate) their types. The agenda-setter has to identify a set of agents whose participation constraints can be violated, but whose incentives to report truthfully are still required to hold. The choice of this set crucially depends on the shape of the reservation utility profile.

Moreover, depending on how far public-good provision is from the efficient level, a subset of $I(g^\circ)$ will be bunched at the allocation of the agent(s) whose participation constraints is (are) binding.

Therefore, \Cref{thrm:u} and \Cref{thrm:m} imply that the optimal public-good provision is either equal to - or a step function of - the outside option public-good level.

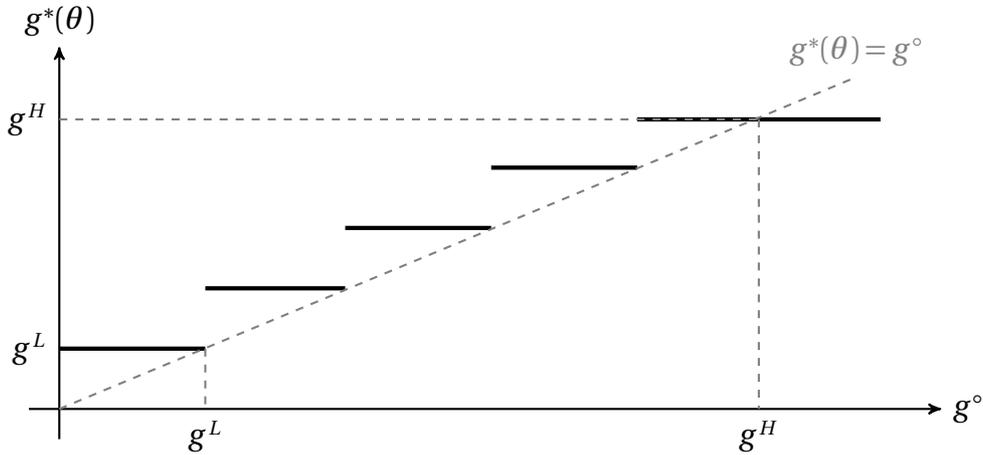
\begin{figure}[ht]
\centering
    \begin{tikzpicture}[-,>=stealth',auto,node distance=1.2cm,
  thick,main node/.style={transform shape, circle,draw,font=\rmfamily, minimum width=5pt}]

    %Axis
    \draw[->] (-5,0) -- (10,0) node[right]{$g^\circ$};
    \draw[->] (-4.5,-0.5) -- (-4.5,6) node[above]{$g^*(\theta)$};

    %Plot
  \draw[ultra thick] (-4.5,1)node[left]{$g^L$} -- (-2.1,1);
  \draw[ultra thick] (-2.1,2) -- (0.2,2); 
  \draw[ultra thick] (0.2,3)-- (2.6,3);
  \draw[ultra thick] (2.6,4) -- (5,4);
   \draw[ultra thick] (5,4.8) -- (9,4.8);
   %labels
   \draw (-2.1,0)node[below]{$g^L$};
   \draw (7,0)node[below]{$g^H$};
   \draw (-4.5,4.8)node[left]{$g^H$};
    %Dashed
    \draw[dashed, color=gray] (-4.5,0) -- (8.6,5.5) node[above]{$g^*(\theta) = g^\circ$}; 
    \draw[dashed, color=gray] (-2.1,1)--(-2.1,0);
    \draw[dashed, color=gray] (7,4.8)--(7,0);
    \draw[dashed, color=gray] (-4.5,4.8)--(7,4.8);
\end{tikzpicture}
\caption{$g^*(\theta)$ is a step function of $g^\circ$.}
\label{fig:general}
\end{figure}

\section{Implications} \label{sec:implications}
\Cref{thrm:u} and \Cref{thrm:m} show that optimal public-good provision crucially depends on the shape of the reservation utility profile $\Bar{v}_i$. 

When the outside option public-good level $g^\circ$ is low, all agents have the incentive to understate their types. By familiar arguments, it is then optimal to underprovide the public good by distorting the allocations downwards. The participation constraint binds for the lowest type; all others earn information rents. By Condition (\ref{eq:envelope}), rents are increasing in types.

Analogously, when the outside option public-good level is sufficiently high, all agents have the incentive to overstate their types. The optimal public-good provision now entails distorting the allocation upwards. The participation constraints binds only for the highest type. All other agents earn information rents, which are now decreasing in types by Condition (\ref{eq:envelope}).

When $g^\circ$ assumes intermediate values, however, agents have countervailing incentives: some agents will have the incentive to understate their types, whereas others will have the incentive to overstate their types. Therefore, the optimal public-good provision depends on the concavity or convexity of the reservation utility profile.

\Cref{sec:linear} considered the linear case. We now consider the concave and convex cases.

\subsection{Concave Reservation Utility Profile}
\begin{lemma} \label{lemma:partition}
Suppose $\bar{v}''_i \leq 0$. Then,  for some $k,l \in \{0,1,\dots,r\}$ with $k \leq l$,
\begin{align*}
    K(g^\circ) &= \{\theta_1,\dots,\theta_k\}, \\
    L(g^\circ) &= \{\theta_{k+1},\dots, \theta_{l}\},  \\
    M(g^\circ) &= \{\theta_{l+1},\dots, \theta_r\}.
\end{align*}
Moreover, the indices $k$ and $l$ are increasing in $g^\circ$. If values are all distinct, then $L(g^\circ)$ has at most one agent.
\end{lemma}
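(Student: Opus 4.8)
The plan is to read the partition off the sign of the single quantity $dU_i/d\theta_i = \phi(g^*(\theta)) - \Bar{v}_i'(\theta_i, g^\circ)$ supplied by Condition~(\ref{eq:envelope}), which by definition places agent $i$ in $K(g^\circ)$, $L(g^\circ)$ or $M(g^\circ)$ according to whether this sign is negative, zero or positive. The key structural observation is that $\phi(g^*(\theta))$ is a single scalar common to all agents (the public good is nonrival), so across agents only the term $\Bar{v}_i'(\theta_i, g^\circ)$ varies. Under $\Bar{v}_i'' \le 0$ the map $\theta_i \mapsto \Bar{v}_i'(\theta_i, g^\circ)$ is non-increasing, and hence $\theta_i \mapsto dU_i/d\theta_i$ is non-decreasing in the type.

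First I would relabel the $r = |N \setminus \{a\}|$ agents so that $\theta_1 \le \dots \le \theta_r$. By the monotonicity just noted, $dU_i/d\theta_i$ is non-decreasing along this ordering, so its sign pattern is a (possibly empty) block of negatives, followed by a block of zeros, followed by a block of positives. This is exactly the claimed interval structure, with $k$ the number of negatives and $l-k$ the number of zeros, so that $K(g^\circ) = \{\theta_1, \dots, \theta_k\}$, $L(g^\circ) = \{\theta_{k+1}, \dots, \theta_l\}$, $M(g^\circ) = \{\theta_{l+1}, \dots, \theta_r\}$ with $k \le l$. For the final sentence, if $\Bar{v}_i'$ is strictly decreasing in $\theta$, then $dU_i/d\theta_i$ is strictly increasing across agents with distinct types, so the equality $dU_i/d\theta_i = 0$ can hold for at most one agent, giving $|L(g^\circ)| \le 1$; I would remark that the only way distinct types can share membership in $L$ is the degenerate weakly-concave case in which $\Bar{v}_i'$ is locally flat at the level $\phi(g^*(\theta))$.

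The substantive step is the monotonicity of $k$ and $l$ in $g^\circ$. I would phrase it through the cutoff type(s) defined by $\Bar{v}'(\theta, g^\circ) = \phi(g^*(\theta; g^\circ))$: since $\Bar{v}'$ is decreasing in $\theta$, the numbers $k$ and $l$ count the agents lying below the lower and upper endpoints of the resulting level set, so monotonicity of $k,l$ is equivalent to these cutoffs being non-decreasing in $g^\circ$. Differentiating the cutoff identity and using $\Bar{v}'' < 0$, a cutoff $\theta_c$ moves up precisely when $\phi'(g^*)\,\partial_{g^\circ} g^* \le \partial_{g^\circ}\Bar{v}'(\theta_c, g^\circ)$, i.e.\ when the marginal reservation utility responds to $g^\circ$ at least as strongly as $\phi(g^*)$ does. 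I would verify this inequality by combining (i) the fact, from \Cref{thrm:u} and \Cref{thrm:m} together with \Cref{fig:general}, that $g^*$ is weakly increasing in $g^\circ$ and constant on each plateau of the step function, so that on the interior of a plateau $\partial_{g^\circ} g^* = 0$ and the inequality collapses to $\partial_{g^\circ}\Bar{v}' \ge 0$, with (ii) the cross-partial sign $\partial^2 \Bar{v}_i/\partial\theta_i\partial g^\circ \ge 0$, which holds in the leading specifications (in the linear profile $\Bar{v}_i' = \phi(g^\circ)$, so the cross-partial is $\phi'(g^\circ) \ge 0$).

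I expect this comparative-statics step to be the main obstacle. On the interior of a plateau the argument above is clean, but I would still need to control the behaviour at the jumps of the step function, where $\phi(g^*)$ increases discontinuously and, taken in isolation, would tend to pull agents out of $K(g^\circ)$. The point is that the jump in $g^*$ is not exogenous: it occurs exactly because the set of binding participation constraints changes, so the jump and the movement of the partition are the same event and cannot be signed separately. I would therefore handle the jumps by a direct comparison of the two optimal solutions at outside options $g^\circ_1 < g^\circ_2$, using the binding-constraint characterization (Condition~$\mathcal{A}$) to locate $L$ and establishing $K(g^\circ_1) \subseteq K(g^\circ_2)$ from the cross-partial monotonicity of $\Bar{v}'$ rather than through local differentiation. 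The honest content of the obstacle is that the monotonicity of $k,l$ genuinely rests on the cross-partial condition in (ii): without it, the discrete jumps in $g^*$ could reverse the inclusion and the claim can fail.
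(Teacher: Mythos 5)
Your first step is exactly the paper's proof: the appendix establishes \Cref{lemma:partition} through the single chain $dU_m/d\theta_m=\phi(g(\theta))-\Bar{v}'_m(\theta_m,g^\circ)\geq \phi(g(\theta))-\Bar{v}'_l(\theta_l,g^\circ)=0\geq \phi(g(\theta))-\Bar{v}'_k(\theta_k,g^\circ)=dU_k/d\theta_k$ for $\theta_k\leq\theta_l\leq\theta_m$, which is precisely your observation that $\phi(g(\theta))$ is one scalar common to all agents and $\theta_i\mapsto\Bar{v}'_i(\theta_i,g^\circ)$ is non-increasing under $\Bar{v}''_i\leq 0$, so the sign of (\ref{eq:envelope}) is non-decreasing along the type order and the partition is a negative block, a zero block, and a positive block. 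Where you genuinely diverge is on the remaining claims, which the paper's displayed proof does not address at all: the monotonicity of $k$ and $l$ in $g^\circ$ is carried elsewhere, by \Cref{lemma:propofG} (monotone behaviour of $G(\theta_i;g^\circ)=\phi(g(\theta))-\Bar{v}'_i(\theta_i,g^\circ)$ in $g^\circ$) together with the recursive thresholds $\bar{g}^L,\bar{g}^{L-1},\dots,\bar{g}^H$ that express $k(g^\circ)$ and $l(g^\circ)$ as step functions, while the claim $|L(g^\circ)|\leq 1$ is asserted without argument. Your cutoff-type comparative statics is a more explicit rendering of that threshold construction, and both of your caveats are substantively correct. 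First, the paper's primitives only require $\Bar{v}_i$ to be non-decreasing in $g^\circ$, which does not deliver the cross-partial sign $\partial^2\Bar{v}_i/\partial\theta_i\partial g^\circ\geq 0$ on which the monotonicity of the partition (and the asserted monotonicity of $G$ in $g^\circ$) tacitly rests; indeed, with that cross-partial and $g$ held fixed, $G$ is \emph{decreasing} in $g^\circ$, as in the linear benchmark $\Bar{v}'_i=\phi(g^\circ)$, so the paper's ``monotone increasing'' is itself loosely stated and you have correctly surfaced an implicit assumption. Second, under merely weak concavity $\Bar{v}'_i$ can be flat at the level $\phi(g(\theta))$, so distinct types can share membership in $L(g^\circ)$; the last sentence of the lemma needs strict concavity or a genericity qualifier, exactly as you remark. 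What the paper's route buys is reuse: the thresholds $\bar{g}^L,\dots,\bar{g}^H$ feed directly into the step-function characterization of $g^*$ downstream. What your route buys is transparency about where the argument could fail, though your handling of the jumps of $g^*$ remains a sketch (a direct comparison of optima at $g^\circ_1<g^\circ_2$ via Condition $\mathcal{A}$) --- a fair stopping point, since the paper supplies no argument at the jumps either.
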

\Cref{lemma:partition} orders the partition of the agents, depending on whether they have the incentive to understate or overstate their types, or they are indifferent between the two.
When values are all distinct, there will be a single type $\Tilde{\theta}$ whose participation constraint binds. When $g^\circ$ is very low, then $\Tilde{\theta}=\theta_1$. As the outside option public-good level increases, the cutoff type $\Tilde{\theta}$ will be an interior type. By \Cref{lemma:partition}, all types below (above) $\Tilde{\theta}$ will have the incentive to overstate (understate) their types. \Cref{fig:rentsconcaveu} displays the information rents as a function of the types in this case.
\begin{figure}[ht]
\centering
    \begin{tikzpicture}[-,>=stealth',auto,node distance=1.2cm,
  thick,main node/.style={transform shape, circle,draw,font=\rmfamily, minimum width=5pt}]

    %Axis
    \draw (0,2)node[below]{$\underline{\theta}$} -- (8,2);
    \draw[->] (0,2) -- (0,6);
    \draw[->] (8,2)node[below]{$\overline{\theta}$} -- (8,6);
    
% rents
   \draw (4,2) node[below]{$\Tilde{\theta}$};
   \draw [blue, thick] (4,2) to [out=180,in=280] (0,4.5);
   \draw [red, thick] (4,2) to [out=0, in=250] (8,4.5);

\end{tikzpicture}
\caption{$\Bar{v}_i$ concave: unanimity}
\label{fig:rentsconcaveu}
\end{figure}
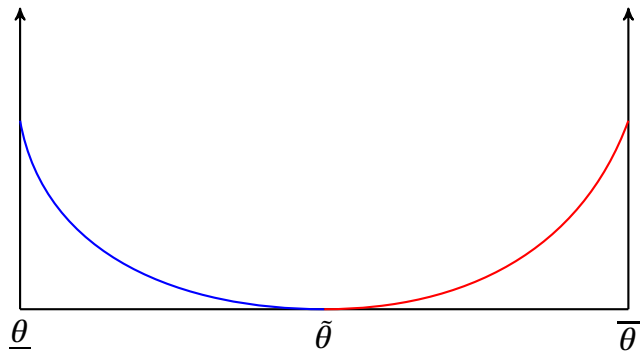
In the majority case, the same argument applies. However, \Cref{thrm:m} states that a subset of agents $I$ might be bunched at the allocation of the type whose participation constraint binds. Depending on $g^\circ$, the participation constraint binds on a different type and hence also the set of agents excluded from the coalition differs. \Cref{fig:rentsconcavemex} displays the solution when the cutoff type is interior.

\begin{figure}[ht]
\centering
    \begin{tikzpicture}[-,>=stealth',auto,node distance=1.2cm,
  thick,main node/.style={transform shape, circle,draw,font=\rmfamily, minimum width=5pt}]

    %Axis
    \draw (0,2)node[below]{$\underline{\theta}$} -- (8,2);
    \draw[->] (0,2) -- (0,6);
    \draw[->] (8,2)node[below]{$\overline{\theta}$} -- (8,6);
    
% rents
   \draw (4,1.7) node[below]{$\Tilde{\theta}$};
   \draw [blue, thick] (4,1.7) to [out=180,in=280] (0,4.2);
   \draw [red, thick] (4,1.7) to [out=0, in=250] (8,4.2);
   \filldraw [blue] (2,2) circle (1.5pt) node[below]{$\theta_p$};
   \filldraw [red] (5.8,2) circle (1.5pt) node[below]{$\theta_q$};

  \draw [<->] [orange] (0,6) -- (2,6);
  \draw [orange] (4,6)node{$Q$};
  \draw [<->] [orange]  (6,6)--(8,6);
  \draw [<->] [violet] (2,2.5) -- (4,2.5)node[above]{$I$} -- (5.8,2.5);
\end{tikzpicture}
\caption{$\Bar{v}_i$ concave with exclusion}
\label{fig:rentsconcavemex}
\end{figure}
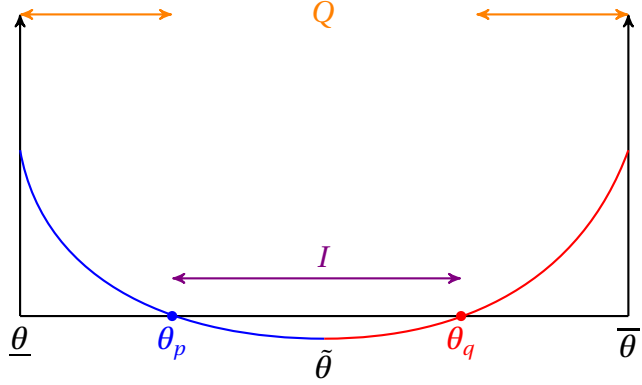
Interestingly, the winning coalition can be a \textit{non-convex} set: there can be three types $\theta_i < \theta_k < \theta_j$ such that types $\theta_i$ and $\theta_j$ might be included in $Q$, whereas $\theta_k$ is excluded.
\subsection{Convex Reservation Utility Profile}
\begin{lemma} \label{lemma:partition2}
Suppose $\bar{v}''_i \geq 0$. Then,  for some $k,l \in \{0,1,\dots,r\}$ with $k \leq l$,
\begin{align*}
    K(g^\circ) &= \{\theta_{r-1},\dots,\theta_l\}, \\
    L(g^\circ) &= \{\theta_1, \theta_r\},  \\
    M(g^\circ) &= \{\theta_{k},\dots, \theta_2\},
\end{align*}
Moreover, the indices $k$ and $l$ are decreasing in $g^\circ$.
\end{lemma}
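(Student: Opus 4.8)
The plan is to mirror the argument for the concave case in \Cref{lemma:partition}, exploiting that the only structural change is the reversed monotonicity of the marginal reservation utility. First I would invoke the envelope characterization of \Cref{lemma:ic}: along any implementable optimum, $dU_i/d\theta_i=\phi(g^*(\theta))-\bar{v}_i'(\theta_i,g^\circ)$ as in (\ref{eq:envelope}). The crucial observation is that the public-good level $g^*(\theta)$ is common to all agents, so for a fixed realization $c:=\phi(g^*(\theta))$ is a single scalar, and membership of agent $i$ in $K(g^\circ)$, $L(g^\circ)$ or $M(g^\circ)$ is governed entirely by the sign of $c-\psi_i(\theta_i)$, where $\psi_i(\theta_i):=\bar{v}_i'(\theta_i,g^\circ)$ is the marginal reservation utility. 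This reduces the partition to a one-dimensional scalar comparison, exactly as in the concave case.

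Next I would use convexity to pin down the ordering. The hypothesis $\bar{v}_i''\geq 0$ says precisely that $\psi_i$ is non-decreasing in $\theta_i$, so $c-\psi_i(\theta_i)$ is non-increasing in $\theta_i$. Hence, moving from the lowest to the highest type, $dU_i/d\theta_i$ changes sign at most once, and it does so from positive to negative, the exact reversal of \Cref{lemma:partition}. This single-crossing property is what orders the partition: the agents for whom $dU_i/d\theta_i>0$ (the set $M$) occupy the lower tail, those for whom $dU_i/d\theta_i<0$ (the set $K$) occupy the upper tail, and the knife-edge equality $c=\psi_i(\theta_i)$ delineates $L$. I would then translate this sign pattern into the index description in the statement, treating the two boundary types $\theta_1$ and $\theta_r$ separately, since by the hump-shaped rent profile induced by the reversed single crossing the participation constraint of \Cref{thrm:u}--\Cref{thrm:m} binds at the extremes rather than in the interior; it is these binding constraints that pin down the endogenous scalar $c=\phi(g^*)$ and hence the precise location of the cut-offs.

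The substantive step, and the one I expect to be the main obstacle, is the comparative statics: that the cut-off indices $k$ and $l$ are \emph{decreasing} in $g^\circ$. The difficulty is that $g^\circ$ enters with two competing effects. On one hand, raising the outside-option level raises $g^*$ (by \Cref{thrm:u}--\Cref{thrm:m}, $g^*$ is a non-decreasing step function of $g^\circ$), hence raises $c=\phi(g^*)$, which tends to enlarge $M$. On the other hand, it shifts the whole profile $\psi_i(\cdot,g^\circ)=\bar{v}_i'(\cdot,g^\circ)$, since $\bar{v}_i$ is non-decreasing in $g^\circ$ under \Cref{assumpt:homogen}, which tends to enlarge $K$. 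I would resolve the net sign by monotone comparative statics rather than the implicit function theorem, because $g^*(g^\circ)$ is only piecewise smooth: the objective in \Cref{lemma:objfct}, combined with the single-crossing structure just established, makes the selected cut-off type monotone in $g^\circ$ in the sense of Topkis, and the reversed curvature (convex rather than concave $\bar{v}_i$) flips the direction of that monotonicity relative to \Cref{lemma:partition}. The crux is verifying that the upward shift of $\psi_i$ dominates the rise in $c$, so that the indifferent type migrates toward the lower tail and the indices fall as $g^\circ$ grows; this domination is exactly what the convexity assumption, together with the endogenous tracking of $g^*$, must deliver.
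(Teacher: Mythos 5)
The ordering half of your argument does match the paper's: the proof of \Cref{lemma:partition2} runs the same chain of inequalities as \Cref{lemma:partition} with the order of types reversed, which is exactly your scalar comparison $\phi(g^*(\theta))-\bar{v}_i'(\theta_i,g^\circ)$ under $\bar{v}_i''\geq 0$. The genuine gap is the claim $L(g^\circ)=\{\theta_1,\theta_r\}$. Your single-crossing logic, taken on its own terms, places the indifferent type (where $dU_i/d\theta_i=0$) in the \emph{interior} of the type space, so identifying the knife-edge/binding set with the two extremes requires an additional argument that you assert ("hump-shaped rent profile \dots\ binds at the extremes") but never derive. The paper's proof supplies precisely this: since rents increase over $M(g^\circ)$ and decrease over $K(g^\circ)$ by (\ref{eq:envelope}), and since the shadow weight $\gamma^*(\theta_i)$ from \Cref{lemma:objfct} must be a nondecreasing cumulative distribution function, $\gamma^*$ cannot put mass at any interior type and is therefore constant on the interior, forcing participation to bind only at $\underline{\theta}$ and $\overline{\theta}$. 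The constant is then pinned down through $R(\gamma^*)=U(\overline{\theta})-U(\underline{\theta})$ and the three-case rule of \Cref{lemma:R} (the construction from \cite{maggi1995}), and the argument is closed by showing $J(\theta;g^\circ)$ is quasiconcave, so that individual rationality at the two extremes implies it everywhere. None of this machinery appears in your proposal, and without it the lemma's statement about $L$ does not follow from single crossing alone.

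The second gap is the comparative statics. You correctly diagnose the two competing effects of raising $g^\circ$ (the endogenous rise in $c=\phi(g^*)$ versus the shift of $\bar{v}_i'(\cdot,g^\circ)$), but you explicitly leave the resolution open --- ``this domination is exactly what the convexity assumption \dots\ must deliver'' is a conjecture, not a verification, and no supermodularity structure is actually exhibited for a Topkis argument. The paper does not run monotone comparative statics at all; it sidesteps the race. Because $g^*$ is a step function of $g^\circ$ (constant between thresholds), the paper defines the thresholds $\bar{g}^L,\bar{g}^{L-1},\dots,\bar{g}^H$ directly from the pointwise first-order conditions of each regime, with $\gamma^*$ re-optimized regime by regime via \Cref{lemma:R}, and orders them using \Cref{lemma:propofG}, i.e., monotonicity of $G(\theta_i;g^\circ)=\phi(g(\theta))-\bar{v}_i'(\theta_i,g^\circ)$ in $g^\circ$ and in $\theta_i$. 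Within a regime only the reservation-utility term moves, so types transition one at a time as $g^\circ$ crosses a threshold, which is what delivers the decreasing indices $k$ and $l$. To complete your route you would need either this explicit threshold construction or a worked-out single-crossing/supermodularity verification for the composite map $g^\circ \mapsto (c,\psi_i)$; as written, the final step of your proof is missing.
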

Analogously to \Cref{lemma:partition}, \Cref{lemma:partition2} orders the partition of the agents. However, the transition between sets is more involved than in the concave case. The reason is again Condition (\ref{eq:envelope}).

As $g^\circ$ increases, the first agent whose incentive switches from understating to overstating is the agent with the highest type. However, the information rents have to be decreasing for the set of agents having the incentive to overstate. The only way to preserve incentive compatibility is then to have the participation constraint binding at both extremes of the type space. Interestingly, there will be a single interior type where the two countervailing incentives cross out and the agent will be just indifferent between understating or overstating.\footnote{See \cite{maggi1995}.}
This cutoff type is decreasing in $g^\circ$: it coincides with $\theta_n$ when $g^\circ$ is very low and it coincides with lower types as $g^\circ$ increases. \Cref{fig:rentsconvexu} illustrates this property.
\begin{figure}[ht]
\centering
    \begin{tikzpicture}[-,>=stealth',auto,node distance=1.2cm,
  thick,main node/.style={transform shape, circle,draw,font=\rmfamily, minimum width=5pt}]

    %Axis
    \draw (0,2)node[below]{$\underline{\theta}$} -- (8,2);
    \draw[->] (0,2) -- (0,6);
    \draw[->] (8,2)node[below]{$\overline{\theta}$} -- (8,6);
    
   % rents
   \draw (4,2) node[below]{$\Tilde{\theta}$};
   \draw[dotted] (4,4.5)--(4,2);
  \draw [red, thick] (4,4.5) to [out=180,in=80] (0,2);
   \draw [blue, thick] (4,4.5) to [out=0, in=100] (8,2);

\end{tikzpicture}
\caption{Information rents when $\Bar{v}_i$ is convex}
\label{fig:rentsconvexu}
\end{figure}
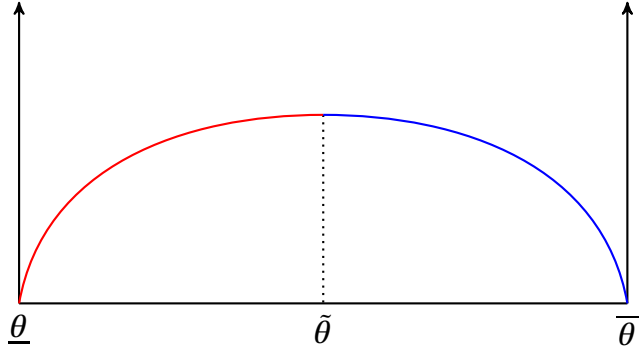
In the majority case, there will be a set of agents $I$ whose participation constraint is violated. These agents might be bunched at the allocation of the type(s) with binding participation constraint(s). Contrary to the concave case, the winning coalition is now a convex set. \Cref{fig:rentsconvexmex} illustrates this property.

\begin{figure}[ht]
\centering
    \begin{tikzpicture}[-,>=stealth',auto,node distance=1.2cm,
  thick,main node/.style={transform shape, circle,draw,font=\rmfamily, minimum width=5pt}]

    %Axis
    \draw (0,2)node[below]{$\underline{\theta}$} -- (8,2);
    \draw[->] (0,2) -- (0,6);
    \draw[->] (8,2)node[below]{$\overline{\theta}$} -- (8,6);
    
   % rents
   \draw (4,2) node[below]{$\Tilde{\theta}$};
   \draw[dotted] (4,4.5)--(4,2);
 % \draw [red, thick] (4,4.5) to [out=180,in=80] (0,2);
  % \draw [blue, thick] (4,4.5) to [out=0, in=100] (8,2);

%bunching
   \draw [red, very thick] (4,4.5)  to [out=180,in=80] (1,2)--(0,2);
   \draw [blue, very thick] (4,4.5) to [out=0, in=100] (7,2)--(8,2);

%coalition

    \draw [<->] [orange] (1,6) -- (4,6)node[above]{$Q$} -- (7,6);
    \draw [<->] [violet] (0,5) -- (0.5,5)node[above]{$I$} -- (1,5);
    \draw [<->] [violet] (7,5) -- (7.5,5)node[above]{$I$} -- (8,5);

\end{tikzpicture}
\caption{$\Bar{v}_i$ convex with exclusion}
\label{fig:rentsconvexmex}
\end{figure}
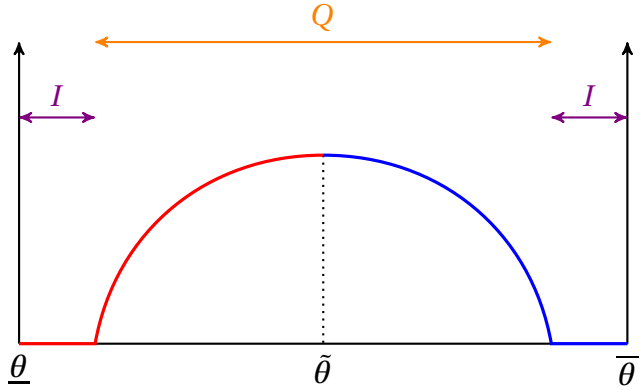

\subsection{Dynamics}
Consider now the following dynamic setting. The agenda-setter $a \in N$ is the same in each period $\tau=1, \dots, T$ and s/he has full-commitment power. The agents' values for the public good do not evolve over time. Denote $\delta \in [0,1)$ the common discount factor of the agents. Moreover, let $g_{\tau}$ and $t_{\tau}$ be the public-good provision and transfer in period $\tau$, respectively. Agent $i$'s total utility is then given by
\begin{equation*}
    U_i^T(\theta_i)= \sum_{\tau=1}^T \delta^{\tau-1} (\theta_i \phi(g_{\tau}) - t_{\tau}) \ .
\end{equation*}
Analogously to \Cref{def:direct}, the notion of direct mechanism can be extended to this dynamic setting.
\begin{definition} \label{def:directdynamic}
A dynamic direct mechanism is a set of functions $g=(g_1, \dots, g_T)$ and $t_i=(t_{i,1}, \dots, t_{i,T})$, \quad  $\forall i \in N$
\begin{align*}
    g &: \Theta \longrightarrow \mathbb{R}_+^T,\\
    t_i &: \Theta \longrightarrow \mathbb{R}^T \ .
\end{align*}
\end{definition}
Moreover, as in \Cref{sec:sb}, the notions of incentive compatible and individually rational (static) mechanisms can be naturally extended to dynamic mechanisms.
\begin{definition}
    A dynamic direct mechanism is incentive compatible if, $\forall i \in N$,
    \begin{equation*}
       U_i^T(\theta) = \sum_{\tau=1}^T \delta^{\tau-1} (\theta_i \phi(g_{\tau}(\theta)) - t_{\tau}(\theta)) \geq \sum_{\tau=1}^T \delta^{\tau-1} (\theta_i \phi(g_{\tau}(\theta_i', \theta_{-i})) - t_{\tau}(\theta_i')) \ \qquad \forall \theta_i, \theta_i' \in [\underline{\theta}, \overline{\theta}] \ .
    \end{equation*} 
\end{definition}

\begin{definition} \label{def:dynamir}
    A dynamic direct mechanism is individually rational if, $\forall i \in N$,
    \begin{equation} \label{dynamicIR}
        U_i^T(\theta) \geq  \bar{v}_i(\theta_i, g^\circ) \beta \ \qquad \forall \theta_i \in [\underline{\theta}, \overline{\theta}] \ ,
    \end{equation}
    where $\beta= \frac{1-\delta^T}{1-\delta}$.
\end{definition}
In this dynamic setting, the agenda-setter proposes an intertemporal profile of public-good levels. If the agents vote in favour, then the public-good provision $g_{\tau}$ will be implemented in exchange for the transfer $t_\tau$ in each period $\tau$. If not, the outside option $ \Bar{v}_i(\theta_i, g^\circ)$ will be implemented in the current and future periods, without the possibility to renegotiate the public-good provision.

It is straightforward to show that the optimal dynamic public-good provision is simply a repetition of the solution to the static problems from Theorem \ref{thrm:u} and \ref{thrm:m}.

Intuitively, the optimal public-good provision and coalition formation in period $\tau$ will not affect the outside option in period $\tau+1$. Hence, all periods are identical to each other. But then, if there were a different solution to the dynamic problem, the discounted per-period value of this alternative solution would dominate the static solutions from Theorems \ref{thrm:u} and \ref{thrm:m}, contradicting their optimality. This discussion is summarized in the following proposition.\footnote{The argument is the same as in Chapter 11 in \cite{börgers2015}, to which I refer the reader for a proof.}

\begin{proposition} \label{prop:dynamic}
 The optimal dynamic public-good provision and coalition formation are a repetition of the solutions to the static problems \ref{uproblem} and \ref{mproblem}.
\end{proposition}

\section{Extensions} \label{sec:relaxss}

\subsection{Negatively-sloped Reservation Utility Profile} \label{sec:negslope}
Suppose $\Bar{v}_i'(\theta_i, g^\circ)<0$, for every $i \in N$. Then, condition (\ref{eq:envelope}) yields
\begin{equation*}
dU_i/d\theta_i=\phi(g(\theta))-\Bar{v}_i'(\theta_i,g^\circ)  \ > 0  
\end{equation*}
for all $\theta_i \in \Theta_i$. The dominating incentive is then always to understate the type. The equilibrium public-good provision and coalition formation are then the same as in \Cref{prop:unanimtd} and \Cref{prop:majoritd} when $\Bar{v}_i(\theta, g^\circ) \equiv 0$.

\subsection{Stochastic Mechanisms} \label{sec:stochastic}
Mechanisms can be stochastic in two ways: for each agent, the inclusion in the winning coalition $Q$ (\textit{stochastic coalitions}); or the public-good provision $g$ (\textit{stochastic public-good provision}).
\subsubsection{Stochastic Coalitions}
Suppose now that the incentive compatibility constraint has to be satisfied only for the members of the coalition $Q \in \mathcal{Q}$.
We consider the following problem faced by the agenda-setter
\begin{equation*}  \tag{$\mathcal{S}$} \label{sproblem}
\begin{gathered}
\max_{g(\theta), \bold{t}(\theta), Q} \qquad \int \theta_a \phi(g(\theta)) -t(\theta_a) \ dF(\theta)\\
\text{subject to} \qquad \; g(\theta) \leq t(\theta_a) + \sum_{i \in N \setminus \{a\}}t(\theta_i),\\
\frac{d g(\theta)}{d \theta} \geq 0, \\
\text{and for some $Q \in \mathcal{Q}$}, \\
dU_j/d\theta_j=\phi(g(\theta))-\Bar{v}_j'(\theta_j,g^\circ), \\
U_j(\theta_j, g(\theta)) \geq \Bar{v}_j(\theta_j, g^\circ) \qquad \forall \; j \in Q.
\end{gathered}
\end{equation*}
\medskip\\
Problem \ref{sproblem} differs from Problem \ref{mproblem} only insofar as the individual rationality constraints of the agents excluded from the coalition are concerned. The resource constraint is relaxed, since the agents in $N \setminus Q$ can now be taxed to some upper bound $\Bar{\tau}$.
\begin{proposition}
The coalition $Q$ is randomly chosen. Within this set of agents, the solution from Problem \ref{uproblem} applies. 
\end{proposition}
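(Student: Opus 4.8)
The plan is to decompose Problem \ref{sproblem} into two independent pieces: the treatment of the excluded agents $N \setminus Q$ and the design of the within-coalition mechanism for $Q$. First I would dispose of the excluded agents. Since neither an incentive-compatibility nor a participation constraint is imposed on any $i \in N \setminus Q$, the agenda-setter cannot rely on truthful revelation from them and hence cannot make their contribution depend on their reported type: any type-contingent tax schedule would simply be undercut by each agent reporting the type that minimizes his payment. The most she can extract is therefore the type-independent maximal feasible tax $\Bar{\tau}$ from each, so their aggregate contribution $(n-q)\Bar{\tau}$ is a constant that enters the program only through the resource constraint, which it relaxes uniformly.

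Second, I would show that conditional on $Q$ the program coincides with Problem \ref{uproblem} restricted to the $q$ members of $Q$. Once $Q$ is fixed, both the envelope (incentive) condition and the participation constraint are imposed exactly for $j \in Q$, which is precisely the constraint structure of the unanimity problem, now on a set of $q$ rather than $n$ agents. Substituting the binding resource constraint $t_a(\theta) = g(\theta) - \sum_{i \in Q \setminus \{a\}} t_i(\theta) - (n-q)\Bar{\tau}$ into the objective leaves it equal to the unanimity objective plus the constant $(n-q)\Bar{\tau}$. Since an additive constant affects neither the maximizer $\xi(\gamma^*(\theta_i), \theta)$ nor the induced partition into $K(g^\circ)$, $L(g^\circ)$ and $M(g^\circ)$, the characterization of \Cref{thrm:u} applies verbatim to the agents in $Q$.

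Third, I would establish that the coalition is chosen at random. Because the excluded agents never report truthfully, the selection rule for $Q$ cannot be conditioned on their private information; and because each excluded agent contributes the same constant $\Bar{\tau}$ regardless of identity, while the within-coalition value is already pinned down optimally by the previous step, the agenda-setter's payoff is invariant to which $q-1$ agents besides $a$ are placed in $Q$. She is thus indifferent across all $Q \in \mathcal{Q}$, so selecting the coalition by a lottery independent of the type profile is optimal, which is the sense in which the coalition is randomly chosen.

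The step I expect to be most delicate is the last one: arguing that random, type-independent selection is genuinely optimal rather than merely feasible. One must rule out that a type-contingent membership rule could loosen the binding participation constraint and raise the objective. I would handle this by combining the two preceding observations: the excluded agents' total contribution is a fixed constant no matter who is excluded, and for every fixed $Q$ the within-coalition problem is already solved optimally by \Cref{thrm:u}; hence no reshuffling of membership, contingent or otherwise, can increase the value, and indifference---and therefore optimality of a type-independent lottery---follows.
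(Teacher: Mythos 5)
Your decomposition is exactly the route the paper has in mind; in fact the paper offers no formal proof of this proposition at all, only the surrounding remark that Problem \ref{sproblem} differs from Problem \ref{mproblem} solely in the outsiders' constraints and that the agents in $N \setminus Q$ can be taxed up to $\Bar{\tau}$. Your first two steps make this precise and are sound: with neither an incentive nor a participation constraint imposed on $i \in N \setminus Q$, type-contingent taxation of outsiders unravels to the type-independent cap $\Bar{\tau}$, and substituting the binding resource constraint turns the program conditional on $Q$ into the unanimity program \ref{uproblem} on the $q$ members plus the additive constant $(n-q)\Bar{\tau}$, so \Cref{thrm:u} applies within $Q$.

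The gap is in your third step. The inference ``for every fixed $Q$ the within-coalition problem is already solved optimally, hence the agenda-setter's payoff is invariant to which agents are placed in $Q$'' is a non sequitur: optimality of the continuation for each $Q$ does not equalize the optimal \emph{values} across $Q$. The value of the unanimity program on $Q$ depends on the type distributions of its members, and the model allows heterogeneous $F_i$; for instance, when $g^\circ$ is low the optimal provision solves
\begin{equation*}
\left[\theta_a + \sum_{i \in Q\setminus\{a\}}\left(\theta_i - \frac{1-F_i(\theta_i)}{f_i(\theta_i)}\right)\right]\phi'(g(\theta)) = 1 \ ,
\end{equation*}
so a coalition whose members have stochastically higher virtual values yields a strictly higher ex ante payoff, and the optimal type-independent selection is the deterministic $\argmax$ over $\mathcal{Q}$, not a lottery. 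Indifference across all $Q \in \mathcal{Q}$ --- and hence optimality of random selection in the literal sense --- requires ex ante symmetry, $F_i \equiv F$ for all $i$. What survives in general is only the weaker claim you state first: since outsiders cannot be screened, the selection rule cannot condition on any type information, so ``randomly chosen'' can only mean type-independent. To close the proof you should either invoke (or impose) symmetry of the $F_i$, under which every $Q \in \mathcal{Q}$ gives the same unanimity value and any lottery is optimal, or restate the conclusion as type-independence of the coalition choice.
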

\subsubsection{Stochastic Public-Good Provision}
Since the utility function is linear in types and quasilinear in the transfers, the agents are risk-neutral. Moreover, the agenda-setter 's objective function is concave in the public good. Therefore, it is straightforward to show that there is no gain in proposing a stochastic public-good provision.
\begin{proposition} \label{prop:stochasticg}
    Deterministic public-good provision is optimal.
\end{proposition}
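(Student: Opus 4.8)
\textbf{Proof proposal for Proposition~\ref{prop:stochasticg} (Deterministic public-good provision is optimal).}

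The plan is to compare an arbitrary stochastic public-good provision rule against its deterministic ``certainty-equivalent'' counterpart and show that the latter weakly improves the agenda-setter's objective while preserving feasibility. Concretely, suppose that at each reported type profile $\theta$ the mechanism prescribes a random public-good level $\tilde{g}(\theta)$ drawn from some distribution, together with transfers. I would first replace $\tilde{g}(\theta)$ by its conditional mean $\bar{g}(\theta) = \E[\tilde{g}(\theta)]$ and argue that this deterministic rule is both feasible and at least as profitable. The argument rests on two observations already available in the excerpt: each agent's utility $u_i(\theta_i, g, t_i) = \theta_i \phi(g) - t_i$ is concave in $g$ because $\phi$ is concave, and the agenda-setter's per-profile payoff $\theta_a \phi(g) - g$ is likewise concave in $g$.

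The key steps, in order, are as follows. First I would record that concavity of $\phi$ gives, by Jensen's inequality, $\E[\phi(\tilde{g}(\theta))] \leq \phi(\E[\tilde{g}(\theta)]) = \phi(\bar{g}(\theta))$ for every type. Second, I would examine the constraints of the problem: the envelope condition (\ref{eq:envelope}) determines information rents through $\phi(g(\theta))$, so under the deterministic rule the relevant quantity is $\phi(\bar{g}(\theta))$, whereas under the stochastic rule the agent's expected rent depends on $\E[\phi(\tilde{g}(\theta))]$; since the former dominates the latter, the participation and incentive-compatibility constraints can be met with (weakly) smaller transfers to the non-setter agents. Third, I would check the resource constraint $g \leq \sum_i t_i$: because $c(g)=g$ is linear, the expected cost of the random rule equals the cost $\bar{g}(\theta)$ of its mean, so passing to $\bar{g}(\theta)$ does not tighten feasibility. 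Fourth, I would assemble these into the objective: the agenda-setter's payoff $\theta_a \phi(g(\theta)) - t_a(\theta)$ improves on the $\phi$ term via Jensen and is not harmed on the transfer side, since any rents that must be paid out are weakly lower under $\bar{g}$. Combining, the deterministic rule $\bar{g}$ yields a weakly higher value, so deterministic provision is optimal.

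I expect the main obstacle to be the careful bookkeeping of the transfer side rather than the application of Jensen's inequality itself. The subtlety is that replacing $\tilde g$ by $\bar g$ changes both the cost of production and the size of the information rents that must be conceded, and one must verify that these two effects move in a mutually reinforcing (or at worst neutral) direction for the agenda-setter. Since the cost is linear and the rent-bearing term $\phi(g)$ is concave, both effects favor the deterministic rule, but making this precise requires writing the full per-profile payoff in the rent-adjusted form analogous to $\sigma(\theta, \gamma(\theta), g)$ from \Cref{lemma:objfct} and confirming that the mean-preserving contraction from $\tilde g$ to $\bar g$ raises every $\phi$-dependent summand. A clean way to sidestep the bookkeeping is to note that the rewritten objective (\ref{eq:objfct}) is, for fixed $\gamma^*$, a concave function of $g(\theta)$ pointwise in $\theta$; randomizing over $g$ can then only lower its expectation by Jensen, so the deterministic maximizer $\xi(\gamma^*(\theta_i),\theta)$ already identified in \Cref{thrm:u} and \Cref{thrm:m} dominates any lottery, which is exactly the claim.
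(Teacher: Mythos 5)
Your primary route---replacing the lottery $\tilde g(\theta)$ by its mean $\bar g(\theta)=\E[\tilde g(\theta)]$---has a genuine gap at precisely the transfer-bookkeeping step you flagged as the main obstacle. Passing from $\E[\phi(\tilde g)]$ to $\phi(\bar g)\geq \E[\phi(\tilde g)]$ does \emph{not} let the agenda-setter meet the constraints ``with weakly smaller transfers'': by the envelope condition (\ref{eq:envelope}), $dU_i/d\theta_i=\phi(g(\theta))-\Bar{v}_i'(\theta_i,g^\circ)$, so raising $\phi(g)$ pointwise steepens the rent profile and \emph{increases} the rents conceded over regions where incentives bind downward. Whether the setter nets a gain is governed by the virtual-value coefficient $\sum_{i}\theta_i-\sum_{i\neq a}\bigl(\gamma^*(\theta_i)-F_i(\theta_i)\bigr)/f_i(\theta_i)$ multiplying $\phi(g)$ in (\ref{eq:objfct}), and nothing in the model forces this bracket to be nonnegative pointwise; where it is negative, raising $\phi(g)$ hurts the setter. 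The paper sidesteps this ambiguity by taking the certainty equivalent in $\phi$ rather than in $g$: it defines the deterministic level through $\phi(g(\theta))=\E_{\mu}\left[\phi(\tilde g(\theta))\mid\theta\right]$. Since utilities are linear in $\phi(g)$, every agent is exactly indifferent between the lottery and this deterministic level, so incentive compatibility, participation, and the entire rent structure are frozen; the only change is on the cost side, where concavity of $\phi$ and Jensen give $g(\theta)\leq\E_{\mu}[\tilde g(\theta)]$, a pure saving under linear cost. In short: your construction fixes the cost and perturbs the rents in an ambiguous direction, while the paper's construction fixes the rents and unambiguously lowers the cost.

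Your closing ``clean way'' paragraph is essentially the paper's second step---its proof likewise concludes with $\sigma(\theta,\gamma(\theta),g)\geq\sigma\left(\theta,\gamma(\theta),\E_{\mu}[g]\right)$ by appeal to concavity of the objective---so that fallback would rescue the proposition, with one caveat. The paper only asserts strict \emph{quasi}-concavity of (\ref{eq:objfct}) in $g$, and quasi-concavity does not support Jensen; pointwise concavity of the integrand holds only where the virtual-value bracket is nonnegative. Where the bracket is negative, the integrand is decreasing in $g$, so the deterministic maximizer still weakly dominates any lottery via the trivial bound $\E_{\mu}[\psi(\tilde g)]\leq\sup_g\psi(g)$; an airtight version of your fallback should split into these two cases, or first perform the paper's certainty-equivalent substitution, which also delivers the implementability (monotonicity in $\theta_i$ of the resulting deterministic rule) that your relaxed-program argument silently assumes.
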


\section{Conclusions} \label{sec:conclusions}
In this paper, I embedded a mechanism design problem about the provision of a public good into a political economy framework. In line with the previous contributions from the screening literature, the shape of the outside option crucially determines the optimal public-good provision and coalition formation. Several directions for future research remain to be pursued.

First, dominant-strategy implementation circumvents the issue of the `informed principal' in the sense of \cite{myerson1983}. An analysis in Bayesian implementation would raise significant complications, as there is no guarantee that the revelation principle holds.\footnote{See \cite{mylovanov2014} and the references therein.} However, relaxing the requirement of dominant-strategy implementation constitutes a promising direction for future research.

Second, exploring public-good provision and coalition formation in dynamic setting with limited commitment, along the lines of \cite{doval2022}, would also constitute an interesting exercise, despite the well-known difficulties in analyzing dynamic mechanism design problems with endogenous outside option.

Finally, the paper considered a single dimension of private information. Multidimensional screening problems are well-known to pose severe difficulties (e.g., \cite{rochet1998}). Nonetheless, members of organizations are likely to be heterogeneous along different dimensions of private information. Making progress on extending the model to multiple dimensions would provide additional insights into the coalition-formation problem.

\pagebreak

\appendix

\section{VCG Mechanisms} \label{sec:fb}

We now consider the class of VCG mechanisms and see whether they can implement the `efficient' (utilitarian first-best) public-good level.\footnote{The discussion in this section follows \cite{börgers2015}.}
Is this allocation implementable? We first need two definitions.
\begin{definition}
The public-good level $g$ is \textit{efficient} if it solves
\begin{equation} \label{eq:fb}
    \sum_N \theta_i \phi'(g)=1 \ .
\end{equation}
\end{definition}
\begin{definition}
A direct mechanism $(g,t_i)_{i\in N}$ is called a `Vickrey-Clarke-Groves'(VCG) mechanism if $g$ is efficient and if for every $i \in N$ there is a function
\begin{equation*}
    \tau_i: \Theta_i \longrightarrow \mathbb{R}
\end{equation*}
such that
\begin{equation*}
    t_i(\theta)= -\sum_{j \neq i}u_j(\theta, g(\theta))+\tau_i(\theta_{-i}) \ .
\end{equation*}
\end{definition}
\begin{proposition}[\cite{krishna2001}]
Suppose that for every $i \in N$ the set $\Theta_i$ is a convex subset of a finite-dimensional Euclidean space. Moreover, assume that for every $i\in N$ the function $u_i(\theta_i, g(\theta))$ is a convex function of $\theta_i$. Suppose that  $(g, t_i)_{i \in N}$ is a dominant strategy incentive-compatible mechanism. Then, a direct mechanism with the same $(g(\theta), t'_i)_{i \in N}$ is dominant strategy incentive-compatible if and only if for every $i \in N$ and every $\theta_{-i}$ there is a number $\tau_i(\theta_{-i}) \in \mathbb{R}$ such that
\begin{equation*}
    t'_i(\theta)=t_i(\theta) + \tau_i(\theta_{-i}) \quad \forall \theta \in \Theta \ .
\end{equation*}
\end{proposition}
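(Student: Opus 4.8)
The plan is to exploit the fact that dominant-strategy incentive compatibility decouples across the opponents' profiles: for each agent $i$ and each fixed $\theta_{-i}$, the incentive constraints reduce to a single-agent screening problem in which $i$ picks a report $\hat\theta_i$ to maximize $u_i(\theta_i, g(\hat\theta_i,\theta_{-i})) - t_i(\hat\theta_i,\theta_{-i})$. The whole argument is then run $\theta_{-i}$ by $\theta_{-i}$, and the convexity hypothesis on $u_i(\cdot, g)$ serves to make the agent's indirect utility a convex function of his own type. I would dispatch the easy ("if") direction first: if $t'_i(\theta) = t_i(\theta) + \tau_i(\theta_{-i})$, then for fixed $\theta_{-i}$ the added term is constant across reports $\hat\theta_i$ and cancels from both sides of every incentive constraint of agent $i$, so $(g,t'_i)_{i\in N}$ inherits dominant-strategy incentive compatibility from $(g,t_i)_{i \in N}$ with the same $g$.

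For the ("only if") direction, I would fix $\theta_{-i}$ and define the indirect value function
\[
V_i(\theta_i) = \sup_{\hat\theta_i \in \Theta_i} \big[\, u_i(\theta_i, g(\hat\theta_i,\theta_{-i})) - t_i(\hat\theta_i,\theta_{-i}) \,\big].
\]
Since for each fixed $\hat\theta_i$ the map $\theta_i \mapsto u_i(\theta_i, g(\hat\theta_i,\theta_{-i})) - t_i(\hat\theta_i,\theta_{-i})$ is convex in $\theta_i$ by hypothesis, $V_i$ is a pointwise supremum of convex functions, hence convex on the convex set $\Theta_i$; in particular it is locally Lipschitz and differentiable almost everywhere. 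Dominant-strategy incentive compatibility of $(g,t_i)_{i \in N}$ says precisely that the supremum is attained at the truthful report $\hat\theta_i = \theta_i$, so $V_i(\theta_i) = u_i(\theta_i, g(\theta)) - t_i(\theta)$.

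The key step is an envelope argument. Fixing the report at some $\theta_i^0$, the function $\theta_i \mapsto u_i(\theta_i, g(\theta_i^0,\theta_{-i})) - t_i(\theta_i^0,\theta_{-i})$ lies weakly below $V_i$ and touches it at $\theta_i^0$; hence $\nabla_{\theta_i} u_i(\theta_i^0, g(\theta_i^0,\theta_{-i}))$ is a subgradient of $V_i$ at $\theta_i^0$. At a point of differentiability the subgradient is unique, so
\[
\nabla V_i(\theta_i) = \nabla_{\theta_i} u_i\big(\theta_i, g(\theta)\big) \qquad \text{for a.e. } \theta_i,
\]
an expression that depends on the mechanism only through the allocation rule $g$, never through the transfers. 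Given two dominant-strategy incentive-compatible mechanisms $(g,t_i)$ and $(g,t'_i)$ sharing the same $g$, the associated indirect utilities $V_i, V'_i$ are convex with the same gradient a.e., so by the uniqueness-up-to-a-constant property of convex functions with a prescribed gradient field they differ by a constant in $\theta_i$. Because $g$ is common, $V_i - V'_i = t'_i(\theta) - t_i(\theta)$, and this being constant in $\theta_i$ means it is a number depending only on $\theta_{-i}$; calling it $\tau_i(\theta_{-i})$ yields $t'_i(\theta) = t_i(\theta) + \tau_i(\theta_{-i})$. When $\Theta_i$ is one-dimensional, as in the model of \Cref{sec:model}, I could bypass the general convex-analysis fact via the explicit representation $V_i(\theta_i) = V_i(\underline\theta) + \int_{\underline\theta}^{\theta_i} \nabla_{\theta_i} u_i(s, g(s,\theta_{-i}))\,ds$, whose integrand depends only on $g$, so the two indirect utilities agree up to the constant $V_i(\underline\theta) - V'_i(\underline\theta)$.

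The main obstacle is entirely in the "only if" direction, and within it the two measure-theoretic points: justifying the envelope identity for $\nabla V_i$ almost everywhere (which rests on convexity of $V_i$ together with the supporting-from-below construction), and invoking that two convex potentials with coinciding gradients a.e. on a connected convex domain differ by a constant. Once these are in place the conclusion is bookkeeping, since the common allocation rule $g$ makes the gradient field — and hence $V_i$ up to an additive constant in $\theta_i$ — independent of the transfer rule.
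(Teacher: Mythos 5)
Your proof is correct, and it matches the paper's treatment in the only sense available: the paper does not prove this proposition at all but imports it by citation from \cite{krishna2001} (the Krishna--Maenner convex-potential theorem), and your argument is essentially a faithful reconstruction of that source's proof --- fix $\theta_{-i}$, note that the indirect utility $V_i$ is a pointwise supremum of convex functions and hence convex, extract the envelope identity $\nabla V_i(\theta_i)=\nabla_{\theta_i}u_i(\theta_i,g(\theta))$ at points of differentiability via the supporting-minorant/subgradient argument, and conclude that two convex potentials with the same gradient field a.e.\ on a convex (hence connected) domain differ by a constant in $\theta_i$, which must then be a function $\tau_i(\theta_{-i})$. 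The only hypotheses worth flagging are differentiability of $u_i$ in $\theta_i$ for the envelope step (automatic in this model, where $u_i$ is linear in $\theta_i$, and replaceable by a subgradient-selection argument in general) --- and your one-dimensional integral representation $V_i(\theta_i)=V_i(\underline{\theta})+\int_{\underline{\theta}}^{\theta_i}\phi(g(s,\theta_{-i}))\,ds$ indeed bypasses the measure-theoretic machinery entirely for the type spaces actually used in this paper.
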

\begin{corollary}[\cite{börgers2015}]
Suppose that for every $i \in N$ the set $\Theta_i$ is a convex subset of a finite-dimensional Euclidean space. Moreover, assume that for every $i\in N$ the function $u_i(\theta, g(\theta))$ is a convex function of $\theta_i$. Suppose that  $(g, t_i)_{i \in N}$ is a dominant strategy incentive-compatible mechanism, and suppose that $g$ is efficient. Then, $(g, t_i)_{i \in N}$ is a VCG mechanism.
\end{corollary}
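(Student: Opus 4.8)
The plan is to deduce the corollary from the Proposition: it suffices to exhibit \emph{one} VCG mechanism with the given efficient decision rule $g$ that is dominant-strategy incentive-compatible, and then let the Proposition pin down every other incentive-compatible mechanism sharing that $g$. Concretely, I would fix $g$ and set $\tilde t_i(\theta) := -\sum_{j\neq i} u_j(\theta, g(\theta))$, i.e.\ the VCG (Groves) transfers obtained by taking $\tau_i \equiv 0$ in the definition. The argument then has three moves: (i) verify that $(g,\tilde t_i)_{i\in N}$ is dominant-strategy incentive-compatible; (ii) apply the Proposition with this mechanism as the base; (iii) read off the VCG form of the given $t_i$.

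For (i), I would compute agent $i$'s payoff from a report $\hat\theta_i$ when every other agent reports truthfully. With the Groves transfers, agent $i$'s net utility $\theta_i\phi(g(\hat\theta_i,\theta_{-i})) - \tilde t_i(\hat\theta_i,\theta_{-i})$ collapses, up to an additive term depending on $\theta_{-i}$ alone, to the total surplus $\sum_{k\in N}\theta_k\,\phi(g(\hat\theta_i,\theta_{-i})) - g(\hat\theta_i,\theta_{-i})$ evaluated at the induced public-good level. The one subtlety is that the production cost $c(g)=g$ must be carried into the transfer (equivalently, $u_j(\theta,g(\theta))$ is read net of the cost), so that agent $i$ internalizes the \emph{entire} social surplus rather than merely the sum of the others' valuations; otherwise the payoff is increasing in the report and truth-telling fails. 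Because $g$ is efficient, $g(\theta)$ solves $\sum_N\theta_i\phi'(g)=1=c'(g)$, and since $\phi$ is concave with positive aggregate value (as required for an interior efficient level) the map $g\mapsto \sum_k\theta_k\phi(g)-g$ is concave, so this first-order condition identifies the \emph{global} maximiser. Hence $\hat\theta_i=\theta_i$ is optimal for every $\theta_{-i}$, and truth-telling is a dominant strategy.

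For (ii) and (iii), observe that $(g,\tilde t_i)_{i\in N}$ is dominant-strategy incentive-compatible and carries the decision rule $g$, while the convexity of $u_i(\theta, g(\theta))$ in $\theta_i$ assumed in the statement is a property of $g$ alone and so applies to this mechanism too. The Proposition, applied with $(g,\tilde t_i)_{i\in N}$ as the base mechanism, then states that every dominant-strategy incentive-compatible mechanism with decision rule $g$ has transfers of the form $\tilde t_i + \tau_i(\theta_{-i})$. The given $(g,t_i)_{i\in N}$ is such a mechanism, so there exist functions $\tau_i:\Theta_{-i}\to\mathbb{R}$ with $t_i(\theta)=\tilde t_i(\theta)+\tau_i(\theta_{-i}) = -\sum_{j\neq i}u_j(\theta,g(\theta))+\tau_i(\theta_{-i})$. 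Since $g$ is efficient by hypothesis, this is precisely the definition of a VCG mechanism, which closes the argument.

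I expect the only genuinely substantive step to be (i) — the classical fact that Groves transfers make the efficient rule dominant-strategy implementable. Within it, the point demanding care is that the efficient $g$ be a \emph{global} maximiser of the surplus (guaranteed here by concavity of $\phi$) rather than a mere critical point, together with correct bookkeeping of the linear cost inside the transfer. Steps (ii)–(iii) are then a direct invocation of the Proposition and require no further work.
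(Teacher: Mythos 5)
The paper states this corollary without its own proof, citing \cite{börgers2015}, and your argument reproduces exactly the standard proof from that reference: exhibit one dominant-strategy incentive-compatible mechanism with the efficient rule $g$ --- the Groves mechanism, with the cost carried into the transfer so that agent $i$'s report-contingent payoff is the full surplus $\sum_{k}\theta_k\phi(g)-g$ --- and then invoke the payoff-equivalence Proposition with it as the base mechanism to force $t_i(\theta)=-\sum_{j\neq i}u_j(\theta,g(\theta))+\tau_i(\theta_{-i})$, which is the VCG form. Your two points of care (that omitting the linear cost from the transfer would make the payoff increasing in the report and destroy truth-telling, and that concavity of $\phi$ makes the first-order condition $\sum_N\theta_i\phi'(g)=1$ pick out the global maximiser) are exactly the right ones, so the proposal is correct and essentially identical to the cited proof.
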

In our setting, the two convexity conditions are met. Hence, VCG mechanisms are the only mechanisms that make efficient public-good level decision rules dominant strategy incentive-compatible.\\
However, we have the following result.
\setcounter{theorem}{-1}
\begin{theorem} \label{thrm:VCG}
No VCG mechanism is implementable.
\end{theorem}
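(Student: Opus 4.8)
The plan is to show that the real obstruction is not feasibility but \emph{exact budget balance}, and then to exploit the non-separability of the Groves correction terms. First I would use the Corollary above: every dominant-strategy incentive-compatible mechanism with an efficient decision rule is a VCG mechanism, so implementing the (utilitarian) first best in dominant strategies amounts to finding a VCG mechanism that wastes no resources. Wasting resources is ruled out at the first best because, if $\sum_i t_i(\theta)=g(\theta)+\sigma(\theta)$ with $\sigma(\theta)>0$, then the realized aggregate utility $\sum_i u_i=(\sum_j\theta_j)\phi(g)-\sum_i t_i$ falls short of the first-best surplus $(\sum_j\theta_j)\phi(g)-g$ by exactly $\sigma(\theta)$. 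Hence the first best forces the resource constraint to bind, $\sum_i t_i(\theta)=g(\theta)$ for all $\theta$. I would flag explicitly that mere feasibility $\sum_i t_i\ge g$ is always attainable (e.g. by setting each $\tau_i$ equal to a large enough constant on the compact type space), so the entire content of the theorem lies in the equality.

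Next I would sum the VCG transfers. Writing $S:=\sum_j\theta_j$ and reading $u_j(\theta,g(\theta))=\theta_j\phi(g(\theta))$, one gets
\[
\sum_i t_i(\theta)=-\sum_i\sum_{j\ne i}\theta_j\phi(g(\theta))+\sum_i\tau_i(\theta_{-i})=-(n-1)\phi(g(\theta))\,S+\sum_i\tau_i(\theta_{-i}).
\]
Imposing budget balance yields
\[
\sum_i\tau_i(\theta_{-i})=g(\theta)+(n-1)S\,\phi(g(\theta)).
\]
The crucial structural observation is that the efficient level solves $\sum_j\theta_j\,\phi'(g)=1$, i.e. $\phi'(g)=1/S$, so $g(\theta)$ depends on $\theta$ only through the aggregate $S$; write $g(\theta)=\gamma(S)$. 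Consequently the right-hand side is a function of $S$ alone, $\tilde H(S):=\gamma(S)+(n-1)S\,\phi(\gamma(S))$.

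The heart of the argument is a separability obstruction. The left-hand side $\sum_i\tau_i(\theta_{-i})$ is a sum of $n$ functions, the $i$-th of which does not depend on $\theta_i$. Applying the mixed partial $\partial^n/\partial\theta_1\cdots\partial\theta_n$ annihilates every summand (each $\tau_i$ is killed by $\partial_{\theta_i}$), so the left-hand side has vanishing $n$-th cross derivative; the right-hand side, being $\tilde H(S)$ with $S=\sum_j\theta_j$, has cross derivative $\tilde H^{(n)}(S)$. Budget balance therefore forces $\tilde H^{(n)}\equiv 0$ on the nondegenerate interval $S\in[n\underline\theta,n\overline\theta]$, i.e. $\tilde H$ must coincide there with a polynomial of degree at most $n-1$.

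The final, and main, step is to show that $\tilde H$ is \emph{not} such a polynomial. Since $\phi$ is strictly concave, $\phi'$ is strictly decreasing, $\gamma(S)=(\phi')^{-1}(1/S)$ is a well-defined, strictly increasing, non-affine map, and differentiating $\phi'(\gamma(S))=1/S$ gives $\gamma'(S)=-1/\big(S^2\phi''(\gamma(S))\big)$, whence $\tilde H'(S)=(n-1)\phi(\gamma(S))+n\,\gamma'(S)$. I would then express the successive derivatives $\tilde H^{(k)}$ through $\phi$ and its derivatives evaluated along $\gamma$ and argue that $\tilde H^{(n)}\equiv 0$ would impose a differential relation that only an affine $\phi$ can satisfy, contradicting strict concavity. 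This last verification—ruling out the knife-edge possibility that the particular combination $\gamma(S)+(n-1)S\phi(\gamma(S))$ is a low-degree polynomial—is the step I expect to be the genuine obstacle; everything preceding it is bookkeeping. Once it is in place, $\tilde H^{(n)}\not\equiv 0$ contradicts the requirement derived above, so no VCG mechanism can balance the budget and hence none is implementable.
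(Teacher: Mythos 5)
Your route is genuinely different from the paper's, and the middle of it (summing the Groves transfers, noting that the efficient rule depends on $\theta$ only through $S=\sum_j\theta_j$, and the Walker/Green--Laffont mixed-difference argument forcing $\tilde H^{(n)}\equiv 0$) is correct as far as it goes. But it proves the wrong theorem. The paper's notion of implementability is not exact budget balance: its proof explicitly grants that ``external funds are needed to balance the budget'' and derives the impossibility instead from the \emph{agenda-setter's incentives} --- starting from any VCG mechanism it perturbs $g^*$ to $g^*-\delta$, holds the lowest type's utility fixed, compensates every other agent so incentive compatibility is preserved, and shows the proposer's payoff rises by $O(\epsilon)>0$; hence no VCG allocation survives the politician's own optimization, money or no money. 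Your reduction of ``implementable'' to ``budget balanced at the first best'' also sits uneasily with the paper's definitions: a VCG mechanism only requires the \emph{decision rule} to be efficient (the Samuelson condition on $g$), so money burning does not contradict VCG-ness, and --- as you yourself note --- mere feasibility $\sum_i t_i\ge g$ is trivially attainable. The equality you impose is therefore an added hypothesis, not a consequence of the statement.

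More importantly, the step you flag as ``the genuine obstacle'' is not merely unfinished; it is false at the paper's level of generality. Take $n=2$ and $\phi(g)=ag/(g+\beta)$ with $a,\beta>0$: this is twice differentiable, increasing, strictly concave, and satisfies $\phi(0)=0$, so it is admissible. The Samuelson condition $S\phi'(g)=1$ gives $\gamma(S)=\sqrt{a\beta S}-\beta$ (interior whenever $2\underline{\theta}\ge\beta/a$), and a one-line computation gives $\tilde H(S)=\gamma(S)+S\,\phi(\gamma(S))=aS-\beta$, which is affine. Consequently $\tau_1(\theta_2)=a\theta_2-\beta/2$ and $\tau_2(\theta_1)=a\theta_1-\beta/2$ deliver an \emph{exactly budget-balanced}, dominant-strategy incentive-compatible, efficient VCG mechanism --- the Green--Laffont obstruction simply does not bite on this one-dimensional domain. (Your argument does work for the paper's illustrative $\phi(g)=\ln(1+g)$, where $\tilde H$ contains $(n-1)S\ln S$, whose $n$-th derivative $(-1)^n(n-2)!\,S^{1-n}$ never vanishes; but the paper claims the theorem ``for arbitrary functions $\phi$,'' which your separability route cannot deliver.) A minor, fixable point by comparison: the mixed-partial step presumes each $\tau_i$ is $n$-times differentiable; the standard repair is to use $n$-fold finite differences, which annihilate each $\tau_i$ without any smoothness assumption. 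The main gap, however, is structural: the theorem's content here is an informed-designer deviation argument, not an accounting impossibility, and the accounting impossibility is not true for all admissible $\phi$.
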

\begin{proof}
\ For simplicity, assume $\phi(g)=\ln(1+g).$
Consider the public good $g^*$ and the transfers $t^*_i(\theta, g^*)_{i \in N \setminus \{a\}}$ of a VCG mechanism. By standard arguments (\cite{green1979a}), transfers do not cover the cost of the public good and external funds are needed to balance the budget.
Now, pick the agent with the lowest type, and denote it $\theta_j$. Consider agent $\theta_j$'s variation in utility brought about by a slight decrease in $g^*$ offset by a decrease in $t^*_j$. Formally
\begin{align*}
    \theta_j\ln(1+g^*-\delta)-\theta_j\ln(1+g^*)&=\underbrace{t^*_j(\theta, g^*-\delta)-t^*_j(\theta, g^*)}_{\epsilon}\\
    \theta_j\ln(\sum_{i \in N} \theta_i - \delta)-\theta_j\ln(\sum_{i \in N} \theta_i) &= \epsilon \\
    \delta&=\sum_{i \in N} \theta_i (e^{\frac{\epsilon}{\theta_j}}-1)
\end{align*}
By construction, agent $j$'s utility is unchanged. All other agents $\theta_k$, with $k\in N \setminus\{a,j\}$, have a utility loss equal to $ \epsilon- \frac{\theta_k}{\theta_j}\epsilon $. Therefore, for the transfers $t^*_i(\theta, g^*-\delta)_{i \in N \setminus \{a\}}$ to be still incentive-compatible, the proposer needs to compensate them. Now, consider the variation in the proposer's utility.
\begin{align*}
    &\theta_a\ln(1+g^*-\delta)-\theta_a\ln(1+g^*) - \sum_{i \in N \setminus \{a\}} \left( \frac{\theta_i}{\theta_j}-1 \right) \epsilon -[(g^*-\delta)-g^*]\\
    &= \sum_{i \in N} \theta_i (e^{\frac{\epsilon}{\theta_j}}-1)  - \sum_{i \in N} \frac{\theta_i}{\theta_j}\epsilon \\
    &=  O(\epsilon) >0 \ ,
\end{align*}
where the last equality follows from the Taylor expansion of $e^x$ around $x=0$. This argument generalizes to arbitrary functions $\phi(g)$. Hence, $g^*$ is not implementable.
\end{proof}
\Cref{thrm:VCG} has a simple interpretation. Suppose that the efficient public-good level is known and it is set in the Constitution. Moreover, suppose that there is availability of external funds for the set of agents to implement the efficient public-good provision. Even in that case, it would not be possible to find a mechanism to implement that public-good level in dominant strategies.

% \section{Informed Principal} \label{sec:informed}
% Suppose that the agenda-setter cannot fully commit to reporting her true parameter $\theta_a$ after collecting the preferences of the other agents.\\
% Recall the efficient public-good level in (\ref{eq:fb}) and denote it $g^{FB}$.
% Then, we have the following result.
% \begin{proposition}
% The optimal report $\hat{\theta}_a$ is given by
% \begin{equation*}
%     \hat{\theta}_a = \begin{cases}
%         0 &\mbox{if $g^*(\theta)\geq g^{FB}$}, \\
%         \min\{g^{FB}-g^*(\theta),\overline{\theta} \} &\mbox{if $g^*(\theta) < g^{FB}$},
%     \end{cases}
% \end{equation*}
% if $g^*(\theta) \neq g^\circ$. Otherwise, $\hat{\theta}_a = \theta$, for any $\theta \in [\underline{\theta}, \overline{\theta}]$.
% \end{proposition}
% \begin{proof}
%     \ The agenda-setter's objective function is concave and maximized at $g^{FB}$. Therefore, any reduction in the public-good level by reporting a lower type has a first-order effect on the objective function only when public-good provision is higher than the efficient level.
% \end{proof}
\section{Proofs} \label{sec:appendix}

\subsection{Proof of \Cref{lemma:ic}}
\begin{proof}   
\ Consider $\theta_i > \theta_j$. By incentive compatibility, we have that
\begin{equation} \label{eq:ic_i}
    U_i(\theta_i) \geq \theta_i \phi(g(\theta_j, \theta_{-i})) - t(\theta_j) - \Bar{v}_i(\theta_i, g^\circ)= U_j(\theta_j) + (\theta_i - \theta_j)\phi(g(\theta_j, \theta_{-i})) - (\Bar{v}_i(\theta_i, g^\circ)-\Bar{v}_j(\theta_j, g^\circ))
\end{equation}
and
\begin{equation} \label{eq:ic_j}
    U_j(\theta_j) \geq \theta_j \phi(g(\theta_i, \theta_{-j})) - t(\theta_i) - \Bar{v}_j(\theta_j, g^\circ)= U_i(\theta_i) + (\theta_j - \theta_i)\phi(g(\theta_i, \theta_{-j})) - (\Bar{v}_j(\theta_j, g^\circ)-\Bar{v}_i(\theta_i, g^\circ))
\end{equation}
which imply
\begin{equation*}
    \phi(g(\theta_i, \theta_{-i})) -\frac{\Bar{v}_i(\theta_i, g^\circ)-\Bar{v}_j(\theta_j, g^\circ)}{\theta_i - \theta_j} \geq \frac{U_i(\theta_i)-U_j(\theta_j)}{\theta_i - \theta_j} \geq \phi(g(\theta_j, \theta_{-j})) -\frac{\Bar{v}_j(\theta_j, g^\circ)-\Bar{v}_i(\theta_i, g^\circ)}{\theta_i - \theta_j}.
\end{equation*}
As $\theta_j \longrightarrow \theta_i$, the conditions in \Cref{lemma:ic} are implied.\\
To establish the converse, assume the above mentioned conditions hold. Then,
\begin{align*}
    U_i(\theta_i)-U_j(\theta_j) &= \int_{\theta_j}^{\theta_i} \phi(g(x, \theta_{-j})) - \Bar{v}'_j(x, g^\circ) \ dx \\
    &\geq \int_{\theta_j}^{\theta_i} \phi(g(\theta_j, \theta_{-j})) - \Bar{v}'_j(\theta_j, g^\circ) \ dx \\
    &= (\theta_i - \theta_j) \phi(g(\theta_j, \theta_{-i}) - (\Bar{v}_i(\theta_i, g^\circ)-\Bar{v}_j(\theta_j, g^\circ)) \ .
\end{align*}
This implies (\ref{eq:ic_i}). An analogous argument proves (\ref{eq:ic_j}).
\end{proof}

\subsection{Proof of \Cref{prop:unanimtd}}
\begin{proof} There are three cases to be considered.\\
\begin{itemize}
    \item[(a)] $dU_i/d\theta_i>0$, $\forall i \in N$.\\
\medskip
The proposer's problem \ref{problem:lu} becomes
\begin{equation*}
\begin{gathered}
\max_{g(\theta), \bold{t}(\theta)} \qquad \int \theta_a \phi(g(\theta)) -t(\theta_a) \ dF(\theta) \\
\text{subject to} \; \; g(\theta) \leq t(\theta_a) + \sum_{i \in N\setminus \{a\}}t(\theta_i),\\  \frac{d g(\theta)}{d \theta} \geq 0, 
 \\ t(\theta_i)= \theta_i\phi(g(\theta_i, \theta_{-i})) - \int_{\underline{\theta}}^{\theta_i} \phi(g(x, \theta_{-i})) dx ,\\
t(\theta_i) \leq \theta_i \phi(g(\theta)) \qquad \forall \; i \in N\setminus\{a\}
\ .
\end{gathered}
\end{equation*}
\ Consider the proposer's expected utility. Since all constraints are binding, it is equal to
\begin{align*}
   &\int_{\Theta} \left[ \theta_a \phi(g(\theta)) - \sum_{i \in N\setminus \{a\}}t(\theta_i) - g(\theta) \right] f(\theta) d\theta\\
   &= \int_{\Theta} \left[ \theta_a \phi(g(\theta)) + \sum_{i \in N\setminus \{a\}} \left[\theta_i\phi(g(\theta)) - \int_{\underline{\theta}}^{\theta_i} \phi(g(x, \theta_{-i})) dx \right] - g(\theta) \right] f(\theta) d\theta\\
    &= \int_{\Theta} \left[\sum_{i \in N} \theta_i\phi(g(\theta)) - \sum_{i \in N\setminus \{a\}} \left(\int_{\underline{\theta}}^{\theta_i} \phi(g(x, \theta_{-i})) dx \right) - g(\theta)\right] f(\theta) d\theta\\
    &= \int_{\Theta} \left[\sum_{i \in N} \theta_i\phi(g(\theta)) - \sum_{i \in N\setminus \{a\}} \left(\frac{1-F_i(\theta_i)}{f_i(\theta_i)} \right)\phi(g(\theta)) - g(\theta)\right] f(\theta) d\theta
\end{align*}
where the last equality follows from Fubini's theorem.
Pointwise maximization with respect to $g(\theta)$ yields
\begin{equation} \label{eq:unanimlow}
    \left[ \theta_a + \sum_{i \in N\setminus \{a\}} \left(\theta_i - \frac{1-F_i(\theta_i)}{f_i(\theta_i)} \right)\right]\phi'(g(\theta)) = 1 \ .
\end{equation}
\Cref{assumpt:regular} ensures the monotonicity constraint is satisfied.
The same goes through for any $g^\circ \leq g^L$.\\
\item[(b)] $dU_i/d\theta_i<0$, $\forall i \in N$.\\
The proposer's problem \Cref{problem:lu} becomes
\begin{equation*}
\begin{gathered}
\max_{g(\theta), \bold{t}(\theta)} \qquad \int \theta_a \phi(g(\theta)) -t(\theta_a) \ dF(\theta) \\
\text{subject to} \; \; g(\theta) \leq t(\theta_a) + \sum_{i \in N\setminus \{a\}}t(\theta_i),\\  \frac{d g(\theta)}{d \theta} \geq 0, 
 \\ t(\theta_i)= \theta_i\phi(g(\theta_i, \theta_{-i})) - \int_{\theta_i}^{\overline{\theta}} \Bar{v}_i(x, g^\circ) - \phi(g(x, \theta_{-i})) \ dx ,\\
\theta_i \phi(g(\theta)) -t(\theta_i) \geq \theta_i \phi(g^\circ)-\frac{g^\circ}{n} \qquad \forall \; i \in N \setminus \{a\}\ .
\end{gathered}
\end{equation*}
\ Consider the proposer's expected utility. Since all constraints are binding, it is equal to
\begin{align*}
   &\int_{\Theta} \left[ \theta_a \phi(g(\theta)) - \sum_{i \in N\setminus \{a\}}t(\theta_i) - g(\theta) \right] f(\theta) d\theta\\
   &= \int_{\Theta} \left[ \theta_a \phi(g(\theta)) + \sum_{i \in N\setminus \{a\}} \left[\theta_i\phi(g(\theta)) - \int_{\theta_i}^{\overline{\theta}} \Bar{v}_i(x, g^\circ) - \phi(g(x, \theta_{-i})) \ dx \right] - g(\theta) \right] f(\theta) d\theta\\
    &= \int_{\Theta} \left[\sum_{i \in N} \theta_i\phi(g(\theta)) - \sum_{i \in N\setminus \{a\}} \left(\int_{\theta_i}^{\overline{\theta}} \Bar{v}_i(x, g^\circ) - \phi(g(x, \theta_{-i})) \ dx \right) - g(\theta)\right] f(\theta) d\theta\\
    &= \int_{\Theta} \left[\sum_{i \in N} \theta_i\phi(g(\theta)) + \sum_{i \in N\setminus \{a\}} \left(\frac{F_i(\theta_i)}{f_i(\theta_i)} \right)\phi(g(\theta)) - \sum_{i \in N\setminus \{a\}} \left(\int_{\theta_i}^{\overline{\theta}} \Bar{v}_i(x, g^\circ)\right) - g(\theta)\right] f(\theta) d\theta \ .
\end{align*}
Pointwise maximization with respect to $g(\theta)$ yields
\begin{equation} \label{eq:unanimti}
    \left[ \theta_a + \sum_{i \in N\setminus \{a\}} \left(\theta_i + \frac{F_i(\theta_i)}{f_i(\theta_i)} \right)\right]\phi'(g(\theta)) = 1 \ .
\end{equation}
Again, by \Cref{assumpt:regular}, the monotonicity constraint is satisfied. Moreover, the same goes through for any $g^\circ \geq g^H$.

\item[(c)]  $g^L< g^\circ < g^H$.\\
Consider a simple three-agent example, with $N=\{\theta_a, \theta_k, \theta_l\}$. Suppose $\theta_a=0$ and $\theta_k>\theta_l \neq 0$. This is a classic screening problem as in \cite{maggi1995}. \Cref{fig:linearscreening} illustrates the problem faced by the agenda-setter.

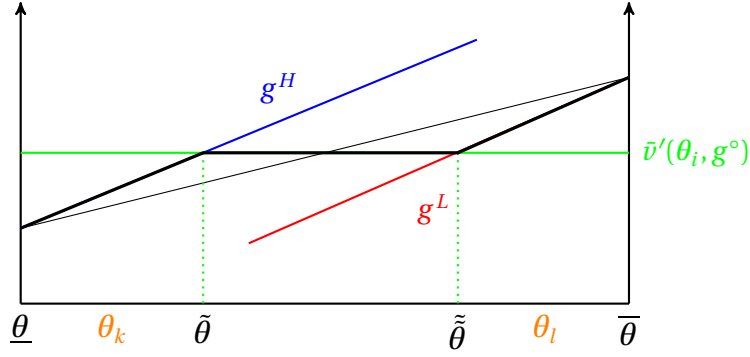
\begin{figure}[ht]
\centering
    \begin{tikzpicture}[-,>=stealth',auto,node distance=1.2cm,
  thick,main node/.style={transform shape, circle,draw,font=\rmfamily, minimum width=5pt}]

    %Axis
    \draw (0,2)node[below]{$\underline{\theta}$} -- (8,2);
    \draw[->] (0,2) -- (0,6);
    \draw[->] (8,2)node[below]{$\overline{\theta}$} -- (8,6);

% g-profiles
   \draw[thin] (0,3) -- (8,5);
   \draw[blue] (0,3) -- (6,5.5);
   \draw[blue] (3,4.5) node[above right]{\small{$g^H$}};
   \draw[red] (8,5) -- (3,2.8);
   \draw[red] (5.8,3.25) node[left]{\small{$g^L$}};

% v-profiles
   \draw[green] (0,4) -- (8,4) node[right]{\small{$\Bar{v}'(\theta_i, g^\circ)$}};

% intercepts
   \draw[green, dotted] (2.4,4)--(2.4,2);
   \draw[green, dotted] (5.75,4)--(5.75,2);

   \draw (2.4,2) node[below]{$\Tilde{\theta}$};
   \draw (5.75,2) node[below]{$\Tilde{\Tilde{\theta}}$};

% solution

   \draw[very thick] (0,3) -- (2.4,4) -- (5.75,4) -- (8,5);
   \draw[orange] (1.2,2) node[below]{$\theta_k$};
   \draw[orange] (6.9,2) node[below]{$\theta_l$};

\end{tikzpicture}
\caption{Two-agent screening example}
\label{fig:linearscreening}
\end{figure}
Suppose $\theta_k \in (\underline{\theta}, \Tilde{\theta})$ and $\theta_l \in (\Tilde{\Tilde{\theta}}, \overline{\theta})$ and recall the IC condition (\ref{eq:linearic}). For both,
    \begin{equation*}
        dU_{k,l}/d\theta_{k,l}=\phi(g(\theta))-\phi(g^\circ) \ .
    \end{equation*}
On the one hand, if $g(\theta) > g^\circ$, then $\theta_k$ has \textit{ex post} the incentive to overstate.
On the other hand, if $g(\theta) < g^\circ$, then $\theta_l$ has \textit{ex post} the incentive to understate.
Hence, $a$ can do no better than choosing the outside option.
\end{itemize}
\end{proof}

\subsection{Proof of \Cref{prop:majoritd}}
\begin{proof}
\ We begin with the following two lemmata.
\begin{lemma} \label{lemma:coalition}
The optimal choice of coalition is
\begin{equation*}
Q^*=\begin{cases}
\overline{Q}&\mbox{if $g^\circ < g_M^L$,}\\
\textrm{any}  \ Q \in \mathcal{Q} &\mbox{if $g_M^L\leq g^\circ \leq g_M^H$,}\\
\underline{Q}&\mbox{if $g^\circ > g_M^H$,}
\end{cases}
\end{equation*}
if $g_M^L < g_M^H$. Otherwise
\begin{equation*}
Q^*=\begin{cases}
\overline{Q}&\mbox{if $g^\circ < g_M^L$,}\\
\underline{Q}&\mbox{if $g^\circ > g_M^L$}.
\end{cases}
\end{equation*}
\end{lemma}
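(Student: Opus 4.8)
\textbf{Proof plan for Lemma~\ref{lemma:coalition}.}
The plan is to exploit the fact that, under a linear reservation utility profile, condition~(\ref{eq:linearic}) makes the sign of each agent's misreporting incentive \emph{type-independent} and determined entirely by the sign of $\phi(g(\theta))-\phi(g^\circ)$, hence by whether $g^*(\theta)\gtrless g^\circ$. I would organize the argument around the three regimes for $g^\circ$ identified in \Cref{prop:majoritd}, and in each regime show that among all coalitions $Q\in\mathcal{Q}$ the stated choice maximizes the agenda-setter's objective~(\ref{eq:objfct}).

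First I would treat the low-$g^\circ$ regime, $g^\circ<g_M^L$. Here $g^*(\theta)>g^\circ$, so by~(\ref{eq:linearic}) every agent has the \emph{ex post} incentive to understate, information rents are increasing in types, and the binding participation constraint is at the lowest included type. The objective~(\ref{eq:objfct}) is increasing in the aggregate value $\Xi(Q)=\sum_{i\in Q}\theta_i$ of the included agents, net of the information-rent terms $\frac{1-F_i(\theta_i)}{f_i(\theta_i)}$; since raising $g$ above $g^\circ$ is profitable precisely when the effective aggregate valuation is large, the agenda-setter wants the coalition with the highest value for the public good, so $Q^*=\overline{Q}$. Symmetrically, for the high-$g^\circ$ regime, $g^\circ>g_M^H$, we have $g^*(\theta)<g^\circ$, all agents overstate, rents are decreasing in types, the binding constraint is at the highest included type, and the effective valuation now carries the $+\frac{F_i(\theta_i)}{f_i(\theta_i)}$ correction; distorting $g$ \emph{below} $g^\circ$ is profitable when the effective aggregate valuation is small, so the optimum is the lowest-value coalition, $Q^*=\underline{Q}$. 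In the intermediate regime $g_M^L\le g^\circ\le g_M^H$ (when $g_M^L<g_M^H$), \Cref{prop:majoritd} gives $g^*(\theta)=g^\circ$, so the objective value is the same for every coalition and any $Q\in\mathcal{Q}$ is optimal.

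For the degenerate case $g_M^L>g_M^H$ I would argue that the intermediate region collapses: there is no value of $g^\circ$ for which proposing the outside option dominates both upward and downward distortions. Comparing the agenda-setter's payoff from implementing $g_M^L$ (with coalition $\overline{Q}$) against implementing $g_M^H$ (with coalition $\underline{Q}$), the two payoff curves as functions of $g^\circ$ cross exactly once, at some threshold that I would identify with $g_M^L$; below it the upward-distortion branch with $\overline{Q}$ wins, above it the downward-distortion branch with $\underline{Q}$ wins. This yields the two-case expression for $Q^*$.

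The main obstacle I anticipate is making the monotonicity-of-objective-in-$\Xi(Q)$ argument fully rigorous when the excluded set $I$ and the bunched cutoff types $\Tilde{\theta},\Tilde{\Tilde{\theta}}$ from \Cref{prop:majoritd} interact with the choice of $Q$: the effective aggregate valuation entering the first-order conditions~(\ref{eq:unanimlow})--(\ref{eq:unanimti}) is not simply $\Xi(Q)$ but $\Xi(Q)$ adjusted by the $|I|\cdot\Tilde{\theta}$ (resp.\ $|I|\cdot\Tilde{\Tilde{\theta}}$) bunching term and the hazard-rate corrections. I would need to verify that, for fixed $q$, reshuffling which agents lie in $Q$ versus $I$ to move toward $\overline{Q}$ (resp.\ $\underline{Q}$) raises the objective monotonically, which requires checking that the marginal gain from including a higher-value agent dominates the change in rents paid to the bunched excluded agents. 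I expect this to follow from \Cref{lemma:partition}/\Cref{lemma:partition2} and the single-crossing structure of~(\ref{eq:envelope}), but it is the step that demands care rather than a one-line appeal to type-independence.
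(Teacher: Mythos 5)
Your regime decomposition and conclusions match the paper, but on the key step your route differs from the paper's and is heavier than necessary. You establish $Q^*=\overline{Q}$ (resp.\ $\underline{Q}$) by arguing that the objective~(\ref{eq:objfct}) is monotone in the coalition's aggregate value $\Xi(Q)$, and you correctly flag as the hard part the verification that reshuffling agents between $Q$ and the excluded set $I$ -- with the bunching terms $|I|\cdot\Tilde{\theta}$ and the hazard-rate corrections -- moves the objective monotonically. The paper avoids this comparison across coalitions entirely: since any incentive-compatible mechanism with $g^*(\theta)>g^\circ$ has, by~(\ref{eq:linearic}), net rents $U_i$ \emph{increasing} over the whole type space, the set of agents whose participation constraint holds is automatically an upper contour set of types (if IR holds for $\theta_1$ it holds for every $\theta_2>\theta_1$). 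So the identity of the coalition is pinned down by IC \emph{feasibility}, not by an objective comparison: the agenda-setter violates as many IRs as permitted, the violated set must be the bottom $n-q$ types, the cutoff $\Tilde{\theta}$ is the lowest type still in the coalition, and $Q^*=\overline{Q}$ follows in one line (symmetrically $\underline{Q}$ for $g^\circ>g_M^H$, where rents are decreasing). In other words, the obstacle you anticipate dissolves under the paper's argument -- there is no admissible mechanism in which a low type is in $Q$ while a higher type is in $I$, so no ``reshuffling'' comparison ever arises. Your approach would buy a payoff-based ranking of all coalitions if it could be completed, but completing it is genuinely delicate for the reason you name, and it is not needed. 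On the degenerate case $g_M^L>g_M^H$, your single-crossing claim that the two payoff branches cross exactly at $g_M^L$ is asserted rather than proved; note, though, that the paper itself is terse here, disposing of it by the same cutoff/rent-monotonicity reasoning rather than a crossing argument, so this is a place where either write-up would need the threshold identification spelled out.
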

\begin{proof}
\ Consider first the case $g^\circ < g_M^L$.
Obviously, the solution from the analogous case in \Cref{prop:unanimtd} is still feasible. However, it is possible for the solution to the majority case to violate some individual rationality constraints, since they are required to hold only for $q-1$ agents. Denote this set of agents 
$I$, with $|I| \leq n-q$.
Condition (\ref{eq:linearic}) requires that information rents be increasing over the whole set of types. Moreover, incentive constraints are pointing in the direction of the lowest type. Also, if the participation constraint is satisfied for type $\theta_1$, then it is also satisfied for $\theta_2>\theta_1$. Therefore, utility maximization requires that the cutoff agent $\Tilde{\theta}$ be the agent with the lowest type still included in the coalition:
\begin{equation*}
    \theta_1, \theta_2, \dots,  \underbrace{\Tilde{\theta}, \dots,  \theta_{n-1}, \theta_n}_{\text{at most} \ q-1} \ .
\end{equation*}
An analogous reasoning applies to the case $g^\circ > g_M^L$, with the corresponding cutoff type $\Tilde{\Tilde{\theta}}$:
\begin{equation*}
    \underbrace{\theta_1, \theta_2, \dots,  \Tilde{\Tilde{\theta}}}_{\text{at most} \ q-1}, \dots,  \theta_{n-1}, \theta_n \ .
\end{equation*}
\end{proof}
The number of agents whose participation constraint is violated depends on the value of $q$.
\begin{lemma} \label{lemma:cardinality}
    The cardinality of the set $I$, $|I|$, is inversely related to $q$.
\end{lemma}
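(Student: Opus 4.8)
The plan is to prove the sharper statement that, whenever the agenda-setter does not merely propose the outside option, the bound of \Cref{thrm:m} is attained, i.e.\ $|I(g^\circ)| = n - q$; the asserted inverse relationship then follows at once, since $n - q$ is strictly decreasing in $q$. The easy half is the inclusion already used in \Cref{thrm:m}: every coalition member $j \in Q$ satisfies its participation constraint and so lies outside $I(g^\circ)$, whence $I(g^\circ) \subseteq N \setminus Q$; as $|Q| = q$ by the definition of $\mathcal{Q}$, this gives $|I(g^\circ)| \le n - q$.

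The substantive step is to show that at the optimum all $n - q$ excluded agents have \emph{strictly} violated constraints. I would first argue that the value of \ref{mproblem} is weakly increasing in the number of participation constraints that are dropped: an agent whose constraint is slack can always be taxed more heavily, relaxing the resource constraint while leaving untouched the incentive-compatibility conditions, which in \ref{mproblem} are imposed on every $i \in N \setminus \{a\}$ irrespective of coalition membership. Hence it is optimal to drop the maximal admissible number of constraints, namely $n - q$, so that $|N \setminus Q| = n - q$. It then remains to verify that each excluded agent is genuinely below, rather than merely at, its reservation level. This follows from the bunching structure of \Cref{lemma:partition} and \Cref{lemma:partition2}: the excluded agents are bunched at the allocation of a cutoff type whose participation constraint binds, and the envelope representation (\ref{eq:envelope}), read along the ordering of $K(g^\circ)$, $L(g^\circ)$ and $M(g^\circ)$, forces their net utility to be strictly negative. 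Thus $N \setminus Q \subseteq I(g^\circ)$, and together with the easy half $|I(g^\circ)| = n - q$.

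Combining the two regimes, $|I(g^\circ)| = n - q$ when $g^*(\theta) \neq g^\circ$ and $|I(g^\circ)| = 0$ when $g^*(\theta) = g^\circ$; in either case $|I(g^\circ)|$ is non-increasing in $q$, which is the claimed inverse relationship. I expect the main obstacle to be the substantive step, specifically ruling out that an excluded agent retains non-negative net utility, which would leave $|I(g^\circ)|$ strictly below $n - q$. This cannot be settled by the relaxation argument alone---dropping a constraint need not make it violated---and the resolution rests entirely on the monotonicity and bunching patterns of \Cref{lemma:partition} and \Cref{lemma:partition2}, which pin down the sign of the excluded agents' information rents.
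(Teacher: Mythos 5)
There is a genuine gap, and it lies exactly in the sharpening you chose to prove. You claim $|I(g^\circ)| = n-q$ whenever $g^*(\theta) \neq g^\circ$, but the paper's own proof of this lemma shows that the bound is in general \emph{not} attained. The paper's argument runs through the first-order condition: when the $|I|$ excluded agents are bunched at the allocation of the cutoff type $\Tilde{\theta}$, optimal provision solves
\[
\left[\theta_a + |I|\cdot\Tilde{\theta} + \sum_{i\in\overline{Q}\setminus\{a\}}\left(\theta_i - \frac{1-F_i(\theta_i)}{f_i(\theta_i)}\right)\right]\phi'(g(\theta)) = 1,
\]
so each additional exclusion raises the virtual-surplus coefficient and pushes $g^*(\theta)$ upward, toward the efficient level (\ref{eq:fb}). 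Violating one more participation constraint is therefore not unconditionally profitable: by concavity, once $g^*$ reaches $g^{FB}$ at some $\Tilde{q}$, further exclusion would distort provision \emph{past} the first-best, and for every $q<\Tilde{q}$ it is optimal to keep the cutoff $\Tilde{\theta}$---and hence $|I|$---fixed. This saturation is precisely why the lemma asserts only an inverse (weakly monotone) relationship, and why \Cref{thrm:m} states $|I(g^\circ)|\leq n-q$ rather than equality. Your stronger statement is false in the regime $q<\Tilde{q}$, where $|I|$ is constant in $q$.

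The step of yours that fails is the relaxation claim that an agent with a slack constraint ``can always be taxed more heavily \dots leaving untouched the incentive-compatibility conditions.'' Since the IC constraints in \ref{mproblem} hold for \emph{all} $i\in N\setminus\{a\}$, the transfers of excluded agents are not free instruments: as the Claim inside the paper's proof of \Cref{prop:majoritd} shows, preserving IC forces $t(\theta_i)=t(\Tilde{\theta})$ for all $\theta_i\leq\Tilde{\theta}$, so extraction per excluded agent is bounded and the true marginal cost of excluding one more agent is the induced allocation distortion in the displayed FOC. Dropping a constraint weakly enlarges the feasible set (that much is correct), but it does not follow that the optimum violates it---you acknowledge this---and your attempted patch via \Cref{lemma:partition} and \Cref{lemma:partition2} only establishes that agents strictly below the binding cutoff earn strictly negative net utility; it does not show that the \emph{optimal} cutoff places $n-q$ agents below it. The resolution requires the marginal-benefit/first-best argument above, which your proof never engages; with it, one gets weak monotonicity of $|I|$ in $q$ directly (lowering $q$ enlarges the set of admissible cutoffs, and the optimal cutoff moves up until $g^*$ hits $g^{FB}$, then stops), without the false equality $|I|=n-q$.
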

\begin{proof}
  \ Recall (\ref{eq:unanimlow}). If agents with the lowest types are bunched at the same allocation of a cutoff type $\Tilde{\theta}$, the analogous condition is
\begin{equation}
    \left[ \theta_a + |I| \cdot \Tilde{\theta} + \sum_{i \in \overline{Q}\setminus \{a\}} \left(\theta_i - \frac{1-F_i(\theta_i)}{f_i(\theta_i)} \right)\right]\phi'(g(\theta)) = 1 \ .
\end{equation}
Thus, the optimal public-good provision $g^*(\theta)$ is higher when $I \neq \varnothing$. However, as $q$ decreases further, by concavity of the utility functions the efficient (`first-best') level (\ref{eq:fb}) is reached for some $\Tilde{q}$. For any $q < \Tilde{q}$, it is optimal to maintain the same cutoff type $\Tilde{\theta}$.
\end{proof}
With the help of \Cref{lemma:coalition}, the analogous cases as in \Cref{prop:unanimtd} now follow.
\paragraph{Case I} $dU_i/d\theta_i > 0$, $\forall i \in N$.\\
The agenda-setter's problem \ref{problem:lm} becomes
\begin{equation*}
\begin{gathered}
\max_{g(\theta), \bold{t}(\theta)} \qquad \int \theta_a \phi(g(\theta)) -t(\theta_a) \ dF(\theta) \\
\text{subject to} \; \; g(\theta) \leq t(\theta_a) + \sum_{i \in N\setminus \{a\}}t(\theta_i),\\  \frac{d g(\theta)}{d \theta} \geq 0, 
 \\ t(\theta_i)= \theta_i\phi(g(\theta_i, \theta_{-i})) - \int_{\Tilde{\theta}}^{\theta} \phi(g(x, \theta_{-i})) dx \qquad \forall \; i \in N\setminus\{a\},\\
t(\theta_i) \leq \theta_i \phi(g(\theta)) \qquad \forall \; i \in Q
\ .
\end{gathered}
\end{equation*}
By \Cref{lemma:coalition}, it is then optimal to make the participation constraint of $\Tilde{\theta}$ binding.
\begin{claim}
    For all $\theta_i \leq \Tilde{\theta}$, the participation constraint is violated.
\end{claim}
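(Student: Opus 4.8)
The plan is to exploit the monotonicity of the information rents that Case I forces, combined with the binding participation constraint at the cutoff type $\Tilde{\theta}$. The crucial observation is that incentive compatibility is imposed on \emph{all} agents, not merely those retained in the coalition; hence by \Cref{lemma:ic} the envelope identity $dU_i/d\theta_i = \phi(g(\theta)) - \Bar{v}_i'(\theta_i, g^\circ)$ holds for every $i \in N$, including the excluded ones. In Case I we are assuming $dU_i/d\theta_i > 0$ for all $\theta_i$ (equivalently $g^*(\theta) > g^\circ$, so $\phi(g) > \phi(g^\circ)$), so $U_i(\theta_i)$ is strictly increasing in $\theta_i$ across the entire type space.

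First I would invoke \Cref{lemma:coalition}: when $g^\circ < g_M^L$ the optimal coalition is $\overline{Q}$ and the cutoff type $\Tilde{\theta}$ is the lowest type retained in the coalition, with its participation constraint made binding, i.e. $U(\Tilde{\theta}) = 0$. Then I would combine this with strict monotonicity. For any excluded type $\theta_i < \Tilde{\theta}$, integrating the envelope condition from $\theta_i$ up to $\Tilde{\theta}$ yields
\begin{equation*}
U(\Tilde{\theta}) - U_i(\theta_i) = \int_{\theta_i}^{\Tilde{\theta}} \left[ \phi(g(x, \theta_{-i})) - \Bar{v}_i'(x, g^\circ) \right] dx > 0,
\end{equation*}
and since $U(\Tilde{\theta}) = 0$ this forces $U_i(\theta_i) < 0$; i.e. the participation constraint $U_i(\theta_i) \geq 0$ is violated. (At $\theta_i = \Tilde{\theta}$ it binds, so the weak statement in the claim is exactly attained at the cutoff and strictly violated below it.)

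As a fully explicit alternative, since the types below $\Tilde{\theta}$ are bunched at $\Tilde{\theta}$'s allocation $(g, t)$, I can compute $U_i$ directly: using linearity of $\Bar{v}_i$ in types, $U_i(\theta_i) = (\theta_i - \Tilde{\theta})\phi(g) - (\theta_i - \Tilde{\theta})\phi(g^\circ) = (\theta_i - \Tilde{\theta})(\phi(g) - \phi(g^\circ))$, which is negative because $\theta_i - \Tilde{\theta} < 0$ while $\phi(g) - \phi(g^\circ) > 0$ in Case I. Either route delivers $U_i(\theta_i) < 0$.

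The only delicate point is to confirm that the integrated envelope identity still applies when the excluded agents are bunched rather than separated, so that the allocation may be flat over $[\theta_i, \Tilde{\theta}]$. This is not an obstacle, however: the identity is a consequence of \Cref{lemma:ic}, which is imposed on all agents regardless of whether the allocation strictly increases, so it remains valid on the bunching region. Thus the claim follows from the increasingness of $U_i$ alone.
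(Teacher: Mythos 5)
Your proof is correct and takes essentially the same route as the paper: the paper pins the integration constant of the incentive-compatible transfer at the binding cutoff, writing $t(\theta_i)=\theta_i\phi(g(\theta))+\int_{\theta_i}^{\Tilde{\theta}}\phi(g(x,\theta_{-i}))\,dx\geq\theta_i\phi(g(\theta))$ for $\theta_i\leq\Tilde{\theta}$, which is exactly your integrated envelope identity expressed in transfers rather than in net utilities $U_i$. Your closing points---that the constraint binds exactly at $\Tilde{\theta}$ and that the identity survives on the bunching region---mirror the paper's remark that incentive compatibility forces $t(\theta_i)=t(\Tilde{\theta})$ for all $\theta_i\leq\Tilde{\theta}$.
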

\begin{proof}
\ We have that
\begin{align*}
    t(\theta_i)&= \theta_i\phi(g(\theta)) - \int_{\Tilde{\theta}}^{\theta_i} \phi(g(x, \theta_{-i})) dx \\
    &=  \theta_i\phi(g(\theta)) + \int_{\theta_i}^{\Tilde{\theta}} \phi(g(x, \theta_{-i})) dx \\
    &\geq \theta_i\phi(g(\theta)) \ ,
\end{align*}
where the last inequality follows from the monotonicity of the allocation rule. Notice that, in order to preserve incentive compatibility, it must be the case that $t(\theta_i)=t(\Tilde{\theta})$, for all $\theta_i \leq \Tilde{\theta}$.
\end{proof}
Consider now the expected transfer of types $\theta_i \geq \Tilde{\theta}$. It is equal to
\begin{equation*}
    \int_{\underline{\theta}}^{\overline{\theta}} t(\theta_i) \ f_i(\theta_i) d(\theta_i)= \int_{\underline{\theta}}^{\overline{\theta}} \theta_i\phi(g(\theta_i, \theta_{-i})) - \int_{\underline{\theta}}^{\overline{\theta}} \int_{\Tilde{\theta}}^{\theta} \phi(g(x, \theta_{-i})) dx \ f_i(\theta_i) d(\theta_i) \ .
\end{equation*}
Consider the second term.
\begin{align*}
   &\int_{\underline{\theta}}^{\overline{\theta}}  \int_{\Tilde{\theta}}^{\theta} \phi(g(x, \theta_{-i})) dx \ f_i(\theta_i) d(\theta_i) \\
   &= \int_{\underline{\theta}}^{\overline{\theta}} \int_{\underline{\theta}}^{\theta} \phi(g(x, \theta_{-i})) dx \ f_i(\theta_i) d(\theta_i) - \int_{\underline{\theta}}^{\overline{\theta}} \int_{\underline{\theta}}^{\Tilde{\theta}} \phi(g(x, \theta_{-i})) dx \ f_i(\theta_i) d(\theta_i) \ .
\end{align*}
Simplifying the first term is a standard procedure in the literature (\cite{börgers2015}). A very similar procedure can be applied to simplify also the second term.
A graphical representation helps to illustrate how the standard procedure has to be modified in order to take into account the excluded types.\footnote{The steps follow closely Chapter 2 in \cite{börgers2015}.}For completeness, I include the derivations for both terms.\\
\begin{figure}[ht]
\centering
    \begin{tikzpicture}[-,>=stealth',auto,node distance=1.2cm,
  thick,main node/.style={transform shape, circle,draw,font=\rmfamily, minimum width=5pt}, scale=0.8]

    %Axis
    \draw[->] (-5,0) -- (5,0) node[below]{$\theta_i$};
    \draw[->] (-4.5,-0.5) -- (-4.5,7) node[left]{$x$};
    
    %Black triangle
    \draw (-3.5,1) -- (3.5,5.5);
    \draw (3.5,1) -- (3.5,5.5);
    \draw (-3.5,1) -- (3.5,1);

    %Label
    % \draw[dashed, color=gray] (-3.5,1) -- (-3.5,0);
    \draw (-3.5,0) node[below]{$\underline{\theta}$};

    % \draw[dashed, color=gray] (-3.5,1) -- (-4.5,1);
    \draw (-4.5,1) node[left]{$\underline{\theta}$};

    % \draw[dashed, color=gray] (3.5,1) -- (3.5,0);
    \draw (3.5,0) node[below]{$\underline{\theta}$};

    % \draw[dashed, color=gray] (3.5,5.5) -- (-4.5,5.5);
    \draw (-4.5,5.5) node[left]{$\underline{\theta}$};

     % \draw[dashed, color=gray] (-1.5,2.25) -- (-1.5,0);
    \draw (-1.5,0) node[below]{$\Tilde{\theta}$};

    % \draw[dashed, color=gray] (-1.5,2.25) -- (-4.5,2.25);
    \draw (-4.5,2.25) node[left]{$\Tilde{\theta}$};

    %Red area
    \draw[very thick, red] (-1.5,2.25) -- (3.5, 5.5) -- (3.5,1) -- (-1.5,1) -- (-1.5,2.25);

    %Dashed
    
    \draw[dashed, color=gray] (3.5, 1.6675) -- (-2.3, 1.6675);
    \draw[dashed, color=gray] (3.5, 2.25) -- (-1.5, 2.25);
    \draw[dashed, color=gray] (3.5, 2.85) -- (-0.7, 2.85);
    \draw[dashed, color=gray] (3.5, 3.5) -- (0.5, 3.5);
    \draw[dashed, color=gray] (3.5, 4.2) -- (1.5, 4.2);
    \draw[dashed, color=gray] (3.5, 4.9) -- (2.5, 4.9);

    \draw[dashed, color=gray] (-2.3, 1) -- (-2.3, 1.6675);
    \draw[dashed, color=gray] (-1.5, 1) -- (-1.5, 2.25);
    \draw[dashed, color=gray] (-0.7, 1) -- (-0.7, 2.85);
    \draw[dashed, color=gray] (0.5, 1) -- (0.5, 3.5);
    \draw[dashed, color=gray] (1.5, 1) -- (1.5, 4.2);
    \draw[dashed, color=gray] (2.5, 1) -- (2.5, 4.9);

\end{tikzpicture}
\caption{Change in the order of integration.}
\end{figure}
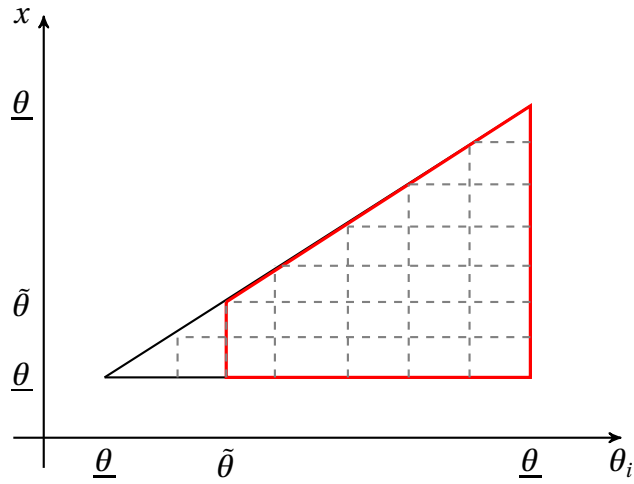

The first term simplifies as follows:
\begin{align*}
     \int_{\underline{\theta}}^{\overline{\theta}} \int_{\underline{\theta}}^{\theta} \phi(g(x, \theta_{-i})) dx \ f_i(\theta_i) d(\theta_i)
    &= \int_{\underline{\theta}}^{\overline{\theta}} \int_{\underline{\theta}}^{\theta} \phi(g(x, \theta_{-i})) \ f_i(\theta_i) dx d(\theta_i) \\
    &= \int_{\underline{\theta}}^{\overline{\theta}} \int_{\underline{\theta}}^{\theta} \phi(g(x, \theta_{-i})) \ f_i(\theta_i) \mathbb{1}_{x \leq \theta} \ d(\theta_i) dx \\
    &= \int_{\underline{\theta}}^{\overline{\theta}} \phi(g(x, \theta_{-i})) \int_{\underline{\theta}}^{\theta}   f_i(\theta_i) \mathbb{1}_{x \leq \theta} \ d(\theta_i) dx \\
    &= \int_{\underline{\theta}}^{\overline{\theta}} \phi(g(x, \theta_{-i})) \int_{x}^{\overline{\theta}}   f_i(\theta_i) \ d(\theta_i) dx \\
    &= \int_{\underline{\theta}}^{\overline{\theta}} \phi(g(x, \theta_{-i})) \ \left[ 1 - F(x) \right] dx \\
    &= \int_{\underline{\theta}}^{\overline{\theta}} \phi(g(\theta_i)) \ \left[ 1 - F_i(\theta_i) \right] d\theta_i \ .
\end{align*}\\
When applying Fubini's theorem in the second equality, the whole black triangle is considered. On the other hand, for the second term, only the red polygon has to be taken into account.
\begin{align*}
    &\int_{\underline{\theta}}^{\overline{\theta}} \int_{\underline{\theta}}^{\Tilde{\theta}} \phi(g(x, \theta_{-i})) dx \ f_i(\theta_i) d(\theta_i) = \int_{\underline{\theta}}^{\Tilde{\theta}} \int_{\underline{\theta}}^{\overline{\theta}} \phi(g(x, \theta_{-i})) \ f_i(\theta_i) \mathbb{1}_{\theta \leq \Tilde{\theta}} \ d(\theta_i) dx \\
    &= \int_{\underline{\theta}}^{\overline{\theta}} \phi(g(x, \theta_{-i})) \int_x^{\overline{\theta}} f_i(\theta_i)  \ d(\theta_i) dx \\
    &= \int_{\underline{\theta}}^{\overline{\theta}} \phi(g(x, \theta_{-i})) \int_{\underline{\theta}}^{\overline{\theta}} f_i(\theta_i)  \ d(\theta_i) dx - \int_{\underline{\theta}}^{\overline{\theta}} \phi(g(x, \theta_{-i})) \int_{\underline{\theta}}^{x} f_i(\theta_i)  \ d(\theta_i) dx \\
    &= \int_{\underline{\theta}}^{\Tilde{\theta}} \phi(g(x, \theta_{-i})) F (\Tilde{\theta}) dx - \int_{\underline{\theta}}^{\Tilde{\theta}} \phi(g(x, \theta_{-i})) F (x) dx \\
    &= \int_{\underline{\theta}}^{\overline{\theta}} \phi(g(x, \theta_{-i})) \ \left[ F(\Tilde{\theta}) - F(x)\right] \mathbb{1}_{x \leq \Tilde{\theta}} dx \\
    &= \int_{\underline{\theta}}^{\overline{\theta}} \phi(g(\theta)) \ \left[ F(\Tilde{\theta}) - F_i(\theta_i)\right] \mathbb{1}_{\theta_i \leq \Tilde{\theta}} d\theta_i \ .
\end{align*}\\
Then, the expected transfer of the excluded types are given by
\begin{align*}
    &\int_{\underline{\theta}}^{\overline{\theta}} t(\theta_i) \ f_i(\theta_i) d(\theta_i)= \int_{\underline{\theta}}^{\overline{\theta}} \theta_i\phi(g(\theta_i, \theta_{-i})) - \int_{\Tilde{\theta}}^{\theta} \phi(g(x, \theta_{-i})) dx \ f_i(\theta_i) d(\theta_i) \\
    &= \int_{\underline{\theta}}^{\overline{\theta}} \Tilde{\theta} \phi(g(\theta)) d\theta_i - \left[   \int_{\underline{\theta}}^{\overline{\theta}} \phi(g(\theta)) \ \left[ 1 - F_i(\theta_i) \right] d\theta_i - \int_{\underline{\theta}}^{\overline{\theta}} \phi(g(\theta)) \ \left[ F(\Tilde{\theta}) - F_i(\theta_i)\right] \mathbb{1}_{\theta_i \leq \Tilde{\theta}} d\theta_i \right]  \\
    &= \int_{\underline{\theta}}^{\overline{\theta}} \left[ \Tilde{\theta} - \frac{1-F(\Tilde{\theta})}{f_i(\theta_i)} \right] \phi(g(\theta)) f_i(\theta_i) d\theta_i
\end{align*}
In other words, types $\theta_i \leq \Tilde{\theta}$ are bunched at the same allocation of the cutoff type $\Tilde{\theta}$, whose participation constraint is binding.
Putting everything together, and going through the same derivations as for \Cref{prop:unanimtd}, yields the following result.
\begin{lemma}
   Optimal public-good provision solves
\begin{equation}
    \left[ \theta_a + |I| \cdot \Tilde{\theta} + \sum_{i \in \overline{Q}\setminus \{a\}} \left(\theta_i - \frac{1-F_i(\theta_i)}{f_i(\theta_i)} \right) \right]\phi'(g(\theta)) = 1 \ .
\end{equation} 
\end{lemma}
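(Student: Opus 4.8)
The plan is to collect the expected-transfer expressions already derived and then maximize the agenda-setter's objective pointwise in $g(\theta)$. First I would invoke \Cref{lemma:coalition} to set the coalition to $\overline{Q}$ and to identify the cutoff $\Tilde{\theta}$ as the lowest retained type, so that the excluded set $I$ comprises the $|I| = n-q$ lowest types, each bunched at the allocation of $\Tilde{\theta}$ with $t(\theta_i) = t(\Tilde{\theta})$. Using the binding resource constraint to eliminate $t_a(\theta)$, the agenda-setter's expected payoff reads $\int_{\Theta}[\theta_a\phi(g(\theta)) + \sum_{i \in N\setminus\{a\}} t(\theta_i) - g(\theta)]\,f(\theta)\,d\theta$, so it suffices to substitute the expected transfer of each agent.

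For the members of $\overline{Q}\setminus\{a\}$, who earn strictly positive information rents, the computation is identical to the one behind (\ref{eq:unanimlow}) in \Cref{prop:unanimtd} and replaces $\theta_i$ by the downward-distorted virtual type $\theta_i - \frac{1-F_i(\theta_i)}{f_i(\theta_i)}$. For the excluded agents $i \in I$, the modified Fubini step carried out just above — integrating over the red polygon rather than the full triangle — yields the expected transfer $\int[\Tilde{\theta} - \frac{1-F(\Tilde{\theta})}{f_i(\theta_i)}]\phi(g(\theta))\,d\theta_i$, reflecting that these types are all held at the cutoff allocation. Once the cutoff-dependent corrections are accounted for (see the final paragraph), each excluded agent enters through the single term $\Tilde{\theta}$, and the objective takes the virtual-surplus form $\int_{\Theta}\bigl[\bigl(\theta_a + |I|\Tilde{\theta} + \sum_{i\in\overline{Q}\setminus\{a\}}(\theta_i - \frac{1-F_i(\theta_i)}{f_i(\theta_i)})\bigr)\phi(g(\theta)) - g(\theta)\bigr]f(\theta)\,d\theta$.

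Because the integrand is additively separable across realizations of $\theta$, I would maximize pointwise in $g(\theta)$; the bracketed virtual marginal value is positive and $\phi$ is concave, so (as with the strict quasi-concavity noted after \Cref{lemma:objfct}) the first-order condition is necessary and sufficient and delivers exactly the claimed equation. I would close by verifying, as in \Cref{prop:unanimtd}, that \Cref{assumpt:regular} — log-concavity of the densities and hence a monotone inverse hazard rate — makes the candidate satisfy the monotonicity constraint $dg(\theta)/d\theta \ge 0$, so that no bunching beyond the deliberately excluded set $I$ is induced.

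I expect the main obstacle to lie in the second paragraph, namely justifying that each bunched agent in $I$ enters the first-order condition through the single term $\Tilde{\theta}$ rather than an individualized virtual type: this requires showing that the hazard-rate corrections attaching to the sub-cutoff types, which appear in the raw expected-transfer formula $\int[\Tilde{\theta} - \frac{1-F(\Tilde{\theta})}{f_i(\theta_i)}]\phi(g(\theta))\,d\theta_i$, are exactly those governing the placement of the cutoff $\Tilde{\theta}$ (fixed by the coalition-size constraint $|I|\le n-q$) and therefore do not survive into the pointwise condition for the retained coalition's allocation.
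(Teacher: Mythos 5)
Your proposal follows the paper's own route step for step: \Cref{lemma:coalition} fixes the coalition $\overline{Q}$ and the cutoff $\Tilde{\theta}$ as the lowest retained type, the excluded types are bunched at the cutoff's transfer, the modified Fubini computation (the ``red polygon'') delivers the expected transfers, and pointwise maximization of the resulting virtual surplus, with \Cref{assumpt:regular} guaranteeing monotonicity, gives the displayed first-order condition---exactly the paper's ``putting everything together'' argument. The obstacle you flag in your final paragraph is real but resolves as you anticipate, and more cleanly than you suggest: because bunching makes $\phi(g(\theta))$ constant in $\theta_i$ on $[\underline{\theta}, \Tilde{\theta}]$, the sub-cutoff virtual-type terms integrate by parts to $\Tilde{\theta}\,F_i(\Tilde{\theta})\,\phi(g)$, i.e.\ each excluded agent simply pays the cutoff type's transfer $\Tilde{\theta}\phi(g)$, which is why only the term $|I|\cdot\Tilde{\theta}$, and no individualized hazard-rate correction, survives in the pointwise condition.
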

\paragraph{Case II} $dU_i/d\theta_i<0$, $\forall i \in N$.\\
Along the same lines as in the previous case, the following result holds.
\begin{lemma}
   Optimal public-good provision solves
\begin{equation}
    \left[ \theta_a + |I| \cdot \Tilde{\Tilde{\theta}} + \sum_{i \in \underline{Q}\setminus \{a\}} \left(\theta_i + \frac{F_i(\theta_i)}{f_i(\theta_i)} \right)\right]\phi'(g(\theta)) = 1 \ .
\end{equation} 
The same goes through for any $g^\circ \geq g^H$.
\end{lemma}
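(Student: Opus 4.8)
The plan is to reflect the argument of Case I from the bottom to the top of the type space. By \Cref{lemma:coalition}, when $dU_i/d\theta_i<0$ for every $i$ the optimal coalition is $\underline{Q}$ and the relevant cutoff is $\Tilde{\Tilde{\theta}}$, the \emph{highest} type still retained in the coalition; the excluded set $I$ now consists of the types lying \emph{above} $\Tilde{\Tilde{\theta}}$, and these will be bunched at the allocation of $\Tilde{\Tilde{\theta}}$. Because the overstating incentive makes information rents decreasing in type (by condition (\ref{eq:envelope})), the binding participation constraint is the one of the highest retained type, so I would set $U_i(\Tilde{\Tilde{\theta}})=0$ and integrate the envelope condition downward rather than upward.

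First I would record the transfer. Integrating $dU_i/d\theta_i=\phi(g(\theta))-\phi(g^\circ)<0$ from $\theta_i$ up to $\Tilde{\Tilde{\theta}}$ and using $U_i(\Tilde{\Tilde{\theta}})=0$ gives, for every retained type $\theta_i\leq\Tilde{\Tilde{\theta}}$ and up to an additive constant that does not depend on the allocation,
\[
t(\theta_i)=\theta_i\phi(g(\theta_i,\theta_{-i}))+\int_{\theta_i}^{\Tilde{\Tilde{\theta}}}\phi(g(x,\theta_{-i}))\,dx,
\]
the exact reflection of the formula used in Case I, with the sign of the rent integral reversed. I would then prove the analogue of the Claim: every type $\theta_i\geq\Tilde{\Tilde{\theta}}$ has its participation constraint violated, since bunching such a type at the allocation of $\Tilde{\Tilde{\theta}}$ yields $U_i(\theta_i)=(\theta_i-\Tilde{\Tilde{\theta}})\bigl[\phi(g(\Tilde{\Tilde{\theta}}))-\phi(g^\circ)\bigr]<0$, and incentive compatibility then forces $t(\theta_i)=t(\Tilde{\Tilde{\theta}})$ throughout this bunched region.

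Next I would compute the expected transfer of the retained types and rewrite the resulting double integral by changing the order of integration (Fubini), exactly as in the order-of-integration figure used in Case I but with the integration region reflected about the diagonal: the indicator $\mathbb{1}_{\theta_i\leq\Tilde{\theta}}$ used there is replaced by $\mathbb{1}_{\theta_i\geq\Tilde{\Tilde{\theta}}}$ and the inner limits are flipped accordingly. The term coming from the full triangle then produces $F_i(\theta_i)/f_i(\theta_i)$ for each retained agent $i\in\underline{Q}\setminus\{a\}$, precisely the expression obtained in the unanimity Case II in equation (\ref{eq:unanimti}), rather than $\bigl(1-F_i(\theta_i)\bigr)/f_i(\theta_i)$, while the contribution of the bunched excluded types collapses to the single term $|I|\cdot\Tilde{\Tilde{\theta}}$ inside the bracket. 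Substituting into the objective of \Cref{problem:lm} and maximizing pointwise in $g(\theta)$ then delivers the stated first-order condition, and \Cref{assumpt:regular} (log-concavity of the densities) guarantees that the resulting $g^*(\theta)$ satisfies the monotonicity constraint. Since the whole construction depends on $g^\circ$ only through the common sign of the countervailing incentive, it carries over verbatim to any $g^\circ\geq g^H$.

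The hardest part will be the order-of-integration bookkeeping for the bunched mass, now sitting above the cutoff instead of below it: one must re-derive the reflected analogue of the red polygon, keep the endpoints $\Tilde{\Tilde{\theta}}$ and $\overline{\theta}$ in the right places, and verify that the constant terms suppressed in the transfer indeed drop out of the pointwise maximization. Everything else is a routine transcription of the Case I computation, with the roles of $\underline{\theta}$ and $\overline{\theta}$, and of $1-F_i$ and $F_i$, interchanged.
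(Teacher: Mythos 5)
Your proposal is correct and matches the paper's intended argument: the paper proves Case I in full (binding participation constraint at the cutoff, bunching of the excluded types, the Fubini change of order of integration, pointwise maximization) and then asserts Case II ``along the same lines,'' which is precisely the reflected construction you carry out --- coalition $\underline{Q}$, binding constraint at the highest retained type $\Tilde{\Tilde{\theta}}$, excluded types above the cutoff bunched at its allocation, and the virtual value $\theta_i + F_i(\theta_i)/f_i(\theta_i)$ as in the unanimity condition (\ref{eq:unanimti}). Your explicit handling of the downward-integrated transfer and the reflected integration region fills in details the paper leaves implicit, but it is the same proof, not a different route.
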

\paragraph{Case III} $g^L< g^\circ < g^H$.\\
The argument is the same as in Case III in \Cref{prop:unanimtd}.
\end{proof}

\subsection{Proof of \Cref{thrm:u}}
\subsubsection{Proof of \Cref{lemma:objfct}}
\begin{proof}
\ Consider the expected transfer $\int t(\theta_i) f_i(\theta_i) \ d\theta_i$ from agent $i \in N$. As standard in the literature, we can rewrite it as
    \begin{equation*}
        \int_{\underline{\theta}}^{\overline{\theta}} s_i(\theta_i, g(\theta)) - u_i(\theta_i) \ f_i(\theta_i) \ d\theta_i
    \end{equation*}
    where $ s_i(\theta_i, g(\theta)) = \theta_i \phi(g(\theta))$. As in \cite{jullien2000}, define $\gamma^*(\theta_i)$ to be the shadow value of a uniform reduction in the participation constraint for types between $\underline{\theta}$ and $\theta_i$. Then, $d\gamma^*(\theta_i)$ is the shadow value on the individual rationality constraint for agent with type $\theta_i$.
    \paragraph{Step 1.} $\gamma^*(\theta_i)$ \textit{is a cumulative distribution function.}\\
    Increasing $\theta_i$ extends the set of agents for whom the constraint is relaxed; hence, $\gamma^*(\theta_i)$ is positive and nondecreasing. Furthermore, relaxing the participation constraint for the whole set of agents is equivalent to a uniform increase in $u_i(\theta_i)$. Therefore, $\gamma^*(\overline{\theta})=1$.
    \paragraph{Step 2.} \textit{From a single to many agents.}\\
    The expected transfer is then
    \begin{equation} \label{eq:virtsurp}
        \int_{\underline{\theta}}^{\overline{\theta}} s_i(\theta_i, g(\theta)) - u_i(\theta_i) \ f_i(\theta_i) \ d\theta_i + \int_{\underline{\theta}}^{\overline{\theta}} u_i(\theta_i) d\gamma^*(\theta_i).
    \end{equation}
    Integrating by parts (\ref{eq:virtsurp}) twice yields
    \begin{align*}
      &s_i(\theta_i, g(\theta)) - u_i(\theta_i) F_i(\theta_i) \Big|_{\underline{\theta}}^{\overline{\theta}} - \int_{\underline{\theta}}^{\overline{\theta}} \big[ s'_i(\theta_i, g(\theta)) - u'_i(\theta_i) \big] F_i(\theta_i) \ d\theta_i + u_i(\theta_i) \gamma(\theta_i) \Big|_{\underline{\theta}}^{\overline{\theta}} - \int_{\underline{\theta}}^{\overline{\theta}} u'_i(\theta_i) \gamma^*(\theta_i) \\
      &= s_i(\overline{\theta}, g(\theta)) - \int_{\underline{\theta}}^{\overline{\theta}} \big[ s'_i(\theta_i, g(\theta)) - u'_i(\theta_i) \big] F_i(\theta_i) \ d\theta_i  - \int_{\underline{\theta}}^{\overline{\theta}} u'_i(\theta_i) \gamma^*(\theta_i) \\
       &= s_i(\overline{\theta}, g(\theta)) - \Bigg[ s_i(\theta_i, g(\theta))F_i(\theta_i) \Big|_{\underline{\theta}}^{\overline{\theta}} - \int_{\underline{\theta}}^{\overline{\theta}} s_i(\theta_i, g(\theta)) f_i(\theta_i) \ d\theta_i - \int_{\underline{\theta}}^{\overline{\theta}}u'_i(\theta_i)F_i(\theta_i) \Bigg] - \int_{\underline{\theta}}^{\overline{\theta}} u'_i(\theta_i) \gamma^*(\theta_i) \\
       &= \int_{\underline{\theta}}^{\overline{\theta}} s_i(\theta_i, g(\theta)) f_i(\theta_i) \ d\theta_i + \int_{\underline{\theta}}^{\overline{\theta}} u'_i(\theta_i)F_i(\theta_i) - \int_{\underline{\theta}}^{\overline{\theta}} u'_i(\theta_i) \gamma^*(\theta_i) \\
       &= \int_{\underline{\theta}}^{\overline{\theta}} \Bigg[s_i(\theta_i, g(\theta))  - \frac{\gamma^*(\theta_i) - F_i(\theta_i)}{f_i(\theta_i)} u'_i(\theta_i) \Bigg] f_i(\theta_i) \ d\theta_i \\
    \end{align*}
    Summing over $i \in N \setminus \{a\}$ and substituting back the original utility functions into the budget constraint yield (\ref{eq:objfct}).
\end{proof}
\subsubsection{Defining the thresholds}
Consider now the implicit function
\begin{equation} \label{eq:G}
    G(\theta_i; g^\circ) \equiv \phi(g(\theta)) - \Bar{v}'_i(\theta_i, g^\circ) =0 \ . 
\end{equation}
The following result ensures that, for given public-good provision $g(\theta)$ and outside option public-good level $g^\circ$, agents are ordered according to their incentive to misreport their types. A crucial role is played by the shape of the reservation utility profile.

 \begin{lemma} \label{lemma:propofG} For $\theta_i \in [\underline{\theta}, \overline{\theta}]$ and $x \in \mathbb{R}_+$, there exists a unique $\bar{g}(\theta_i, x)$ such that \[G(\theta_i; \bar{g}(\theta_i, x)) = x.\] 
In addition, if $\Bar{v}_i$ is concave (convex), then $\Bar{g}(\theta_i, x)$ is strictly increasing in $x$ and strictly increasing (decreasing) in $\theta_i$. 
 \end{lemma}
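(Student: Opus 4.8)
The statement is an inverse-function result for (\ref{eq:G}), so the plan is to fix $\theta_i$ and $x$, read $G(\theta_i;\cdot)=x$ as an equation in the remaining public-good argument, and extract existence, uniqueness, and the comparative statics from a single monotonicity property together with the implicit function theorem.

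For existence and uniqueness I would hold $(\theta_i,x)$ fixed and regard the left-hand side of (\ref{eq:G}) as a function of its second argument only. It is continuous and, under the maintained assumptions on $\bar v_i$, strictly monotone: its derivative equals $-\partial^2_{\theta g^\circ}\bar v_i$, the marginal sensitivity of $\bar v_i'$ to the public-good argument, which keeps a constant nonzero sign. A continuous, strictly monotone function is injective, which gives uniqueness of the root; the intermediate value theorem then delivers existence for every $x\in\mathbb{R}_+$ once the two endpoints are checked, using the normalizations $\phi(0)=0$ and $\bar v_i(\cdot,0)=0$ to pin down the range. This defines $\bar g(\theta_i,x)$, and since $G$ is $C^1$ with nonvanishing derivative in the inverted argument, $\bar g$ is itself continuously differentiable.

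For the comparative statics I would differentiate the identity $G(\theta_i;\bar g(\theta_i,x))=x$. Writing $\partial_2 G$ for the derivative in the inverted argument, the implicit function theorem yields $\partial\bar g/\partial x = 1/\partial_2 G$ and $\partial\bar g/\partial\theta_i = -(\partial_{\theta_i}G)/\partial_2 G$. The crux is the clean identity $\partial_{\theta_i}G = -\bar v_i''(\theta_i,g^\circ)$, which holds because the $\phi$-term does not involve $\theta_i$. Since $\partial_2 G$ has a fixed sign, $\partial\bar g/\partial x$ is of one sign throughout, which fixes the uniform direction of monotonicity in $x$; and the sign of $\partial\bar g/\partial\theta_i$ is governed entirely by the sign of $\bar v_i''$, so that concavity and convexity of the reservation profile reverse the direction of monotonicity in $\theta_i$ exactly as asserted. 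The linear benchmark $\bar v_i''\equiv 0$ is the boundary case in which $\bar g$ is flat in $\theta_i$, consistent with $G=\phi(g(\theta))-\phi(g^\circ)$ in \Cref{sec:linear}.

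I expect the binding difficulty to lie entirely in the first step: guaranteeing that $\partial_2 G$ is globally of one sign and never zero—so that the implicit function is defined on the whole product domain $[\underline\theta,\overline\theta]\times\mathbb{R}_+$ rather than only locally—and that the range of $G(\theta_i;\cdot)$ genuinely exhausts $\mathbb{R}_+$. The former is a strict-monotonicity (regularity) requirement on $\bar v_i$, namely that the cross-partial $\partial^2_{\theta g^\circ}\bar v_i$ not vanish; the latter is a growth/boundary condition ensuring $G$ is unbounded in the appropriate direction. Granting these two facts, uniqueness, differentiability, and the signed comparative statics all follow mechanically from the implicit function theorem.
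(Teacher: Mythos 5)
Your route is in substance the same as the paper's: the paper's proof consists of verifying exactly the two monotonicity facts you derive via the implicit function theorem, namely that concavity (convexity) of $\bar v_i$ orders $G(\theta_i;g^\circ)\leq G(\theta_j;g^\circ)$ for $\theta_i<\theta_j$ (reversed under convexity, with equality only at $g^\circ=0$) --- which is your identity $\partial_{\theta_i}G=-\bar v_i''(\theta_i,g^\circ)$ --- and that $G$ is monotone in its second argument, which is your inversion step. Your version is the more careful one: you correctly identify that the whole lemma hangs on (i) the cross-partial $\partial^2_{\theta g^\circ}\bar v_i$ having a globally constant, nonvanishing sign and (ii) the range of $G(\theta_i;\cdot)$ covering $\mathbb{R}_+$, neither of which the paper states as an assumption or proves; its entire argument is prefaced by ``it can be readily checked.'' (On point (ii), note that the normalization $\bar v_i(\cdot,0)=0$ pins the endpoint at $G(\theta_i;0)=\phi(g(\theta))>0$, not $0$, so surjectivity onto all of $\mathbb{R}_+$ genuinely does require an additional unboundedness condition, as you suspect.)

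There is, however, one concrete gap: your closing claim that the signed comparative statics come out ``exactly as asserted'' does not survive your own formulas. You have $\partial\bar g/\partial x = 1/\partial_2 G$ and $\partial\bar g/\partial\theta_i = \bar v_i''/\partial_2 G$. If $\partial_2 G>0$, then $\bar g$ is increasing in $x$ but concavity ($\bar v_i''\leq 0$) forces $\bar g$ \emph{decreasing} in $\theta_i$; if instead $\partial_2 G<0$ --- which is what the paper's own linear benchmark delivers, since there $\bar v_i'(\theta_i,g^\circ)=\phi(g^\circ)$ and hence $\partial_2 G=-\phi'(g^\circ)<0$ --- then concavity does give $\bar g$ increasing in $\theta_i$, but $\bar g$ is then \emph{decreasing} in $x$. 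No single fixed sign of $\partial_2 G$ yields both monotonicity claims of the lemma simultaneously, so a complete proof must commit to a sign convention on the cross-partial and restate one direction accordingly. You should not feel singled out: the paper's proof has the mirror-image tension, asserting that $G$ is ``monotone increasing in $g^\circ$,'' which contradicts its own linear specification and, combined with its concavity bullet, would likewise force $\bar g$ decreasing in $\theta_i$ against the lemma's statement. A fully rigorous writeup would also note that strict monotonicity in $\theta_i$ requires strict concavity/convexity ($\bar v_i''\neq 0$), since the paper's weak inequalities permit flats.
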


\begin{proof}[Proof of \Cref{lemma:propofG}] It can be readily checked that $G(\theta_i; g^\circ)$ has the following properties:
 \begin{itemize}
     \item[-] Suppose $\Bar{v}$ is concave. Then, for $\theta_i < \theta_j$, $G(\theta_i; g^\circ) \leq  G(\theta_j; g^\circ)$ on its domain. Equality holds if and only if $g^\circ = 0$.
     \item[-] Suppose $\Bar{v}$ is convex. Then, for $\theta_i < \theta_j$, $G(\theta_j; g^\circ) \leq  G(\theta_i; g^\circ)$ on its domain. Equality holds if and only if $g^\circ = 0$.
     \end{itemize}
 Moreover, $G(\theta_i; g^\circ)$ is monotone increasing in $g^\circ$, for all $g^\circ \in \mathbb{R}_+$.
\end{proof}
As \Cref{lemma:propofG} shows, the ordering of agents according to their types hinges on the concavity or convexity of the reservation utility profile. Therefore, each case is considered separately.
\subsection{Concavity} \label{sec:concavity}
I begin with the proof of \Cref{lemma:partition}.
\begin{proof}
\ Consider $\theta_k$, $\theta_l \in L(g^\circ)$ and $\theta_m$. We have that
\begin{align*}
    dU_m/\theta_m = \phi(g(\theta)) - \Bar{v}'_m(\theta_m, g^\circ) &\geq \phi(g(\theta)) - \Bar{v}'_l(\theta_l, g^\circ)\\
    &=dU_l/d\theta_l\\
    &=0 \\
                  &\geq \phi(g(\theta)) - \Bar{v}'_k(\theta_k, g^\circ) \\
                  &= dU_k/d\theta_k \ .
\end{align*}
Hence, $\theta_k \leq \theta_l \leq \theta_m$.
\end{proof}
I now define a partition of the support of $g^\circ$ according to the agents' incentive to understate or overstate their types. In particular, define $\bar{g}^{|K|, |M|}$ as the outside option public-good level solving (\ref{eq:G}) when $G(\cdot)$ solves the following condition
\begin{equation}
    \left[ \theta_a + \sum_{k \in K} \left(\theta_k + \frac{1-F_k(\theta_k)}{f_k(\theta_k)} \right)+ \sum_{m \in M} \left(\theta_m - \frac{F_m(\theta_m)}{f_m(\theta_m)} \right)\right]\phi'(g(\theta)) = 1 .
\end{equation} \\
In order to keep notation compact and consistent with \Cref{sec:linear}, define the following thresholds as 
\begin{align*}
  \bar{g}^L &\equiv \left[ \theta_a + \sum_{i \in L,M} \left(\theta_i - \frac{1-F_i(\theta_i)}{f_i(\theta_i)} \right)\right]\phi'(g(\theta)) = 1 , \\
    \bar{g}^{L-1} &\equiv \left[ \theta_a + \left(\theta_1 + \frac{F_1(\theta_1)}{f_1(\theta_1)} \right)+ \sum_{i \in L,M} \left(\theta_i - \frac{1-F_i(\theta_i)}{f_i(\theta_i)} \right)\right]\phi'(g(\theta)) = 1 , \\
    \bar{g}^{H-1} &\equiv \left[ \theta_a +  \sum_{i \in K} \left(\theta_i + \frac{F_i(\theta_i)}{f_i(\theta_i)} \right) + \left(\theta_r - \frac{1-F_r(\theta_r)}{f_r(\theta_r)} \right) \right]\phi'(g(\theta)) = 1 , \\
     \bar{g}^H &\equiv \left[ \theta_a + \sum_{i \in L,K} \left(\theta_i + \frac{F_i(\theta_i)}{f_i(\theta_i)} \right)\right]\phi'(g(\theta)) = 1  \ .
\end{align*}
Then, \Cref{lemma:partition} and \Cref{lemma:propofG} imply the following proposition.
\begin{proposition} \label{prop:kl}
Suppose $\Bar{v}_i''(\theta_i, g^\circ) \leq 0$. The indices $k = k(g^\circ)$ and $l = l(g^\circ)$ are given by:
\begin{align*}
    k(g^\circ) &= \left\{ \begin{array}{lcl} 
    0 & \mbox{if} & 0 \leq g^\circ < \bar{g}^L \\ 
    1 & \mbox{if} & \bar{g}^L \leq g^\circ < \bar{g}^{L-1} \\
    \;\vdots & \mbox{if} & \qquad \vdots \qquad \vdots \qquad \\
    r-2 & \mbox{if} & \bar{g}^{H-1} \leq g^\circ < \bar{g}^H \\
    r-1 & \mbox{if} & \bar{g}^H \leq g^\circ \ ,
    \end{array}\right.\\
    &\text{ }\\
    l(g^\circ) &= \left\{ \begin{array}{lcl} 
    1 & \mbox{if} & 0 \leq g^\circ < \bar{g}^L \\  
    2 & \mbox{if} & \bar{g}^L \leq g^\circ < \bar{g}^{L-1} \\
    \;\vdots & \mbox{if} & \qquad \vdots \qquad \vdots \qquad \\
    r-1 & \mbox{if} & \bar{g}^{H-1} \leq g^\circ < \bar{g}^H \\
    r & \mbox{if} & \bar{g}^H \leq g^\circ \ .
    \end{array}\right.
\end{align*}
\end{proposition}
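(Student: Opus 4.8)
The plan is to obtain the proposition by combining the ordering of the partition in \Cref{lemma:partition} with the monotonicity of $\bar{g}(\theta_i, x)$ from \Cref{lemma:propofG}, using the threshold first-order conditions defined just above. By \Cref{lemma:partition}, for concave $\bar{v}_i$ the sets $K(g^\circ)$, $L(g^\circ)$ and $M(g^\circ)$ are consecutive blocks of the ordered types, with $K$ a lower interval and $M$ an upper interval; since the values are distinct, $|L(g^\circ)| \le 1$. Thus the whole partition is encoded by the single index $k(g^\circ) = |K(g^\circ)|$, and it suffices to (a) confirm that $L(g^\circ)$ is always exactly the singleton $\{\theta_{k+1}\}$, so that $l(g^\circ) = k(g^\circ) + 1$, and (b) show that $k(g^\circ)$ is the claimed step function. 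I would index the configurations by $k \in \{0, 1, \dots, r-1\}$, where configuration $k$ is $K = \{\theta_1, \dots, \theta_k\}$, $L = \{\theta_{k+1}\}$, $M = \{\theta_{k+2}, \dots, \theta_r\}$.

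First I would record the optimal public-good level inside a fixed configuration. Given the partition, each type in $K$ enters the adjusted objective of \Cref{lemma:objfct} with the upward virtual value $\theta_i + F_i(\theta_i)/f_i(\theta_i)$ and each type in $L \cup M$ with the downward virtual value $\theta_i - (1 - F_i(\theta_i))/f_i(\theta_i)$; pointwise maximization, equivalently evaluating $\xi(\gamma^*, \theta)$ in \Cref{thrm:u}, then reproduces exactly the first-order condition defining the corresponding threshold. As $k$ runs from $0$ to $r-1$ these are the listed values $\bar{g}^L, \bar{g}^{L-1}, \dots, \bar{g}^{H-1}, \bar{g}^H$. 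Because $F_i/f_i > -(1 - F_i)/f_i$, replacing one type's downward virtual value by its upward one strictly raises the bracketed sum, and since $\phi$ is concave (so $\phi'$ is strictly decreasing) the implied maximizer increases; this gives the strict ordering $\bar{g}^L < \bar{g}^{L-1} < \dots < \bar{g}^H$, so the intervals in the statement are well defined and non-overlapping.

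Next I would identify the configuration active at each $g^\circ$. A type $\theta_i$ belongs to $K(g^\circ)$ exactly when $G(\theta_i; g^\circ) < 0$ at the optimal allocation, i.e. $\phi(g^*) < \bar{v}'_i(\theta_i, g^\circ)$. \Cref{lemma:propofG} shows that, in the concave case, $\bar{g}(\theta_i, x)$ is strictly monotone in both arguments and that $G$ is ordered across types; consequently the types cross the indifference locus $G = 0$ one at a time and in order of their values, which both re-derives the ordered form of the partition and forces $k(g^\circ)$ to be nondecreasing with unit increments. The decisive point is that the crossing of $\theta_{k+1}$ occurs precisely at $g^\circ = \bar{g}^{(k)}$, the first-order-condition level of configuration $k$: within configuration $k$ the optimal level is the constant $\bar{g}^{(k)}$, the participation constraint binds only at $\theta_{k+1}$, and $\theta_{k+1}$ remains the indifferent type until $g^\circ$ rises to meet this level, at which instant its incentive flips, it joins $K$, and the maximizer jumps up to $\bar{g}^{(k+1)}$.

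The main obstacle is making this last coincidence rigorous, i.e. proving that the set of $g^\circ$ for which configuration $k$ is optimal is exactly the stated interval and not a strict subinterval. This demands checking jointly that, throughout the interval, (i) the shadow-value function $\gamma^*(\theta_i)$ produced by \Cref{thrm:u} is a genuine cumulative distribution function with an atom at the binding type, so that condition $\mathcal{A}$, $d\gamma^*(\theta_i) > 0$ on $L(g^\circ)$, holds and $L(g^\circ)$ is indeed the singleton $\{\theta_{k+1}\}$; (ii) the induced allocation satisfies the monotonicity constraint $dg/d\theta \ge 0$, which follows from \Cref{assumpt:regular}; and (iii) the candidate configuration ceases to be feasible exactly once $g^\circ$ exceeds $\bar{g}^{(k)}$. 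Step (iii) is where the monotonicity of $G$ in $g^\circ$ and the type-ordering of the $\bar{v}'_i$ from \Cref{lemma:propofG} do the real work, since they guarantee that raising $g^\circ$ past $\bar{g}^{(k)}$ violates the indifference and participation pattern of configuration $k$ while validating that of configuration $k+1$; stitching the configurations together then yields the step functions in the statement.
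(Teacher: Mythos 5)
Your proposal is correct and takes essentially the same route as the paper's own proof: the ordered partition from \Cref{lemma:partition}, a single binding interior participation constraint at $\theta_l = \theta_{k+1}$, configuration-by-configuration virtual-value first-order conditions that reproduce the displayed thresholds, and the monotone one-at-a-time crossings delivered by \Cref{lemma:propofG} (your explicit check that the thresholds are strictly ordered is left implicit in the paper). One minor imprecision to fix in the write-up: $\bar{g}^{(k)}$ is the outside-option level $g^\circ$ at which the marginal type's indifference $G(\theta_{k+1}; g^\circ) = 0$ holds given the configuration's optimal allocation, not the optimal public-good level itself, although the paper's own threshold displays commit the same conflation.
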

\begin{proof}
\begin{itemize}
    \item $k=0$, $l=1$. Same as in \Cref{prop:unanimtd}.
    \item $0<k<l<r$. By \Cref{lemma:partition}, the set of types is ordered. But then, rents are minimized if the participation constraint binds on a single interior type. Denote this type by $\theta_l$. Formally, the proposer's problem becomes
\begin{equation*}
\begin{gathered}
\max_{g(\theta), t(\theta)} \qquad \int \theta_a \phi(g(\theta)) -t(\theta_a) \ dF(\theta) \\
\text{subject to} \; \; g(\theta) \leq t(\theta_a) + \sum_{i \in N\setminus \{a\}}t(\theta_i),\\  \frac{d g(\theta)}{d \theta} \geq 0, 
 \\ t(\theta_i)= \left\{ \begin{array}{lcl}
        \theta_i\phi(g(\theta_i, \theta_{-i})) - \int_{\underline{\theta}}^{\theta_i} \phi(g(x, \theta_{-i})) dx & \mbox{for} & \theta_i \geq \theta_l,\\
      \theta_i\phi(g(\theta_i, \theta_{-i})) - \int_{\theta_i}^{\overline{\theta}} \Bar{v}_i(x, g^\circ) - \phi(g(x, \theta_{-i})) \ dx & \mbox{for} & \theta_i < \theta_l, 
 \end{array}\right. \\
u_i(\theta_i, g(\theta)) \geq \Bar{v}_i(\theta_i, g^\circ) \qquad \forall \; i \in N \setminus \{a\}\ .
\end{gathered}
\end{equation*}
Then, going through the same derivations as in \Cref{prop:unanimtd} yields the following condition for the optimal $g(\theta)$
\begin{equation*}
    \left[ \theta_a + \sum_{i < l} \left(\theta_i + \frac{F_i(\theta_i)}{f_i(\theta_i)} \right) + \sum_{i \geq l} \left(\theta_i - \frac{1-F_i(\theta_i)}{f_i(\theta_i)} \right)\right]\phi'(g(\theta)) = 1 \ .
\end{equation*}
    \item $k=r-1$, $l=r$. Same as in \Cref{prop:unanimtd}.
    \end{itemize}
\end{proof}

\subsection{Convexity} \label{sec:convexity}
I begin with the proof of \Cref{lemma:partition2}.\\
\begin{proof}
\ Consider $\theta_k$, $\theta_l \in L(g^\circ)$ and $\theta_m$. We have that
\begin{align*}
    dU_m/\theta_m = \phi(g(\theta)) - \Bar{v}'_m(\theta_m, g^\circ) &\geq \phi(g(\theta)) - \Bar{v}'_l(\theta_l, g^\circ) \\
                  &=dU_l/d\theta_l\\
                  &=0 \\
                  &\geq \phi(g(\theta)) - \Bar{v}'_k(\theta_k, g^\circ) \\
                  &= dU_k/d\theta_k \ .
\end{align*}
Hence, $\theta_m \leq \theta_l \leq \theta_k$.
However, in contrast to \Cref{lemma:partition}, there cannot be a single interior type in $L(g^\circ)$.

Intuitively, as $g^\circ$ increases, the agents with the highest types transition into $K(g^\circ)$. By Condition (\ref{eq:envelope}), information rents have to be decreasing in $K(g^\circ)$ and increasing in $M(g^\circ)$. Since $\gamma^*(\theta_i)$ is nondecreasing, it must be constant for all types; i.e., $\gamma^*(\theta_i)=\gamma^*$, for all $i \in N \setminus \{a\}$. But then, information rents are minimized if the IR constraints of the agents with the highest and lowest types bind.

Formally, define $R(\gamma^*)$ to be the utility differential between the high type and the low type
\begin{equation*}
    R(\gamma^*) = U(\overline{\theta}) - U(\underline{\theta}) 
                = \int_{\underline{\theta}}^{\overline{\theta}} \phi(\hat{g}(\theta, \gamma^*)) - \Bar{v}'_i(\theta_i, g^\circ) \ d\theta \ ,
\end{equation*}
where $\hat{g}$ solves (\ref{eq:objfct}).
Clearly, $R(\gamma^*)$ is decreasing. To determine the optimal choice of 
$\gamma^*$, we have the following lemma.
\begin{lemma}[\cite{maggi1995}] \label{lemma:R}
The optimal $\gamma^*$ satisfies
\begin{equation*}
    \gamma^* \equiv \begin{cases}
        1 &\mbox{if $R(1) \geq 0$}\\
        R^{-1}(0) &\mbox{if $R(1)<0<R(0)$}\\
        0 &\mbox{if $R(0) \leq 0$} \ .
    \end{cases}
\end{equation*}
\end{lemma}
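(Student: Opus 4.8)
The plan is to treat \Cref{lemma:R} as the final, scalar step of a KKT argument: the discussion preceding it and \Cref{lemma:partition2} have already reduced the convex case to a single shadow value $\gamma^* \in [0,1]$, constant across the interior, so all that remains is to pin down its value. First I would record the dual interpretation of this constant furnished by \Cref{lemma:objfct}. Since $\gamma^*(\theta_i)$ is nondecreasing with $\gamma^*(\overline{\theta})=1$, a profile that is flat on $(\underline{\theta},\overline{\theta})$ carries an atom of mass $\gamma^*$ at $\underline{\theta}$ and an atom of mass $1-\gamma^*$ at $\overline{\theta}$, and these two atoms are precisely the multipliers on the boundary participation constraints $U(\underline{\theta})\ge 0$ and $U(\overline{\theta})\ge 0$. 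That only these two constraints can be active is the content of the convex case: by $(\ref{eq:envelope})$ the net rent $U(\cdot)$ is single-peaked (increasing on $M(g^\circ)$, decreasing on $K(g^\circ)$), so its minimum over $[\underline{\theta},\overline{\theta}]$ is attained at an endpoint. For each admissible $\gamma^*$ the optimal allocation is $\xi(\gamma^*,\theta)$, and, as already noted, $R(\gamma^*)=U(\overline{\theta})-U(\underline{\theta})=\int_{\underline{\theta}}^{\overline{\theta}}\big[\phi(\xi(\gamma^*,\theta))-\Bar{v}_i'(\theta_i,g^\circ)\big]\,d\theta$ is continuous and strictly decreasing.

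Next I would assemble the three complementary-slackness facts: (a) $\gamma^*>0$ forces the bottom constraint to bind, $U(\underline{\theta})=0$; (b) $\gamma^*<1$ forces the top constraint to bind, $U(\overline{\theta})=0$; (c) primal feasibility requires $U(\underline{\theta})\ge 0$ and $U(\overline{\theta})\ge 0$. Because the reduced objective $(\ref{eq:objfct})$ is strictly quasi-concave in $g$, and the constraint set is convex, these conditions are sufficient as well as necessary, so it suffices to exhibit, for each sign configuration of $R$, a $\gamma^*$ satisfying (a)--(c).

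I would then read off the three cases from the identity $R(\gamma^*)=U(\overline{\theta})-U(\underline{\theta})$. If $\gamma^*\in(0,1)$, then (a) and (b) give $U(\underline{\theta})=U(\overline{\theta})=0$, hence $R(\gamma^*)=0$; strict monotonicity and continuity of $R$ make $R^{-1}(0)$ a well-defined singleton, which lies in $(0,1)$ exactly when $R(1)<0<R(0)$. At the corner $\gamma^*=1$, (a) gives $U(\underline{\theta})=0$, so $R(1)=U(\overline{\theta})$, and feasibility (c) holds iff $R(1)\ge 0$. At $\gamma^*=0$, (b) gives $U(\overline{\theta})=0$, so $R(0)=-U(\underline{\theta})$, and (c) holds iff $R(0)\le 0$. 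Since $R$ is decreasing, the three regimes $R(1)\ge 0$, $R(1)<0<R(0)$, and $R(0)\le 0$ partition the parameter space, which is exactly the claimed formula.

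The hard part will be the first step: justifying that the flat shadow value's two atoms are the boundary participation multipliers and that no interior constraint can bind. Once the single-peakedness of $U$ in the convex case and the dual reading of $\gamma^*$ from \Cref{lemma:objfct} are in hand, everything else is bookkeeping. The only further analytic inputs are the continuity and strict monotonicity of $R$, which I would inherit from the strict quasi-concavity of $\sigma$ in $g$ and the resulting continuous, monotone comparative statics of $\xi(\gamma^*,\theta)$ in $\gamma^*$; the intermediate value theorem then closes the interior case.
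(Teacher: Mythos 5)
Your proposal is correct and takes essentially the same route as the paper's own proof: corner cases settled by the sign of $R$ at $\gamma^*\in\{0,1\}$, the interior case by solving $R(\gamma^*)=0$ using continuity and strict monotonicity of $R$, and global feasibility reduced to the two endpoints via single-peakedness of the rent profile (the paper phrases this as quasiconcavity of $J(\theta;g^\circ)$, whose derivative $\phi(\hat g)-\Bar{v}_i'(\theta_i,g^\circ)$ crosses zero once because $\Bar{v}_i$ is convex). The only difference is presentational: you make explicit the complementary-slackness reading of the flat $\gamma^*$ as two boundary atoms multiplying $U(\underline{\theta})\geq 0$ and $U(\overline{\theta})\geq 0$, which the paper leaves implicit by appeal to \cite{jullien2000}.
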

\begin{proof}
\ When $R(1) \geq 0$ or $R(0) \leq 0$, the same reasoning as in \Cref{prop:unanimtd} applies. Consider now the case $R(1)<0<R(0)$. Choosing $\gamma^*$ ensures that $R(\gamma^*)=0$ and incentive compatibility is preserved.
Moreover, denote
\begin{equation}
    J(\theta; g^\circ) \equiv \int_{\underline{\theta}}^{\theta} \phi(\hat{g}(x; \gamma^*)) - \Bar{v}'_i(x, g^\circ) \ dx
\end{equation}
where $\hat{g}$ solves (\ref{eq:objfct}). We have the following lemma.
\begin{lemma}
    $J(\theta;g^\circ)$ is quasiconcave in $\theta$.
\end{lemma}
\begin{proof}
  \  $J'(\theta; g^\circ)=0$ when $G(\theta_i; g^\circ)=0$, which occurs for a single $\theta_i$. By convexity of $\Bar{v}_i$, $J'(\theta; g^\circ)$ is decreasing for $\theta' > \theta$.
\end{proof}
Then, $U(\theta)=J(\theta;g^\circ)+U(\underline{\theta})$ and, by quasiconcavity, $U(\theta) \geq \min\left\{U(\underline{\theta}), U(\overline{\theta})\right\}$ for all $\theta_i \in \Theta_i$. Hence, if the individual rationality constraint is satisfied for the highest and lowest types, it is satisfied for all types.
\end{proof}
This concludes the proof of \Cref{lemma:partition2}.\\
\medskip\\
As in \Cref{sec:concavity}, it is useful to define the following thresholds of the support of $g^\circ$ recursively as follows:
\begin{align*}
  \bar{g}^L &\equiv \left[ \theta_a + \sum_{i \in L,M} \left(\theta_i - \frac{1-F_i(\theta_i)}{f_i(\theta_i)} \right)\right]\phi'(g(\theta)) = 1  \\
    \bar{g}^{L-1} &\equiv \left[ \theta_a + \left(\theta_r - \frac{\gamma^*_{L-1} -F_r(\theta_r)}{f_r(\theta_r)} \right)+ \sum_{i \in L,M} \left(\theta_i - \frac{\gamma^*_{L-1}-F_i(\theta_i)}{f_i(\theta_i)} \right)\right]\phi'(g(\theta)) = 1  \\
    \bar{g}^{H-1} &\equiv \left[ \theta_a +  \sum_{i \in K} \left(\theta_i - \frac{\gamma^*_{H-1} - F_i(\theta_i)}{f_i(\theta_i)} \right) + \left(\theta_1 - \frac{\gamma^*_{H-1}-F_1(\theta_1)}{f_r(\theta_1)} \right) \right]\phi'(g(\theta)) = 1 \\
     \bar{g}^H &\equiv \left[ \theta_a + \sum_{i \in L,K} \left(\theta_i + \frac{F_i(\theta_i)}{f_i(\theta_i)} \right)\right]\phi'(g(\theta)) = 1  \ .
\end{align*}
Notice that $\gamma^*$ changes with the outside option public-good level, as shown in \Cref{lemma:R}.\\
\Cref{lemma:partition2} and \Cref{lemma:propofG} imply the following proposition.
\begin{proposition} \label{prop:kl}
Suppose $\Bar{v}_i''(\theta_i, g^\circ) \geq 0$. The indices $k = k(g^\circ)$ and $l = l(g^\circ)$ are given by:
\begin{align*}
    k(g^\circ) &= \left\{ \begin{array}{lcl} 
    0 & \mbox{if} & 0 \leq g^\circ < \bar{g}^L \\ 
    1 & \mbox{if} & \bar{g}^L \leq g^\circ < \bar{g}^{L-1} \\
    \;\vdots & \mbox{if} & \qquad \vdots \qquad \vdots \qquad \\
    r-2 & \mbox{if} & \bar{g}^{H-1} \leq g^\circ < \bar{g}^H \\
    r-1 & \mbox{if} & \bar{g}^H \leq g^\circ \ ,
    \end{array}\right.\\
    &\text{ }\\
    l(g^\circ) &= \left\{ \begin{array}{lcl} 
    r & \mbox{if} & 0 \leq g^\circ < \bar{g}^L \\ 
    r-1 & \mbox{if} & \bar{g}^L \leq g^\circ < \bar{g}^{L-1} \\
    \;\vdots & \mbox{if} & \qquad \vdots \qquad \vdots \qquad \\
    2 & \mbox{if} & \bar{g}^{H-1} \leq g^\circ < \bar{g}^H \\
    1 & \mbox{if} & \bar{g}^H \leq g^\circ \ .
    \end{array}\right.
\end{align*}
\end{proposition}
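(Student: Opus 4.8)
The plan is to establish the step-function form of $k(g^\circ)$ and $l(g^\circ)$ by tracking how the partition $(K(g^\circ),L(g^\circ),M(g^\circ))$ evolves as $g^\circ$ increases, and by showing that each change of membership is pinned down by exactly one of the recursively-defined thresholds $\Bar{g}^L<\Bar{g}^{L-1}<\dots<\Bar{g}^{H-1}<\Bar{g}^H$. The three ingredients are \Cref{lemma:partition2} (which fixes the ordering of the partition, places both extremes $\{\theta_1,\theta_r\}$ in $L(g^\circ)$, and reduces the transition between $M$ and $K$ to a single interior crossing), \Cref{lemma:propofG} (monotonicity of $G(\theta_i;g^\circ)$ in $\theta_i$ and in $g^\circ$), and \Cref{lemma:R} (the optimal constant shadow value $\gamma^*$).

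First I would fix a candidate partition consistent with \Cref{lemma:partition2} and derive its associated public-good level. Because information rents must be decreasing on $K(g^\circ)$ and increasing on $M(g^\circ)$ while $\gamma^*(\cdot)$ is nondecreasing, \Cref{lemma:R} forces $\gamma^*(\theta_i)\equiv\gamma^*$ to be constant across types, with $\gamma^*=R^{-1}(0)$ whenever the crossing is interior. Substituting this constant shadow value into the virtual-surplus representation (\ref{eq:objfct}) and maximizing pointwise returns $g^*=\xi(\gamma^*,\theta)$, which is exactly the first-order condition that defines the corresponding threshold $\Bar{g}^{\,\cdot}$ in the list preceding the statement.

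Next I would show that, holding $g^*$ fixed, agent $\theta_i$ lies in $K(g^\circ)$ if and only if $G(\theta_i;g^\circ)=\phi(g^*)-\Bar{v}_i'(\theta_i,g^\circ)<0$. By \Cref{lemma:propofG}, in the convex case $G$ is strictly monotone in $\theta_i$, so $K(g^\circ)$ and $M(g^\circ)$ are the upper and lower intervals flanking the single interior crossing asserted in \Cref{lemma:partition2}. Since $G$ is also monotone in $g^\circ$, the crossing type moves monotonically with $g^\circ$; the value of $g^\circ$ at which the current marginal agent is exactly indifferent, $G=0$, is precisely the threshold obtained in the previous step, recovered as $\Bar{g}(\theta_i,0)$, which is well defined and strictly monotone by \Cref{lemma:propofG}. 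An induction on the number of agents already absorbed into $K(g^\circ)$ then completes the argument: at each threshold exactly one agent (the highest type still in $M$) crosses, so $k$ and $l$ each change by one, and the ordered thresholds partition the support of $g^\circ$ into intervals each carrying a unique pair $(k(g^\circ),l(g^\circ))$ as tabulated.

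The main obstacle is the self-consistency between $g^*$ and the partition: the cutoff that determines $(K,M)$ depends on $g^*$ through $G$, whereas $g^*=\xi(\gamma^*,\theta)$ itself depends on the partition and, via \Cref{lemma:R}, on $\gamma^*$. I would resolve this fixed point by a monotone-comparative-statics argument: strict quasiconcavity of (\ref{eq:objfct}) gives a unique maximizer $\xi$, and the monotonicity of $G$ in both arguments guarantees that exactly one partition is consistent with its own induced $g^*$ on each threshold interval, so the thresholds are well separated and the assignment of $(k,l)$ is unambiguous. The wrinkle relative to the concave case is that here $\gamma^*$ is interior and varies with $g^\circ$ through $R(\gamma^*)=0$; since $R$ is strictly decreasing (as noted after its definition), $\gamma^*$—and hence each threshold—is uniquely pinned down, which is the only extra monotonicity check the convex argument requires.
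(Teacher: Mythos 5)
Your proposal is correct and follows essentially the same route as the paper: the paper's proof likewise rests on \Cref{lemma:partition2}, \Cref{lemma:propofG}, and the constant shadow value $\gamma^*$ from \Cref{lemma:R}, handles the extreme cases ($k=0,\,l=1$ and $k=r-1,\,l=r$) by reduction to \Cref{prop:unanimtd}, and obtains each threshold from the same pointwise first-order condition on the virtual surplus (\ref{eq:objfct}) with binding participation constraints at both extremes of the type space. Your explicit induction over thresholds and the fixed-point consistency check between $g^*$ and the partition merely spell out what the paper compresses into ``\Cref{lemma:partition2} and \Cref{lemma:propofG} imply the following proposition,'' so the substance is identical.
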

\begin{proof}
\begin{itemize}
    \item $k=0$, $l=1$. Same as in \Cref{prop:unanimtd}.
    \item $0<k<l<r$. In order to preserve incentive compatibility, according to condition (\ref{eq:envelope}) information rents have to be increasing (decreasing) over the set of types who have the incentive to understate (overstate) their type. By \Cref{lemma:partition2}, the set of types is ordered. But then, rents are minimized if the participation constraints of the lowest and highest types bind. Formally, the proposer's problem becomes
\begin{equation*}
\begin{gathered}
\max_{g(\theta), t(\theta)} \qquad \int \theta_a \phi(g(\theta)) -t(\theta_a) \ dF(\theta) \\
\text{subject to} \; \; g(\theta) \leq t(\theta_a) + \sum_{i \in N\setminus \{a\}}t(\theta_i),\\  \frac{d g(\theta)}{d \theta} \geq 0, 
 \\ t(\theta_i)= \left\{ \begin{array}{lcl}
        \theta_i\phi(g(\theta_i, \theta_{-i})) - \int_{\underline{\theta}}^{\theta_i} \phi(g(x, \theta_{-i})) dx & \mbox{for} & \theta_i \leq \theta_l,\\
      \theta_i\phi(g(\theta_i, \theta_{-i})) - \int_{\theta_i}^{\overline{\theta}} \Bar{v}_i(x, g^\circ) - \phi(g(x, \theta_{-i})) \ dx & \mbox{for} & \theta_i > \theta_l, 
 \end{array}\right. \\
u_i(\theta_i, g(\theta)) \geq \Bar{v}_i(\theta_i, g^\circ) \qquad \forall \; i \in N \setminus \{a\}\ .
\end{gathered}
\end{equation*}
Then, going through the same derivations as in \Cref{prop:unanimtd} yields the following condition for the optimal $g(\theta)$
\begin{equation*}
    \left[ \theta_a + \sum_{i > l} \left(\theta_i - \frac{\gamma^* -F_i(\theta_i)}{f_i(\theta_i)} \right) + \sum_{i \leq l} \left(\theta_i - \frac{\gamma^*-F_i(\theta_i)}{f_i(\theta_i)} \right)\right]\phi'(g(\theta)) = 1 \ .
\end{equation*}
    \item $k=r-1$, $l=r$. Same as in \Cref{prop:unanimtd}.
    \end{itemize}
\end{proof}
This concludes the proof.
\end{proof}

\subsection{Proof of \Cref{thrm:m}}
\begin{proof}
\ From \Cref{thrm:u}, we immediately have the following corollaries.
\begin{corollary} \label{cor:aux}
    For a given coalition $Q \in \mathcal{Q}$, the solution to the unanimity problem has to hold.
\end{corollary}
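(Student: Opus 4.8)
The plan is to fix a coalition $Q \in \mathcal{Q}$ and argue that, conditional on this choice, the majority program \ref{mproblem} is an instance of the unanimity program \ref{uproblem}, so that \Cref{thrm:u} applies verbatim. Holding $Q$ fixed, I would compare the two programs constraint by constraint: the objective, the resource constraint, the monotonicity constraint $\frac{dg(\theta)}{d\theta} \geq 0$, and the envelope conditions $dU_i/d\theta_i = \phi(g(\theta)) - \Bar{v}_i'(\theta_i, g^\circ)$ for all $i \in N \setminus \{a\}$ coincide in \ref{mproblem} and \ref{uproblem}. The only difference is that the participation constraints $U_j(\theta_j, g(\theta)) \geq \Bar{v}_j(\theta_j, g^\circ)$ are imposed for $j \in Q$ rather than for every $i \in N \setminus \{a\}$. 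Hence, with $Q$ fixed, the conditional program is formally a unanimity program whose participation set is $Q \setminus \{a\}$.

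Next I would invoke \Cref{lemma:objfct} to write the objective in the virtual-surplus form $\sigma(\theta, \gamma(\theta), g)$. Since this rewriting relies only on the envelope conditions and on the shadow-value cumulative distribution function $\gamma^*$, it carries over unchanged, the sole caveat being that the increments $d\gamma^*$ --- the shadow values on the participation constraints --- are now supported on the types of agents in $Q$, because agents outside $Q$ carry no participation constraint and therefore zero shadow value. Given the strict quasiconcavity of $\sigma$ in $g$, the argument in the proof of \Cref{thrm:u} (following \cite{jullien2000}) then delivers existence and uniqueness of the conditional optimum, together with its characterization by conditions $\mathcal{A}$ and $\mathcal{B}$ and the induced partition of $N \setminus \{a\}$ into $K(g^\circ)$, $L(g^\circ)$ and $M(g^\circ)$.

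The step I expect to be the main obstacle is confirming that $\gamma^*$ remains a valid cumulative distribution function once the participation constraints are restricted to $Q$. In particular, the normalization $\gamma^*(\overline{\theta}) = 1$ and the nondecreasing property must be re-established: the former holds because uniformly relaxing all participation constraints of the agents in $Q$ is still equivalent to a uniform increase in their utilities, and the latter follows exactly as in Step 1 of the proof of \Cref{lemma:objfct}. Once these properties are in hand, the conditional solution inherits the full characterization of \Cref{thrm:u}, which is precisely the assertion of the corollary.
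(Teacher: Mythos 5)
Your proposal is correct and matches the paper's route: the paper states \Cref{cor:aux} as an immediate consequence of \Cref{thrm:u}, since for fixed $Q$ the program \ref{mproblem} differs from \ref{uproblem} only in imposing participation constraints on $Q$ rather than on all of $N \setminus \{a\}$, which is exactly the reduction you spell out via \Cref{lemma:objfct} and the multiplier cdf $\gamma^*$. Your additional care about $\gamma^*$ being flat on types without a binding participation constraint is consistent with how the paper later uses this machinery (e.g., \Cref{lemma:R2}), so the fleshed-out argument is sound.
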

\begin{corollary}
    The participation constraint holds for at least $q$ agents.
\end{corollary}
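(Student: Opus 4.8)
The plan is to read the statement as the cardinality bound $|I(g^\circ)| \le n-q$ announced in \Cref{thrm:m} and to derive it from the structure supplied by \Cref{cor:aux}. The point of departure is a feasibility observation: for the proposal to be implemented the agenda-setter needs $q$ favourable votes, and in a dominant-strategy equilibrium agent $i$ votes in favour only if the proposal leaves her at least as well off as the outside option, i.e. only if her participation constraint $U_i(\theta_i, g(\theta)) \ge \Bar{v}_i(\theta_i, g^\circ)$ holds. Consequently any implementable proposal must satisfy the participation constraint for at least $q$ agents --- which is exactly why $\mathcal{Q}$ consists of coalitions of size $q$ and why, in problem \ref{mproblem}, participation is imposed on every member of the chosen $Q$.

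First I would fix the optimal coalition $Q^* \in \mathcal{Q}$ and invoke \Cref{cor:aux}: conditional on $Q^*$, the solution coincides with that of the unanimity problem \ref{uproblem} in which participation is required of $Q^* \setminus \{a\}$. By \Cref{thrm:u} this conditional problem has a unique optimum at which every participation constraint indexed by $j \in Q^*$ holds, with binding occurring on a single interior type or at the extremes according to the partition $K(g^\circ)$, $L(g^\circ)$, $M(g^\circ)$. Since the participation constraint holds for each $j \in Q^*$, none of these agents lies in $I(g^\circ) = \{i \in N : U_i(\theta_i) < 0\}$; hence $I(g^\circ) \subseteq N \setminus Q^*$ and $|I(g^\circ)| \le |N \setminus Q^*| = n-q$. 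Equivalently, the participation constraint is satisfied on $N \setminus I(g^\circ)$, a set of at least $q$ agents.

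Finally I would explain why the count can exceed $q$, so that the statement reads ``at least $q$'' rather than ``exactly $q$''. An excluded agent $i \in N \setminus Q^*$, whose participation constraint is not imposed, may nonetheless satisfy it at the optimal allocation --- for instance when $g^*(\theta)$ lies close to $g^\circ$, so that the surplus over the outside option is non-negative even without the constraint being active --- and such agents inflate the count without enlarging $I(g^\circ)$. I do not expect a genuine obstacle: the substantive analysis is already carried out in \Cref{thrm:u} and \Cref{cor:aux}, and the corollary is a bookkeeping consequence of the fact that securing $q$ votes forces at least $q$ participation constraints to hold. The only point requiring minor care is the legitimacy of conditioning on $Q^*$ and treating the resulting problem as a unanimity problem, which is precisely the content of \Cref{cor:aux}.
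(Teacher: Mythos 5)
Your proposal is correct and follows essentially the same route as the paper, which states this corollary as an immediate consequence of \Cref{thrm:u}: fixing the optimal coalition $Q^* \in \mathcal{Q}$ (with $|Q^*| = q$ and $a \in Q^*$), the conditional problem is the unanimity problem of \Cref{cor:aux}, whose solution satisfies the participation constraint for every $j \in Q^*$, so that $I(g^\circ) \subseteq N \setminus Q^*$ and hence $|I(g^\circ)| \le n-q$. Your closing observation that excluded agents may voluntarily satisfy participation (justifying ``at least'' rather than ``exactly'' $q$) is a sound refinement consistent with the paper's statement $|I(g^\circ)| \le n-q$ in \Cref{thrm:m}.
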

\begin{corollary} \label{cor:convex}
    $K(g^\circ)$ and $M(g^\circ)$ are convex sets.
\end{corollary}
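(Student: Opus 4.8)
The plan is to recast ``convex set of types'' in the sense of the accompanying footnote as ``interval in the type ordering'': I must show that whenever $\theta_i<\theta_j<\theta_k$ with $\theta_i,\theta_k\in K(g^\circ)$ (respectively $M(g^\circ)$), the intermediate type $\theta_j$ lies in the same set. Since agent $i$'s membership is decided solely by the sign of $dU_i/d\theta_i=\phi(g^*(\theta))-\Bar{v}_i'(\theta_i,g^\circ)$ at the optimal allocation, the whole question reduces to how this sign behaves as agents are ordered by type.

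For the concave case I would invoke \Cref{lemma:propofG} directly: at a fixed public-good level the map $\theta_i\mapsto\phi(g)-\Bar{v}_i'(\theta_i,g^\circ)$ is strictly increasing, so it crosses zero at most once. Hence $K(g^\circ)=\{dU_i/d\theta_i<0\}$ is a lower tail and $M(g^\circ)=\{dU_i/d\theta_i>0\}$ an upper tail, with $L(g^\circ)$ a single interior type — exactly the contiguous ordering already recorded in \Cref{lemma:partition}. Lower and upper tails are intervals, so both sets are convex, and the corollary is immediate from \Cref{thrm:u}.

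The convex case is where I expect the real work to lie, and it is the main obstacle. Here the naive single-crossing argument fails: $dU_i/d\theta_i=\phi(g^*(\theta))-\Bar{v}_i'(\theta_i,g^\circ)$ need not be monotone in $\theta_i$, because the optimal level $g^*(\theta)$ is itself increasing in the type (condition~(i) of \Cref{lemma:ic}) while $\Bar{v}_i'$ is increasing as well, so the two forces pull in opposite directions and $dU_i/d\theta_i$ can vanish at \emph{both} ends of the type space. I would therefore lean on the structure established in the proof of \Cref{lemma:partition2}: the requirement that information rents decrease on the overstating set and increase on the understating set, together with $\gamma^*$ being nondecreasing, forces $\gamma^*$ to be constant and the participation constraint to bind at the two extremes, so that $\underline{\theta},\overline{\theta}\in L(g^\circ)$. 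The remaining types then split at a single interior cutoff into a contiguous low block, which is $M(g^\circ)$, and a contiguous high block, which is $K(g^\circ)$. Each block is an interval and hence convex, even though $L(g^\circ)$ is now disconnected.

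Putting the two regimes together exhausts the possibilities: in either case $K(g^\circ)$ and $M(g^\circ)$ are contiguous in the type ordering, which is the claim. The subtlety to keep in mind throughout is that the partition is read off at the optimal, type-dependent allocation $g^*(\theta)$ rather than at an arbitrary fixed $g$; the argument survives precisely because \Cref{lemma:propofG} and the monotonicity of $g^*$ deliver single-crossing in the concave case and the two-sided binding structure in the convex case.
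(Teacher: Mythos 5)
Your proof is correct and takes essentially the same route as the paper: the corollary is stated there as immediate from \Cref{thrm:u}, whose appendix proof establishes exactly the contiguity you invoke --- \Cref{lemma:partition} (single crossing via \Cref{lemma:propofG}) in the concave case, and \Cref{lemma:partition2} (constant $\gamma^*$, participation binding at both extremes, $M(g^\circ)$ a contiguous low block and $K(g^\circ)$ a contiguous high block) in the convex case. One remark: your concern that single crossing fails in the convex case is unnecessary for this particular corollary, because the $K/L/M$ classification compares different agents at a common realized profile $\theta$, where $\phi(g(\theta))$ is the same number for every agent, so monotonicity of $\Bar{v}_i'$ in the type alone orders the signs of $dU_i/d\theta_i$ --- this is precisely the inequality chain in the paper's proof of \Cref{lemma:partition2}, and the type-dependence of $g^*(\theta)$ only matters for the per-agent rent analysis (the quasiconcavity of $J$), not here.
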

What remains to be done is identifying the set of agents whose participation constraint is not satisfied.
\begin{proposition} \label{prop:bunch}
    Suppose $dU_i/d\theta_i$ has the same sign for all agents. Then, the participation constraint binds on a single type $\theta_l$. There exists a set $I(g^\circ)= \left\{ i \in N \ | \ U(\theta_i) < 0 \right\}$, with $|I(g^\circ)| \leq n-q$, such that
    \begin{itemize}
        \item[(i)] If  $dU_i/d\theta_i<0 \ \forall i \in N$, then
        \begin{equation*}
            I=\left\{ i \in N \ | \ \theta_i \leq \theta_l \right\} \ ,
        \end{equation*}
        \item[(ii)] If  $dU_i/d\theta_i>0 \ \forall i \in N$, then
        \begin{equation*}
            I=\left\{ i \in N \ | \ \theta_i \geq \theta_l \right\} \ .
        \end{equation*}
    \end{itemize}
Moreover, a subset of agents in $I$ is bunched at the allocation of $\theta_l$.
\end{proposition}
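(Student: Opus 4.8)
The plan is to reduce the problem to the one-dimensional screening logic already used in \Cref{prop:majoritd} and then pin down the excluded set. First I would invoke \Cref{cor:aux}: for any fixed coalition $Q \in \mathcal{Q}$ the unanimity characterization applies, so the rent schedule $U_i(\theta_i)$ is determined up to a constant of integration by the envelope condition of \Cref{lemma:ic}, $dU_i/d\theta_i = \phi(g(\theta)) - \Bar{v}_i'(\theta_i, g^\circ)$. Under the hypothesis that this derivative has a constant sign across all agents, $U_i$ is strictly monotone in type: increasing when the dominating incentive is to understate ($dU_i/d\theta_i > 0$) and decreasing when it is to overstate ($dU_i/d\theta_i < 0$). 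This monotonicity is the single-crossing property that drives the whole argument.

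Next I would show that exactly one participation constraint binds and identify $\theta_l$. Writing $t_i(\theta) = \theta_i \phi(g(\theta)) - \Bar{v}_i(\theta_i, g^\circ) - U_i(\theta_i)$, the agenda-setter's revenue is decreasing in every rent $U_i$, and---crucially---this includes the agents who are \emph{forced} to participate, since they too remit transfers. Hence the agenda-setter pushes the free level of the rent schedule as far down as the $q-1$ coalitional participation constraints permit. Because $U_i$ is monotone, the level is pinned down by a single binding constraint, located at the extreme of the coalition: the lowest retained type when rents increase, the highest retained type when they decrease. Call this binding type $\theta_l$. Optimality of the coalition then follows exactly as in \Cref{lemma:coalition}: to make the binding constraint as slack as possible at a low rent level, the agenda-setter retains the $q-1$ types whose rents are largest, i.e.\ the top $q-1$ types in the understating case (coalition $\overline{Q}$) and the bottom $q-1$ types in the overstating case (coalition $\underline{Q}$). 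The complement is therefore an interval of types at the opposite end of $[\underline{\theta}, \overline{\theta}]$, which is exactly $I(g^\circ) = \{i : U_i(\theta_i) < 0\}$. Since each additional exclusion strictly raises revenue, while no coalitional constraint forces the inclusion of more than $q-1$ agents, it is optimal to exclude the maximal number, giving $|I(g^\circ)| = n - q$.

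Finally I would establish the bunching claim and flag the main difficulty. For the excluded types lying beyond $\theta_l$, monotonicity of $g(\theta)$ together with their violated participation constraints implies that separating them is never revenue-improving; the cheapest incentive-compatible arrangement bunches a subset of them at the allocation of the binding type $\theta_l$, which is precisely the step that produces the term $|I| \cdot \Tilde{\theta}$ in the optimality condition of \Cref{prop:majoritd}. Reproducing that change-of-order-of-integration computation determines how many excluded agents are bunched and how many remain distorted, closing the argument. The main obstacle is the rigorous verification of the two structural facts---that a single participation constraint binds and that $I(g^\circ)$ is a contiguous interval rather than a scattered set---both of which rest on the constant-sign (single-crossing) hypothesis; showing that bunching at $\theta_l$ simultaneously preserves incentive compatibility and is revenue-optimal, rather than some finer partial-separation scheme, is the delicate part.
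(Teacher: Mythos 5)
Your proposal is correct and follows essentially the same route as the paper: it reduces the problem via \Cref{cor:aux} to the machinery of \Cref{prop:majoritd} --- monotone rents from the envelope condition under the constant-sign hypothesis, a single binding participation constraint at the cutoff type, exclusion of a contiguous block of $n-q$ types at one end of the type space, and bunching of a subset of the excluded agents at the cutoff's allocation, justified by the same change-of-order-of-integration computation that produces the $|I|\cdot\Tilde{\theta}$ term. One remark: your case orientation (rents increasing, i.e.\ $dU_i/d\theta_i>0$, implies the excluded set is $\left\{ i : \theta_i \leq \theta_l \right\}$) agrees with the paper's own Case~I claim in the proof of \Cref{prop:majoritd} and with \Cref{lemma:coalition}, whereas items (i) and (ii) of the printed statement have the inequalities the other way around --- so you have in effect proved the version consistent with the paper's proof, and the proposition as printed appears to transpose the two cases.
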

\Cref{prop:bunch} applies when $g^\circ$ is sufficiently high or low.
When agents have heterogeneous incentives to misreport their types - $dU_i/d\theta_i$ does not have the same sign for all agents - then, the choice of the set $I(g^\circ)$ hinges on the distribution of information rents induced by incentive compatibility; hence, it crucially depends on the shape of the reservation utility profile. The concave and convex cases are therefore dealt with separately.
\subsubsection{Concavity}
The construction needed is shown in \Cref{fig:rentsconcavemex} and can be derived with a very similar reasoning as in \Cref{lemma:R}. Intuitively, it is efficient for the agenda-setter to violate $n-q$ participation constraints. To preserve incentive compatibility, we need the following lemma.\\
Formally, define $R(\gamma^*)$ to be the utility differential between $\theta_p$ and $\theta_q$
\begin{equation*}
    R(\gamma^*) = U(\theta_q) - U(\theta_p) 
                = \int_{\theta_p}^{\theta_q} \phi(\hat{g}(\theta, \gamma^*)) - \Bar{v}'_i(\theta_i, g^\circ) \ d\theta \ ,
\end{equation*}
where $\hat{g}$ solves (\ref{eq:objfct}).
Clearly, $R(\gamma^*)$ is decreasing.
\begin{lemma} \label{lemma:R2}
The optimal $\gamma^*$ satisfies
\begin{equation*}
    \gamma^* \equiv \begin{cases}
        F_q(\theta_q) &\mbox{if $R(F_q(\theta_q)) \geq 0$}\\
        R^{-1}(0) &\mbox{if $R(F_q(\theta_q))<0<R(F_p(\theta_p))$}\\
        F_p(\theta_p) &\mbox{if $R(F_p(\theta_p)) \leq 0$} \ .
    \end{cases}
\end{equation*}
\end{lemma}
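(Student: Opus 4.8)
The plan is to transplant the argument behind \Cref{lemma:R} to the concave configuration of \Cref{fig:rentsconcavemex}, where the excluded block $I$ is the middle set of types bounded below by $\theta_p \in K(g^\circ)$ and above by $\theta_q \in M(g^\circ)$. The first step is to pin down the feasible range of the shadow value. Since no coalition member's participation constraint binds in the interior of the excluded interval, complementary slackness (condition $\mathcal{A}$ of \Cref{thrm:u}) forces $d\gamma^*(\theta_i)=0$ on $(\theta_p,\theta_q)$; hence $\gamma^*$ is constant there, equal to a single value $\gamma^*$. Monotonicity of the shadow distribution, together with the benchmark value it must match at the two boundaries, confines this constant to $\gamma^* \in [F_p(\theta_p), F_q(\theta_q)]$, exactly as the constant value ranged over $[0,1]=[F(\underline{\theta}),F(\overline{\theta})]$ in \Cref{lemma:R}. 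This identification of the range is what makes $R^{-1}(0)$ the correct selection rather than a free optimization over all of $[0,1]$.

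Next I would invoke the monotonicity of $R$. As noted just before the statement, $R(\gamma^*)=U(\theta_q)-U(\theta_p)$ is strictly decreasing in $\gamma^*$: raising $\gamma^*$ raises the virtual cost, lowers $\hat{g}(\cdot,\gamma^*)$ pointwise through (\ref{eq:objfct}), and therefore shrinks the integrand $\phi(\hat{g})-\Bar{v}'_i$. The three cases then follow from where $R$ sits relative to zero at the two endpoints of $[F_p(\theta_p),F_q(\theta_q)]$. If $R(F_q(\theta_q))\geq 0$, the rent differential cannot be driven down to zero, so only the lower boundary constraint binds ($U(\theta_p)=0$); cost minimization pushes $\gamma^*$ to the top of its range, $\gamma^*=F_q(\theta_q)$, and the problem collapses to the single-binding-type argument of \Cref{prop:unanimtd}. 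The case $R(F_p(\theta_p))\leq 0$ is symmetric, with $\theta_q$ binding and $\gamma^*=F_p(\theta_p)$. In the interior case $R(F_q(\theta_q))<0<R(F_p(\theta_p))$, strict monotonicity and continuity of $R$ give a unique $\gamma^*=R^{-1}(0)$, which equalizes $U(\theta_p)=U(\theta_q)$ and makes both boundary constraints bind.

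The remaining step, and the main obstacle, is to verify that the interior candidate is globally feasible, i.e. that every coalition member (every type outside $(\theta_p,\theta_q)$) still earns nonnegative rent and that $\hat{g}$ stays monotone. Here I would exploit concavity of $\Bar{v}_i$: by the envelope condition (\ref{eq:envelope}) and \Cref{lemma:propofG}, $dU_i/d\theta_i=\phi(\hat{g})-\Bar{v}'_i(\theta_i,g^\circ)$ is increasing in $\theta_i$, so the rent profile $U$ is convex with a single interior minimum at the cutoff $\Tilde{\theta}\in(\theta_p,\theta_q)$; this is the concave counterpart of the quasiconcavity of $J$ used in \Cref{lemma:R}. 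Convexity then delivers $U(\theta)\geq U(\theta_p)=0$ for all $\theta\leq\theta_p$ and $U(\theta)\geq U(\theta_q)=0$ for all $\theta\geq\theta_q$, so controlling the rent at the two boundary types controls it across the entire coalition, while monotonicity of $\hat{g}$ follows from \Cref{assumpt:regular} as in \Cref{prop:unanimtd}. The delicate part is the first step, namely justifying that the shadow value is pooled precisely on $[\theta_p,\theta_q]$ with range $[F_p(\theta_p),F_q(\theta_q)]$; once this is in place, the rest is the standard Maggi--Rodriguez construction applied to the boundary types of the excluded block.
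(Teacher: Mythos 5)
Your proposal is correct and follows essentially the same route as the paper's own proof: corner cases delegated to the single-binding-type analysis (the paper invokes \Cref{prop:bunch}), the interior case settled by monotonicity and continuity of $R$ with $\gamma^* = R^{-1}(0)$ equalizing $U(\theta_p) = U(\theta_q)$, and concavity of $\Bar{v}_i$ making $dU_i/d\theta_i$ increasing---your convexity of $U$ is exactly the paper's quasiconvexity of $J$ (since $U = J + U(\theta_p)$), which the paper uses to show $U(\theta) \leq \max\left\{U(\theta_p), U(\theta_q)\right\}$ on $[\theta_p, \theta_q]$ and you use, equivalently, to show nonnegative rents for coalition members outside. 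Your explicit first step---pooling $\gamma^*$ on $(\theta_p,\theta_q)$ by complementary slackness and confining it to $[F_p(\theta_p), F_q(\theta_q)]$---is left implicit in the paper's proof as well (it is built into the statement of the lemma), so the two arguments match in substance.
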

\begin{proof}
\ When $R(F_q(\theta_q)) \geq 0$ or $R(F_p(\theta_p)) \leq 0$, \Cref{prop:bunch} applies. Consider now the case $R(F_q(\theta_q))<0<R(F_p(\theta_p))$. Choosing $\gamma^*$ such that $R(\gamma^*)=0$ ensures incentive compatibility is preserved.\\
As above, denote
\begin{equation}
    J(\theta; g^\circ) \equiv \int_{\theta_p}^{\theta} \phi(\hat{g}(x; \gamma^*)) - \Bar{v}'_i(x, g^\circ) \ dx
\end{equation}
where $\hat{g}$ solves (\ref{eq:objfct}). We have the following lemma.
\begin{lemma}
    $J(\theta;g^\circ)$ is quasiconvex in $\theta$.
\end{lemma}
\begin{proof}
  \  $J'(\theta; g^\circ)=0$ when $G(\theta_i; g^\circ)=0$, which occurs for a single $\theta_i$. By concavity of $\Bar{v}_i$, $J'(\theta; g^\circ)$ is increasing for $\theta' > \theta$.
\end{proof}
Then, $U(\theta)=J(\theta;g^\circ)+U(\theta_p)$ and, by quasiconvexity, $U(\theta) \leq \max \left\{U(\theta_p), U(\theta_q)\right\}$ for all $\theta \in [\theta_p, \theta_q]$. Hence, if the participation constraint is satisfied for $\theta_p$ and $\theta_q$, it is violated for all types in between.
\end{proof}
Finally, the choice of the interval $(\theta_p, \theta_q)$ is the largest interval comprising at most $n-q$ agents.
\begin{lemma}
The choice of the interval $(\theta_p, \theta_q)$ is such that:
\begin{equation*}
    \theta_1, \theta_2, \dots,  \underbrace{\theta_p, \theta_{p+1} \dots, \theta_{q-1}, \theta_q}_{|I| \leq n-q} , \dots, \theta_{n-1}, \theta_n \ .
\end{equation*}
\end{lemma}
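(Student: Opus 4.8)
The plan is to assemble the lemma from three ingredients already in hand for the concave case: a relaxation argument bounding $|I|$, the quasiconvexity established around \Cref{lemma:R2} that forces the excluded agents to be contiguous, and \Cref{lemma:propofG}, which pins down where the block must sit. The statement then reduces to choosing the position of this block optimally.

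First I would bound the cardinality of $I$. Dropping an agent's participation constraint from \ref{mproblem} while retaining incentive compatibility and the resource constraint is a relaxation of the program, so the optimal value $\sigma$ is weakly increasing in $|I|$; since the coalition must retain at least $q$ members, the agenda-setter excludes as many agents as possible. As shown in the proof of \Cref{prop:majoritd}, once bunching the excluded types at the binding allocation drives public-good provision up to the efficient level $g^{FB}$, further exclusion has no first-order benefit. Together these give $|I| \leq n-q$, with equality whenever the efficient level has not yet been reached.

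Next I would show that $I$ is an interval and locate it. By \Cref{lemma:R2} the shadow value $\gamma^*$ is set so that $R(\gamma^*) = U(\theta_q) - U(\theta_p) = 0$, and the associated $J(\theta; g^\circ)$ is quasiconvex; hence $U(\theta) \leq \max\{U(\theta_p), U(\theta_q)\}$ throughout $[\theta_p,\theta_q]$. Thus if participation binds at the endpoints $\theta_p$ and $\theta_q$, it is violated at every type strictly between them and satisfied outside, so the excluded agents are exactly the consecutive types $\theta_p,\theta_{p+1},\dots,\theta_q$. Moreover, since $J'(\theta;g^\circ)=\phi(\hat{g})-\Bar{v}_i'(\theta_i,g^\circ)=G(\theta_i;g^\circ)$ vanishes at a single type by \Cref{lemma:propofG}, the unique minimizer of $U$ must lie inside the excluded interval; otherwise $U$ would be monotone on $[\theta_p,\theta_q]$ and could not dip below the reservation level at interior types while remaining above it at the endpoints. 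This forces the block into the middle of the type space, leaving the extreme types in $Q$ and producing the non-convex coalition of \Cref{fig:rentsconcavemex}.

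The main obstacle is the optimization over the position of the interval among all blocks of $n-q$ consecutive types that contain the rent-minimizer. I would resolve it by counting degrees of freedom against boundary conditions: the location of the block (with $\theta_q$ fixed once $\theta_p$ and $|I|=n-q$ are chosen) and the scalar $\gamma^*$ are jointly pinned down by requiring both endpoints to bind, $U(\theta_p)=U(\theta_q)=0$. The equality $U(\theta_p)=U(\theta_q)$ is exactly $R(\gamma^*)=0$ from \Cref{lemma:R2}, and sliding the block until this common value reaches the participation level exhausts the last degree of freedom. Optimality then follows from an exchange argument: were an endpoint to remain slack, the agenda-setter could shift the excluded block toward that endpoint, driving the boundary type's rent down to zero and strictly raising $\sigma$, contradicting optimality. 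Hence the largest feasible block straddling the rent-minimizer is the unique optimal choice.
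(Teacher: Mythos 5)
Your proposal is correct, but it takes a noticeably different and more self-contained route than the paper. The paper's own proof is a two-sided contradiction purely about the \emph{size} of the interval: if $(\theta_p,\theta_q)$ contained more than $n-q$ agents, fewer than $q-1$ participation constraints would hold and the proposal could not pass; if it contained fewer, extending it to $n-q$ agents preserves incentive compatibility while reducing the rents paid out. Contiguity and location of the excluded block are not re-argued there --- they are inherited from \Cref{lemma:R2} and the quasiconvexity of $J(\theta;g^\circ)$, which you instead re-derive explicitly as your second step. Your third step (pinning the block by requiring $U(\theta_p)=U(\theta_q)=0$, with the sliding/exchange argument for a slack endpoint) replaces the paper's ``extend the interval'' move; it is a legitimate substitute and arguably sharper, since it also explains \emph{where} the block sits rather than only how large it is. One point in your favor: you handle the case $|I|<n-q$ more carefully than the paper's own proof does --- the paper's ``extending decreases rents, a contradiction'' sits uneasily with the $|I|\leq n-q$ in the statement, whereas your observation that further exclusion has no first-order benefit once bunching drives provision to $g^{FB}$ (consistent with the lemma on $|I|$ being inversely related to $q$ in the proof of \Cref{prop:majoritd}) reconciles the weak inequality. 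A minor bookkeeping slip: since the participation constraints \emph{bind} at $\theta_p$ and $\theta_q$, these endpoint types satisfy $U=0$ and belong to $Q$, so $I$ is the open interval of types strictly between them; your phrase ``the excluded agents are exactly the consecutive types $\theta_p,\dots,\theta_q$'' should be read accordingly (the paper's own underbrace is equally loose on this point).
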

\begin{proof}
 \ Suppose not. If $(\theta_p, \theta_q)$ comprises more than $n-q$ agents, then the participation constraint is satisfied for fewer than $q-1$ agents, and the proposal cannot be implemented, a contradiction. By a similar reasoning to the one in \Cref{lemma:cardinality}, the cardinality of $I(g^\circ)$ is bounded above by $n-q$. 
\end{proof}

\subsubsection{Convexity}

When the reservation utility profile is convex, it was shown in \Cref{sec:convexity} that the participation constraints binds on the highest and lowest types. Along a very similar reasoning as in \Cref{prop:bunch}, we have the following proposition.
\begin{proposition}
    There exist two types $\theta_p$ and $\theta_q$ such that
    \begin{equation*}
         I(g^\circ)=\left\{ i \in N \ | \ \theta_i \leq \theta_p \ \cup \ \theta_i \geq \theta_q \right\} \ .
    \end{equation*}
    Moreover, a subset of agents might be bunched at the same allocation of $\theta_p$ and $\theta_q$.
\end{proposition}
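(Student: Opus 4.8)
The plan is to mirror the concave majority argument of \Cref{prop:bunch} and \Cref{lemma:R2}, reversing the roles of the interior and the extremes to account for the convexity of $\Bar{v}_i$. First I would recall from \Cref{sec:convexity} that when $\Bar{v}_i'' \geq 0$ the envelope condition (\ref{eq:envelope}) gives $dU_i/d\theta_i = \phi(g(\theta)) - \Bar{v}_i'(\theta_i, g^\circ)$, and since $\Bar{v}_i'$ is increasing in $\theta_i$ this derivative is decreasing; hence $U_i(\theta_i)$ is concave — equivalently quasiconcave — in the type. As in the convex unanimity analysis, incentive compatibility forces $\gamma^*(\theta_i)$ to be constant across types, which fixes the allocation $\hat{g}(\theta, \gamma^*)$ solving (\ref{eq:objfct}) and thereby pins down the shape of $U_i$ up to an additive constant.

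Next I would exploit quasiconcavity. Define, as in \Cref{lemma:R2} but now anchored at the lower cutoff,
\begin{equation*}
    J(\theta; g^\circ) \equiv \int_{\theta_p}^{\theta} \phi(\hat{g}(x; \gamma^*)) - \Bar{v}_i'(x, g^\circ) \, dx,
\end{equation*}
and observe that $J'(\theta; g^\circ)=0$ exactly when $G(\theta_i; g^\circ)=0$, which by \Cref{lemma:propofG} occurs at a single type; by convexity of $\Bar{v}_i$, $J'$ is decreasing, so $J$ — and hence $U$ — is quasiconcave. Since $U(\theta)=J(\theta;g^\circ)+U(\theta_p)$, quasiconcavity implies that the superlevel set $\{\theta_i : U_i(\theta_i) \geq 0\}$ is an interval $[\theta_p, \theta_q]$, so the set of violated constraints is precisely the two tails:
\begin{equation*}
    I(g^\circ) = \left\{ i \in N \ | \ \theta_i \leq \theta_p \ \cup \ \theta_i \geq \theta_q \right\}.
\end{equation*}
This is the exact counterpart of the concave case, where quasiconvexity of $J$ instead delivered an interior excluded interval.

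Then I would pin down the cutoffs and verify implementability. Choosing $\gamma^*$ so that $U_i(\theta_p) = U_i(\theta_q) = 0$ — with the one-sided boundary cases, where a single tail is empty, handled exactly as in \Cref{lemma:R} — makes the participation constraints bind at $\theta_p$ and $\theta_q$ while preserving incentive compatibility. An optimality argument analogous to the final lemma of the concave case then fixes $|I(g^\circ)| = n-q$: if the two tails jointly contained fewer than $n-q$ agents the agenda-setter would be paying avoidable rents, while if they contained more than $n-q$ agents then fewer than $q$ participation constraints would hold and the proposal could not be implemented. Finally, the bunching claim follows from pointwise maximization and monotonicity of the allocation rule exactly as in the majority-linear case (\Cref{prop:majoritd}): a subset of the excluded low (high) types is bunched at the allocation of $\theta_p$ ($\theta_q$).

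The hard part will be the joint optimization over the shadow value $\gamma^*$ and the two cutoffs $\theta_p, \theta_q$, since $\gamma^*$ enters the allocation $\hat{g}$ and therefore reshapes $U$ itself; one must check that the quasiconcavity of $U$ is preserved along this optimization and that the degenerate boundary cases — where the violated set collapses to a single side — remain consistent with \Cref{lemma:R} and \Cref{lemma:partition2}.
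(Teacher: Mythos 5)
Your proof is correct and follows essentially the same route as the paper's: quasiconcavity of the rent profile (via the single interior zero of $J'$ from \Cref{lemma:propofG} and the constancy of $\gamma^*$ under convexity, as established in \Cref{sec:convexity}) makes the satisfied set an interval $[\theta_p,\theta_q]$, so $I(g^\circ)$ is the two tails, with the cutoffs pinned down by making both constraints bind, the cardinality fixed by the same rent-minimization count as in the concave case, and bunching handled as in \Cref{prop:majoritd}. The only cosmetic differences are that the paper induces the interior binding points by a uniform increase in transfers (keeping the unanimity $\gamma^*$) where you fold this into the choice of $\gamma^*$ \`a la \Cref{lemma:R2}, and the paper specifies which tail is bunched by comparing $g(\theta)$ to the first-best level (\ref{eq:fb}), a detail your appeal to pointwise maximization leaves implicit.
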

\begin{proof}
\ When $g^\circ$ is sufficiently low or high, \Cref{prop:bunch} applies. For intermediate values of $g^\circ$, in the unanimity case the participation constraints are binding for the lowest and highest types, as shown in \Cref{sec:convexity}. Incentive compatibility is preserved if transfers are increased uniformly so as to induce the participation constraints of types $\theta_p > \underline{\theta}$ and $\theta_q < \overline{\theta}$ bind. Moreover, whenever $g(\theta)$ is lower  than (\ref{eq:fb}), agents $\theta_i < \theta_p$ can be bunched at the allocation of $\theta_p$ so as to \textit{increase} public-good provision. Analogously, when $g(\theta)$ is higher than (\ref{eq:fb}), agents $\theta_i > \theta_q$ can be bunched at the allocation of $\theta_q$in order to \textit{decrease} public-good provision.
\end{proof}

\end{proof}

\subsection{Proof of \Cref{prop:stochasticg}} \label{sec:stochasticg}
As in \cite{jullien2000}, define $g(\theta)$ as the public-good level such that
\begin{equation*}
    \phi(g(\theta)) = \mathbb{E}_{\mu} \left[ \phi(g(\theta)) \ | \ \theta \right] \ ,
\end{equation*}
where $\mathbb{E}_{\mu} \left[ \cdot \ | \ \theta \right]$ is the conditional expectation operator for the distribution $\mu(\theta, dg)$.

\begin{proof}
    \ Recall condition (\ref{eq:envelope}) and consider two types, $\theta_1$ and $\theta_2$. We have that
    \begin{align*}
        \int_{\theta_1}^{\theta_2} \phi(g(\theta)) - \Bar{v}'_i(\theta_i, g^\circ) \ d\theta_i &\geq \int_{\theta_1}^{\theta_2} \int_g \phi(g(\theta)) \ \mu(\theta_1, dg)) \ d\theta_i - \int_{\theta_1}^{\theta_2} \Bar{v}'_i(\theta_i, g^\circ) \ d\theta_i \\
        &= \int_{\theta_1}^{\theta_2} \phi(g(\theta_1, \theta_{-i})) \ d\theta_i - \Bar{v}'_i(\theta_i, g^\circ) \ d\theta_i \ ,
    \end{align*}
    where the first inequality follows from Jensen's inequality and the last equality by the linearity in types of the utility function. Therefore, $g(\theta)$ is incentive compatible.\\
    Moreover, by concavity of the objective function, we have that
    \begin{equation*}
       \sigma(\theta, \gamma(\theta), g) \geq \sigma \left(\theta, \gamma(\theta), \mathbb{E}_{\mu} [g] \right) .
    \end{equation*}
\end{proof}

\pagebreak

% That's the necessary style for economics publications
\bibliographystyle{apalike}
% You can add several .bib-file with comma as separator
\bibliography{mdpolitician}

\end{document}